\title{DAG Scheduling in the BSP Model}
\author{P\'al Andr\'as Papp}{Computing Systems Lab, Huawei Zurich Research Center}{pal.andras.papp@huawei.com}{}{}
\author{Georg Anegg}{Computing Systems Lab, Huawei Zurich Research Center}{georg.anegg@huawei.com}{}{}
\author{Albert-Jan N. Yzelman}{Computing Systems Lab, Huawei Zurich Research Center}{albertjan.yzelman@huawei.com}{}{}
\authorrunning{P.\,A. Papp, G. Anegg and A.\,N. Yzelman}
\keywords{Bulk synchronous parallel, scheduling, NP-hard, Integer Linear Programming}
\begin{document}

\maketitle

\begin{abstract}
  We study the problem of scheduling an arbitrary computational DAG on a fixed number of processors while minimizing the makespan. While previous works have mostly studied this problem in fairly restricted models, we define and analyze DAG scheduling in the Bulk Synchronous Parallel (BSP) model, which is a well-established parallel computing model that captures the communication cost between processors much more accurately. We provide a taxonomy of simpler scheduling models that can be understood as variants or special cases of BSP, and discuss how the properties and optimum cost of these models relate to BSP. This essentially allows us to dissect the different building blocks of the BSP model, and gain insight into how these influence the scheduling problem.
  
  We then analyze the hardness of DAG scheduling in BSP in detail. We show that the problem is solvable in polynomial time for some very simple classes of DAGs, but it is already NP-hard for in-trees or DAGs of height $2$. We also prove that in general DAGs, the problem is APX-hard: it cannot be approximated to a $(1+\epsilon)$-factor in polynomial time for some specific $\epsilon>0$. We then separately study the subproblem of scheduling communication steps, and we show that the NP-hardness of this problem depends on the problem parameters and the communication rules within the BSP model. Finally, we present and analyze a natural formulation of our scheduling task as an Integer Linear Program.
\end{abstract}

\section{Introduction}

The optimal scheduling of complex workloads is a fundamental problem not only in computer science, but also in other areas like logistics or operations research. In a computational context, the most natural application of scheduling is when we have a complex computation consisting of many different subtasks, and we want to execute this on a parallel (multi-processor or multi-core) architecture, while minimizing the total time required for this. Unsurprisingly, this topic has been extensively studied since the 1960s, and has gained even more importance recently with the widespread use of manycore architectures.

In these scheduling problems, a computational task is represented as a Directed Acyclic Graph (DAG), where each node corresponds to an operation or subtask, and each directed edge $(u,v)$ indicates a dependency relation, i.e.\ that the processing of node $u$ has to be finished before the processing of node $v$ begins, since the output of operation $u$ is needed as an input for operation $v$. This DAG model of general computations is not only prominent in scheduling, but also in further topics such as pebbling.%, which studies time-memory tradeoffs in a given computation.

However, from a complexity-theoretic perspective, the scheduling of general DAGs is already a hard problem even in very simple settings, e.g.\ even in models that heavily simplify or completely ignore the communication costs between processors, which is the main bottleneck in many computational tasks in practice. Due to this, previous theoretical works have mostly focused on analyzing the complexity of scheduling in these rather simple models, and while more realistic models were sometimes introduced, the theoretical properties of scheduling in these more realistic models received little attention.

On the other hand, the parallel computing community has developed far more sophisticated models to accurately quantify the real cost of parallel algorithms in practice. One of the most notable among these is the Bulk Synchronous Parallel (BSP) model, which is still relatively simple, but provides a delicate cost function that also captures the volume of communicated data and the synchronization costs in a given parallel schedule. BSP (and other similar models) are fundamental tools for evaluating and comparing concrete practical implementations of parallel algorithms. However, previous theoretical works on BSP only focus on finding and analyzing parallel schedules for specific algorithms, and do not study BSP as a model for scheduling general DAGs, i.e.\ an arbitrary computational task.

Our goal in this paper is to bridge this gap between theory and practice to some extent, and understand the fundamental theoretical properties of DAG scheduling in the BSP model. This more detailed model results in a more complex scheduling problem with some entirely new aspects compared to classical models; as such, understanding its key properties is a crucial step towards designing efficient parallel schedules for computations in practice. Our hope is that our insights can inspire a new line of work, focusing on the theoretical study of scheduling in more advanced parallel computing models that are developed and applied on the practical side. More specifically, our main contributions are as follows:
\begin{enumerate}[label=(\roman*),leftmargin=15pt, itemsep=3pt, topsep=3pt]
\item We first define DAG scheduling in the BSP model. We then provide a taxonomy of scheduling models from previous works, and show that many of these can be understood as a special case of BSP. Analyzing the relations between these models essentially allows us to gain insight into how the different aspects of BSP affect the scheduling problem.
\item We then analyze the complexity of the BSP DAG scheduling problem, with the goal of understanding when (i.e.\ for which kind of DAGs) the problem becomes NP-hard. Our results show that BSP scheduling is still solvable in polynomial time for simple DAGs that consist of several connected chains, but it is already NP-hard for slightly more complex classes of DAGs, such as in-trees or DAGs of height $2$.
\item We show that for general DAGs, the problem is NP-hard to even approximate to a $(1+\epsilon)$ factor, for some constant $\epsilon>0$. In order words, BSP scheduling is APX-hard, and does not allow a polynomial-time approximation scheme unless P=NP.
\item We separately analyze the subproblem of scheduling communication steps, assuming that the assignment of tasks to processors and so-called supersteps is already fixed. We discuss the complexity of this subproblem for several different variants of BSP.
\item Finally, we present and analyze a natural formulation of the BSP scheduling task as an Integer Linear Programming (ILP) problem.
\end{enumerate}
Our main technical contributions correspond to points (ii)--(iv) above; however, the remaining points also provide valuable insight into different properties of the problem.

\section{Related work}

Scheduling is a fundamental problems in computer science, and has been studied extensively since the 1970s. DAGs are one of the most common models for such problems, since in many applications, the subtasks or computation steps have precedence relations between them.

The first results on DAG scheduling considered a simple setting where communication between the processors is free, which essentially corresponds to the PRAM model. The numerous results in this model include polynomial algorithms for $P _{\!} = _{\!} 2$ processors~\cite{DAG2proc1, DAG2proc2, DAG2proc3}, polynomial algorithms for special classes of DAGs (but $P _{\!} > _{\!} 2$)~\cite{oppforest, levelorder, boundedheight}, hardness results for $P_{\!} = _{\!} \infty$~\cite{PInf1, PInf2, PInf3}, and results for weighted DAGs~\cite{weighted1, weighted2, weighted3}. The results on some of these topics, e.g.\ approximation algorithms, are still rapidly improving in recent years~\cite{approx1, approx2, approx3, approx4}. On the other hand, some basic questions are still open even in this fundamental model: e.g.\ it is still not known whether scheduling for some fixed $P _{\!} > _{\!} 2$ is NP-hard.

A more realistic version of this model was introduced in the late 1980s~\cite{commDdef1, commDdef2}, where there is a fixed communication delay between processors. There are also numerous algorithms and hardness results for this setting, in particular for unit-length delays~\cite{commDunit1, commDunit2inf2} or infinitely many processors~\cite{commDinf1, commDunit2inf2, commDinf3}. The approximability of the optimal solution is also a central question in this model that receives significant attention even in recent years \cite{commDappR1, commDappR2, recent1, recent4, recent3}.

This communication delay model is still unrealistic in the sense that it allows an unlimited amount of data to be communicated in a single time unit. To our knowledge, the only more sophisticated DAG scheduling model which measures communication volume is the recently introduced single-port duplex model~\cite{SPD}. However, this model has not been studied before from a theoretical perspective, and in contrast to BSP, it cannot be extended by some real-world aspects such as synchronization costs.

%We discuss these scheduling models in detail in Section \ref{sec:tax} and Appendix \ref{app:def}.
There are also numerous extensions of these models with further aspects, e.g.\ heterogeneous processors, jobs requiring several processors, or deadlines for each task~\cite{unrelated, deadlines, shopsched, differentproc, recent2, recent5}.

On the other hand, BSP has been introduced as a prominent model of parallel computing in 1990~\cite{BSPintro}, and studied extensively ever since~\cite{BSPbook1, BSPbook2, BSPqa}. The model has also found its way into various applications, most notably through the BSPlib standard library~\cite{BSPlib} and its different implementations~\cite{BSPimpl1, BSPimpl2}. Fundamental results on BSP include the analysis of prominent algorithms in this model~\cite{BSPalg1, BSPalg2} or the extension of BSP to multi-level architectures~\cite{multiBSP1, multiBSP2}. However, the BSP model (and similar models with even more parameters, such as LogP~\cite{LogP}) have mostly been used so far to analyze the computational costs of specific parallel implementations of concrete algorithms. In contrast to this, in our work, we apply BSP as a general model to evaluate the scheduling of any computational DAG.

\section{Models and definitions} \label{sec:BSP}

Computational tasks are modelled as a \emph{Directed Acyclic Graph} $G$, with the set of nodes (subtasks) denoted by $V$, and the set of directed edges (dependencies) by $E$. We use $u$ and $v$ to denote individual nodes, and $n$ to denote the number of nodes $|V|$. An edge $(u,v)$ indicates that subtask $u$ has to be finished before the computation of subtask $v$ begins. %The \emph{indegree} and \emph{outdegree} of $v$ is the number of incoming and outgoing edges of $v$, respectively. 
We also use $[k]$ as a shorthand notation for the integer set $\{ 1, ..., k  \}$.

For scheduling, we assume that we have $P \in O(1)$ identical processors, and our goal is to execute all nodes of $G$ on these, while minimizing the total amount of time this takes.

\subparagraph{The BSP model.}

BSP is a very popular model for the design and evaluation of parallel algorithms. To our knowledge, general DAG scheduling has not been studied in this model before, but extending the interpretation of BSP to this setting is rather straightforward.

In contrast to classical models where nodes are assigned to concrete points in time, the BSP model instead divides the execution of nodes into larger batches, so-called \emph{supersteps}. Each superstep consists of two phases, in the following order:
\begin{enumerate}[label=\arabic*., leftmargin=12pt, itemsep=3pt, topsep=3pt]
 \item \emph{Computation phase}: each processor may execute an arbitrary number of computation steps, but no communication between the processors is allowed.
 \item \emph{Communication phase}: processors can communicate an arbitrary number of values to each other, but no computation is executed.
\end{enumerate}
The main motivation behind supersteps is to encourage executing the necessary communications in large batches, since in practice, inter-processor communication often has a large fixed cost (e.g.\ synchronization, network initialization) that is independent of the data volume.

\subparagraph{Formal definition.}

We denote the number of supersteps in our schedule by $S$. While $S$ is not a parameter of the problem and can be freely chosen in a schedule, we provide our definitions for a fixed $S$ for simplicity. A \emph{BSP schedule} with $S$ supersteps consists of:
%A \emph{BSP schedule} with a given number of supersteps $S$ consists of:
\begin{itemize}[leftmargin=15pt, itemsep=4pt, topsep=2pt]
 \item An assignment of nodes to processors $\pi: V \rightarrow [P]$ and to supersteps $\tau: V \rightarrow [S]$. For simplicity, we introduce the notation $H^{(s,p)}=\{v \in V \, | \, \pi(v)=p$, $\tau(v)=s \}$ for the set of nodes assigned to processor $p$ and superstep $s$. We can imagine the nodes of $H^{(s,p)}$ to be executed in an arbitrary (but topologically correct) order on $p$ in superstep $s$.
 \item A set $\Gamma$ of 4-tuples $ (v, p_1, p_2, s) \in V _{\!} \times _{\!} [P] _{\!} \times _{\!} [P] _{\!} \times _{\!} [S]$, indicating that the output of node $v$ is sent from processor $p_1$ to processor $p_2$ in the communication phase of superstep $s$. In this base variant of BSP, we only include $p_1$ in these 4-tuples for clarity, but we always assume $p_1=\pi(v)$, i.e.\ the value is sent from the processor where it was computed. 
\end{itemize}

\noindent A valid BSP schedule must satisfy the following conditions:
\begin{enumerate}[label=(\roman*),leftmargin=15pt, itemsep=4pt, topsep=2pt]
 \item A node $v$ can only be computed if all of its predecessors are available, i.e.\ they were computed on processor $\pi(v)$ in an earlier (or the same) superstep, or sent to $\pi(v)$ before the given superstep. That is, for all $(u,v) _{\!} \in _{\!} E$, if $\pi(u)=\pi(v)$ then we must have $\tau(u) \leq \tau(v)$, and if $\pi(u) \neq \pi(v)$ then we must have $(u, \pi(u), \pi(v), s) \in \Gamma$ for some $s _{\!} < _{\!} \tau(v)$.
 \item We only communicate values that are already computed: if $(v, p_1, p_2, s) _{\!} \in _{\!} \Gamma$, then $p_1=\pi(v)$, and $\tau(v) \leq s$.
\end{enumerate}

\subparagraph{Cost function.}

The computation phase can be executed in parallel on the different processors, so its cost (the amount of time it takes) in superstep $s _{\!} \in _{\!} [S]$ is the largest amount of computation executed on any of the processors. More formally, the \emph{work cost} of superstep $s$ (first for a given processor $p \in [P]$, and then in general) is defined as
\[ C_{work}\,\!^{(s,p)}=|H^{(s,p)}| \, \qquad \text{ and } \qquad C_{work}\,\!^{(s)} = \max_{p \in [P]} \, C_{work}\,\!^{(s,p)} \, . \]

Communication costs, on the other hand, are governed by two further problem parameters: $g _{\!} \in _{\!} \mathbb{N}$ is the cost of communicating a single unit of data, and $L _{\!} \in _{\!} \mathbb{N}$ is the fixed latency cost incurred by each superstep. BSP assumes that different values can be communicated in parallel in general, but any processor can only send and receive a single value in any time unit. As such, BSP considers the number of values sent and received by processor $p$ in superstep $s$, and then defines the \emph{communication cost} of a superstep (for $p$, or in general) as the maximum of these; this cost function is also known as a \emph{$h$-relation}. More formally, let
\[
C_{sent}\,\!^{(s,p)}=|\{ (v,p,p',s) \in \Gamma \}| \, \qquad \text{ and } \qquad C_{rec}\,\!^{(s,p)}=|\{ (v,p',p,s) \in \Gamma \}| \,
\]
for some fixed $s _{\!} \in _{\!} [S]$ and $p _{\!} \in _{\!} [P]$ (over all $v _{\!} \in _{\!} V$ and $p' _{\!} \in _{\!} [P]$), and then let
\[
C_{comm}\,\!^{(s,p)} = \max( C_{sent}\,\!^{(s,p)}\, , \, C_{rec}\,\!^{(s,p)}) \, \qquad \text{ and } \qquad C_{comm}\,\!^{(s)} = \max_{p \in [P]} \, C_{comm}\,\!^{(s,p)} \, .
\]
The cost $C\,\!^{(s)}$ of superstep $s$ and the cost $C$ of the entire schedule is then defined as:
\[
C\,\!^{(s)} = C_{work}\,\!^{(s)} \, + \, g \cdot C_{comm}\,\!^{(s)} + L \, \qquad \text{ and } \qquad C = \sum_{s \in [S]} \, C\,\!^{(s)} \, .
\]

For an example, consider the BSP schedule shown in Figure~\ref{fig:BSPexample}, and let $s_{\!}=_{\!}1$, $p_{1\!}=_{\!}1$, $p_{2\!}=_{\!}2$. Here processor $p_1$ computes $4$ nodes, and processor $p_2$ computes $5$ nodes, so $C_{work}\,\!^{(s,p_1)}_{\!}=_{\!}4$, $C_{work}\,\!^{(s,p_2)}_{\!}=_{\!}5$, and $C_{work}\,\!^{(s)}_{\!}=_{\!}\max(4,5)_{\!}=_{\!}5$ in the computation phase. In the communication phase, $p_1$ must send a single value to $p_2$ (so $C_{sent}\,\!^{(s,p_1)}_{\!}=_{\!}C_{rec}\,\!^{(s,p_2)}_{\!}=_{\!}1$), while $p_2$ must send two values to $p_1$ ($C_{sent}\,\!^{(s,p_2)}_{\!}=_{\!}C_{rec}\,\!^{(s,p_1)}_{\!}=_{\!}2$). This implies $C_{comm}\,\!^{(s,p_1)}_{\!}=_{\!}C_{comm}\,\!^{(s,p_2)}_{\!}=_{\!}\max(2, 1)_{\!}=_{\!}2$, and hence $C_{comm}\,\!^{(s)}_{\!}=_{\!}2$. The total cost of the superstep is $C\,\!^{(s)} = 5 + 2 _{\!}\cdot_{\!} g + L$. For more details on the BSP model, we refer the reader to~\cite{BSPbook1, BSPbook2}.

\subparagraph{As a scheduling problem.}

We can now formally define our DAG scheduling problem.

\begin{definition}
Given an input DAG, the goal of the \emph{BSP scheduling problem} is to find a feasible BSP schedule $(\pi, \tau, \Gamma)$ as described above, with minimal cost $C$. 
\end{definition}

In the decision version of the problem, we also have a maximal cost parameter $C_0$, and we need to decide if there is a BSP schedule with cost $C \leq C_0$. For simplicity, we will often focus on the case of $L _{\!} = _{\!} 0$, which is already similar to $L _{\!} \geq _{\!} 0$ in terms of hardness.

From a complexity perspective, it is important to note that we consider the parameters $P, g, L$ to be small fixed constants (properties of our computing architecture), and not parts of the problem input. We especially emphasize this for $P$, since in contrast to our work, some others assume that $P$ is an input variable that can be up to linear in $n$; however, this is unrealistic in most applications, and also makes the problem unreasonably hard even for trivial DAGs. In general, both settings (fixed $P$ and variable $P$) have been extensively studied before, and are distinguished by ``$Pm$'' and ``$P$'' in the classical 3-field notation~\cite{3field}.

\subparagraph{Further model extensions.}

While we mostly focus on the base model described above, we sometimes also point out how our claims carry over to two very natural extensions of the scheduling problem that occur frequently in the literature. A more formal definition of these extensions (as well as all other models in the paper) is provided in Appendix \ref{app:def}.
\begin{itemize}[leftmargin=15pt, itemsep=4pt, topsep=3pt]
 \item The model can be extended with \textit{node weights}, i.e.\ work weights $w_{work}: V \rightarrow \mathbb{Z}^+$ and/or a communication weights $w_{comm}: V \rightarrow \mathbb{Z}^+$, to capture that in practice, it often takes a different amount of time to compute different subtasks, or communicate their output.
 \item We can also allow \emph{duplications} (also called replications), i.e.\ to execute the same node multiple times, on different processors; this can decrease the amount of communication required between processors, hence sometimes reducing the total cost.
\end{itemize}

\section{Comparison to other models} \label{sec:tax}

\subsection{Taxonomy of scheduling models}

In the most basic scheduling model, which we call \emph{classical scheduling}, nodes are assigned to processors $\pi: V \rightarrow [P]$ and time steps $t: V \rightarrow \mathbb{Z}^+$, with two conditions: we cannot execute two nodes on the same processor at the same time ($\nexists \, u, v \in V$ with $\pi(u)=\pi(v)$, $t(u)=t(v)$), and we must respect the precedence constraints ($\forall (u,v) \in E$ we need $t(u)<t(v)$). The cost of a schedule is simply its makespan $\max_{v \in V\,} t(v)$. This setting essentially assumes no communication cost between the processors.

A more realistic version of this setting also considers a fixed communication delay between processors; we call this the \emph{commdelay} model. This model has a further parameter $g$, and only differs in the second validity condition: for all $(u,v) \in E$, in case if $\pi(u) \neq \pi(v)$, we now need to have $t(v)>t(u)+g$.

\begin{figure}
\begin{floatrow}
\ffigbox[0.42\textwidth]{
    \centering
    \resizebox{0.42\textwidth}{!}{
    \begin{tikzpicture}
	
    \begin{scope}[very thick, gray]
    \draw (-4pt,0pt) rectangle (114pt,40pt);
    \draw (-4pt,60pt) rectangle (114pt,100pt);
    \draw (230pt,60pt) -- (166pt,60pt) -- (166pt,100pt) -- (230pt,100pt);
    \draw (230pt,0pt) -- (166pt,0pt) -- (166pt,40pt) -- (230pt,40pt);
    \end{scope}

    \node[anchor=center, gray] at (55pt,-8pt) {\small \textbf{processor 2, superstep 1}}; 
    \node[anchor=center, gray] at (55pt,108pt) {\small \textbf{processor 1, superstep 1}};

    \node[anchor=center, gray] at (200pt,-8pt) {\small \textbf{proc. 2, sup. 2}}; 
    \node[anchor=center, gray] at (200pt,108pt) {\small \textbf{proc. 1, sup. 2}};

    \begin{scope}[thick, arrows=-stealth]
    \draw (10pt,30pt) -- (36pt,11pt);
    \draw (40pt,10pt) -- (66pt,29pt);
    \draw (70pt,30pt) -- (96pt,30pt);
    \draw (70pt,30pt) -- (96pt,11pt);

    \draw (10pt,70pt) -- (36pt,89pt);
    \draw (40pt,90pt) -- (67pt,73pt);
    \draw (10pt,70pt) -- (66pt,70pt);
    \draw (70pt,70pt) -- (96pt,89pt);

    \draw (100pt,90pt) -- (176pt,90pt);
    \draw (100pt,90pt) -- (177pt,33pt);
    \draw (100pt,30pt) -- (176.5pt,87pt);
    \draw (100pt,10pt) -- (177pt,68pt);
    \draw (100pt,10pt) -- (206pt,10pt);

    \draw (180pt,30pt) -- (206.5pt,12pt);
    \draw (180pt,70pt) -- (206.5pt,78pt);
    \draw (180pt,90pt) -- (206.5pt,82pt);
    \end{scope}
    \draw[thick] (210pt,10pt) -- (230pt,10pt);
    \draw[thick] (210pt,80pt) -- (230pt,80pt);

    \draw[black, fill=white] (10pt,30pt) circle (1.0ex);
    \draw[black, fill=white] (40pt,10pt) circle (1.0ex);
    \draw[black, fill=white] (70pt,30pt) circle (1.0ex);
    \draw[black, fill=white] (100pt,10pt) circle (1.0ex);
    \draw[black, fill=white] (100pt,30pt) circle (1.0ex);

    \draw[black, fill=white] (10pt,70pt) circle (1.0ex);
    \draw[black, fill=white] (40pt,90pt) circle (1.0ex);
    \draw[black, fill=white] (70pt,70pt) circle (1.0ex);
    \draw[black, fill=white] (100pt,90pt) circle (1.0ex);

    \draw[black, fill=white] (180pt,70pt) circle (1.0ex);
    \draw[black, fill=white] (180pt,90pt) circle (1.0ex);
    \draw[black, fill=white] (210pt,80pt) circle (1.0ex);
    \draw[black, fill=white] (180pt,30pt) circle (1.0ex);
    \draw[black, fill=white] (210pt,10pt) circle (1.0ex);

\end{tikzpicture}
    \hspace{-12pt}
    }
    }{
    \vspace{-7pt}
    \caption{Example BSP schedule for a DAG. The labelled boxes only represent the computation phase of a superstep; the superstep itself also consists of the communication phase that follows.}
    \label{fig:BSPexample}
}
\hspace{16pt}
\capbtabbox{
  \centering
  \resizebox{0.46\textwidth}{!}{
  \hspace{-11pt}
    \begin{tabular}{r || c | c | c |}
         & \makecell{free comm. \vspace{7pt} \\ \scriptsize (no cost) \vspace{13pt} \\} & \makecell{simplified \vspace{-1pt} \\ comm.\ cost \vspace{1pt} \\ \scriptsize (any amount \vspace{-3pt} \\ \scriptsize of data in a \vspace{-3pt} \\ \scriptsize single step)} & \makecell{exact \vspace{-1pt} \\ comm.\ cost \vspace{1pt} \\ \scriptsize (depends on \vspace{-3pt}\\ \scriptsize data volume) \vspace{9pt}} \\
        \hhline{=||=|=|=|}
        \Gape[3pt]{\makecell{comp.\ \& \vspace{-2pt} \\ comm. \vspace{-2pt} \\ simulta- \vspace{-2pt} \\neously }} & \makecell{ classical \vspace{-1pt} \\  scheduling \vspace{1pt} \\ \scriptsize \cite[\dots]{DAG2proc1, DAG2proc2} } & \makecell{commdelay \vspace{5pt} \\ \scriptsize \cite[\dots]{commDdef1, commDappR1} \vspace{-5pt} } & \makecell{single-port \vspace{-1pt} \\ duplex \vspace{1pt} \\ \scriptsize \cite{SPD} } \\
        \hline
         \Gape[3pt]{\makecell{comp.\ \& \vspace{-2pt} \\ comm.\ in \vspace{-2pt} \\ separate \vspace{-2pt} \\ phases}} & \makecell{ classical \vspace{-1pt} \\ scheduling \vspace{1pt} \\ \scriptsize \cite[\dots]{DAG2proc1, DAG2proc2}} & \makecell{commdelay \vspace{-2pt} \\ \scriptsize with phases \vspace{2pt} \\ \scriptsize \cite{CDphases} } & \makecell{\textbf{BSP} \vspace{2pt} \\ \scriptsize \textbf{[novel for} \vspace{-1pt} \\ \scriptsize \textbf{DAGs]} \vspace{-3pt}} \\
        \hline
        \noalign{\smallskip}
    \end{tabular}
    \hspace{-20pt}
    }
}{
  \vspace{2pt}
  \caption{Simplified taxonomy of DAG scheduling models. The horizontal axis shows different models of capturing communication cost, whereas the vertical axis shows whether simultaneous computation and communication steps are allowed.}
  \label{tab:tax1}
}
\end{floatrow}
\end{figure}

Note that BSP has two major differences from this commdelay model. Firstly, commdelay allows a processor $p$ to execute computations and communications simultaneously, while in BSP, these are explicitly separated into a computation/communication phase. Moreover, commdelay in fact allows any number of values to be sent from $p_1$ to $p_2$ simultaneously, whereas BSP also considers the communication volume: sending $k$ values from $p_1$ to $p_2$ takes $k$ times as long as sending a single value. Previous work has already briefly considered the extension of commdelay with both of these modifications separately:
\begin{itemize}[leftmargin=15pt, itemsep=4pt, topsep=2pt]
\item The work of~\cite{CDphases} considers a variant of commdelay where computation and communication can only happen in separate phases, and studies how this relates to the base model.
\item The work of~\cite{SPD} introduces a \textit{single-port duplex} (SPD) model, which extends commdelay with communication volume: besides the assignment $t_{\!}:_{\!}V_{\!} \rightarrow _{\!} \mathbb{Z}^+$, a schedule must also specifically assign the send/recieve steps to disjoint time intervals for each processor.
\end{itemize}
The models above are summarized in Table~\ref{tab:tax1}, which shows that DAG scheduling in BSP indeed fills a natural place in this taxonomy. We note that while the models in the last column seemingly use a different method to capture communication cost (time intervals in SPD, $h$-relations in BSP), one can show that these two methods are in fact essentially equivalent; we discuss this in detail in Appendix \ref{app:taxproofs}.

%Note that in the last column of Table~\ref{tab:tax1}, the difference between SPD and BSP is, in fact, twofold. Firstly, BSP assumes that at any point in time, a processor can only do either computation or communication, but not both. Secondly, BSP assumes that the whole execution is divided into supersteps, which can be understood as a \textit{barrier synchronization} requirement: in order to communicate a value $v$, we first need a point in time (after computing $v$) when no processor is computing, and hence they can initialize the process of communicating $v$ (and possibly other values). By separating these two properties, we can extend our taxonomy into Table \ref{tab:tax2} to discuss some further model variants related to BSP.

In the last column of Table~\ref{tab:tax1}, the difference between SPD and BSP is, in fact, twofold. Firstly, BSP assumes that at any time, a processor can only do either computation or communication, but not both. Secondly, BSP assumes that the execution is divided into supersteps, which can be understood as a \textit{barrier synchronization} requirement: to communicate a value $v$, we first need a point in time (after computing $v$) when no processor is computing, and hence they can initiate the process of sending $v$ (and other values). By separating these two properties, we can extend our taxonomy into Table \ref{tab:tax2} to include some further model variants.

\begin{table*}
        \centering
        \setlength\tabcolsep{7pt}
        \resizebox{0.92\textwidth}{!}{
        \begin{tabular}{r | c || c | c | c |}   
             \multicolumn{2}{c||}{} & \makecell{free communication \smallskip \\ \small (no cost) \vspace{9pt} \\} & \makecell{simplified comm.\ cost \smallskip \\ \small (any amount of data \vspace{-2pt} \\ \small in a single step)} & \makecell{exact comm.\ cost \smallskip \\ \small (cost depends on \vspace{-2pt}\\ \small data volume)} \\
            \hhline{==||=|=|=|}
           \multirow{2}{*}{\Gape[10pt]{\makecell{comp.\ \& comm. \\ simultaneously}}} & \makecell{no \vspace{-2pt}\\ sync} & \Gape[4pt]{\makecell{\large classical \\ \large scheduling}} & \makecell{\large commdelay } & \makecell{\large single-port duplex } \\
            \cline{2-3}
            \cline{4-5}
            & sync & \Gape[4pt]{\makecell{\large classical \\ \large scheduling}} & \makecell{\large commdelay \\ \small with sync points } & \makecell{\large maxBSP } \\
            \cline{1-3}
            \cline{4-5}
            \multirow{2}{*}{\Gape[10pt]{ \makecell{comp.\ \& comm. \\ separately}}} &  \makecell{no \vspace{-2pt}\\ sync} & \Gape[4pt]{\makecell{\large classical \\ \large scheduling}}
            %& \makecell{\large commdelay \\ \small with timeouts }
            & \makecell{\large $\alpha_{\!}-_{\!}\beta\:$ \small with $\beta_{\!}=_{\!}0$ \smallskip\\ (or subset-CD)}
            %& \makecell{\large subset-BSP \smallskip\\ (or $\alpha_{\!}-_{\!}\beta$ model)} \\
            & \makecell{\large $\alpha_{\!}-_{\!}\beta\:$ \small with $\alpha_{\!}=_{\!}0$ \smallskip\\ (or subset-BSP)} \\
            \cline{2-3}
            \cline{4-5}
            & sync & \Gape[4pt]{\makecell{\large classical \\ \large scheduling}} & \makecell{\large commdelay \\ \small with phases } & \large BSP \\
            \hline
            \noalign{\smallskip}
        \end{tabular}
        }
        \caption{Extended table of DAG scheduling models. The vertical axis is split according to two properties: (i) whether computation and communication are allowed simultaneously, and (ii) whether barrier synchronization is required for communication.}
        \label{tab:tax2}
    \end{table*}

\begin{itemize}[leftmargin=15pt, itemsep=1pt, topsep=4pt]
\item If global synchronization is required (so we have supersteps), but processors can compute and communicate simultaneously: in fact, the original definition of BSP~\cite{BSPintro} implicitly assumes this setting, defining $C\,\!^{(s)}$ as the maximum of $C_{work}\,\!^{(s)}$ and $g \cdot C_{comm}\,\!^{(s)} + L$. Let us call this model \emph{maxBSP}. In contrast, recent textbooks on BSP~\cite{BSPbook2} apply the definition in Section \ref{sec:BSP}, where the two terms are summed up. Note that defining a reasonable maxBSP model for DAGs scheduling requires further consideration, to ensure that the computation and communication phases of each superstep are indeed parallelizable.
\item If processors can only either compute or communicate at a given time, but no synchronization is needed: one can interpret this as the $\alpha_{\!}-_{\!}\beta$ model~\cite{alphabeta1} with a choice of $\alpha_{\!}=_{\!}0$, although the definition of this model varies. This allows e.g.\ $p_1$ and $p_2$ to stop computing and exchange values, while the rest of the processors keep computing in the meantime.
\end{itemize}

We can also apply the same separation idea to obtain $4$ slightly different variants of the commdelay model; in contrast to this, classical scheduling remains identical in all $4$ cases.

\subsection{Optimum costs and further properties}

%\subparagraph*{Optimum costs and further properties.}

%One of the most natural questions regarding this taxonomy is how the optimum costs in these models relate to each other for a given DAG. Let us denote this optimum cost by \textsc{OPT}. Firstly, note that in any of the models, one of the processors must have a work cost of $\frac{n}{P}$ at least; this provides a lower bound on the optimum. Moreover, executing the entire DAG on a single processor (without communication) always yields a valid solution of cost $n$, hence providing a factor $P$ approximation of the optimum.

One natural question in this taxonomy is how the optimum costs in these models (denoted by \textsc{OPT}) relate to each other for a given DAG. Firstly, note that in any of the models, one of the processors must have a work cost of $\frac{n}{P}$ at least; this provides a lower bound. Moreover, executing the entire DAG on a single processor (without communication) always yields a valid solution of cost $n$, hence always approximating the optimum to a factor $P$.

\begin{proposition} \label{prop:approx}
We have $\frac{n}{P} \leq \textsc{OPT} \leq n$ in any of these models.
\end{proposition}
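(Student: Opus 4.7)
\medskip

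\noindent\textbf{Proof proposal.} The plan is to establish the two bounds separately, with a uniform argument that applies across every model in the taxonomy.

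For the lower bound, I would argue by counting total computation. In any valid schedule (in any of the models), every node of $V$ must be computed at least once on some processor; this holds even in models with duplication, where the total amount of computation only grows. Hence the sum of per-processor work loads is at least $n$, and by an averaging/pigeonhole argument, at least one processor $p^\ast$ carries a work load of at least $\lceil n/P \rceil \geq n/P$. Since communication and latency costs are nonnegative and the cost of the schedule dominates the work load of every individual processor (whether that work is measured as a makespan in the classical/commdelay/SPD settings, or as $\sum_s C_{work}^{(s,p^\ast)}$ inside $C$ in BSP and its variants), the schedule cost is at least $n/P$.

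For the upper bound, I would exhibit an explicit feasible schedule of cost at most $n$. The schedule assigns every node to a single processor, say $p=1$, and orders them according to any topological ordering of $G$; in the BSP-type models we moreover place all nodes into a single superstep. Because all precedences are resolved within the same processor, no communication is ever required, so $\Gamma = \emptyset$ and every $C_{comm}^{(s)}$ term vanishes. The entire schedule cost then reduces to the work on $p=1$, which is exactly $n$. (In the BSP models with $L>0$, this single-superstep schedule would formally cost $n+L$; the proposition as stated corresponds to the $L=0$ convention adopted earlier, and the general case is an immediate modification.)

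There is no real obstacle here: the lower bound is pure pigeonhole on total work, and the upper bound is witnessed by the trivial sequential schedule. The only minor subtlety worth flagging is that the argument must be phrased to cover all rows and columns of Tables \ref{tab:tax1} and \ref{tab:tax2} uniformly, which it does because (i) duplication only increases total work, so the $n/P$ lower bound is preserved, and (ii) the sequential one-processor schedule is feasible in every model considered, since it uses no inter-processor communication at all.
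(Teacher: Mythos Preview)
Your argument is correct and matches the paper's own reasoning essentially verbatim: the paper justifies the proposition with the same two observations---some processor must carry at least $n/P$ work (pigeonhole), and scheduling everything on a single processor yields a feasible solution of cost $n$. Your added remarks about duplication and the $L>0$ caveat are accurate refinements that the paper leaves implicit.
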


Next we compare the optimum cost in the two fundamental models, classical scheduling and commdelay, to the optimum in BSP (denoted by $\textsc{OPT}_{class}$, $\textsc{OPT}_{CD}$ and $\textsc{OPT}_{BSP}$, respectively, assuming $L\!=\!0$ in BSP). These clearly satisfy $\textsc{OPT}_{class} \leq \textsc{OPT}_{CD} \leq \textsc{OPT}_{BSP}$. Finding the maximal difference is more involved; however, note that e.g.\ Proposition~\ref{prop:approx} already implies that the optimum costs in any two models differ by at most a factor $P$.

\begin{lemma} \label{lem:horizOpt}
We have $\textsc{OPT}_{BSP} \leq P \cdot _{\!} \textsc{OPT}_{class}$ for any DAG and parameters $P$ and $g$. Moreover, $\textsc{OPT}_{CD} \leq (1+g) \cdot \textsc{OPT}_{class}$.
These bounds are essentially tight: there are DAG constructions with $\frac{\textsc{OPT}_{CD}}{\textsc{OPT}_{class}}_{\!}=_{\!}P$, $\: \frac{\textsc{OPT}_{BSP}}{\textsc{OPT}_{CD}}_{\!}=_{\!}P$, and $\:\frac{\textsc{OPT}_{CD}}{\textsc{OPT}_{class}}_{\!}=_{\!}(1_{\!}+_{\!}g_{\!}-_{\!}\varepsilon)$ for any $\varepsilon _{\!}>_{\!}0$.
\end{lemma}

Due to our focus on BSP, we also analyze the relation between the models in the last column of Table \ref{tab:tax2} in detail. Let us denote their optimum costs by $\textsc{OPT}_{SPD}$, $\textsc{OPT}_{mBSP}$, $\textsc{OPT}_{\beta}$ and $\textsc{OPT}_{BSP}$ from top to bottom, again for $L\!=\!0$. The restrictiveness of the models implies $\textsc{OPT}_{SPD\!} \leq _{\!} \textsc{OPT}_{mBSP\!} \leq _{\!} \textsc{OPT}_{BSP}$ and $\textsc{OPT}_{SPD\!} \leq _{\!} \textsc{OPT}_{\beta\!} \leq _{\!} \textsc{OPT}_{BSP}$. We complement this by the following observations.

\begin{theorem} \label{th:optcosts}
For any DAG and parameters $P$, $g$, the optimum cost between any two models in the last column of Table \ref{tab:tax2} differs by a factor $2$ at most; or equivalently, we have $\textsc{OPT}_{BSP} \, \leq \, 2 \cdot \textsc{OPT}_{SPD}$. 
Moreover, we show DAG constructions that prove (for any $\varepsilon _{\!} > _{\!} 0$)
\begin{itemize}[leftmargin=12pt, itemsep=5pt, topsep=4pt]
 \item a matching lower bound of $(2_{\!}-_{\!}\varepsilon)$ for the ratios $\frac{\textsc{OPT}_{\beta}}{\textsc{OPT}_{SPD}}$, $\, \frac{\textsc{OPT}_{BSP}}{\textsc{OPT}_{SPD}}$, $\, \frac{\textsc{OPT}_{\beta}}{\textsc{OPT}_{mBSP}}$ and $\frac{\textsc{OPT}_{BSP}}{\textsc{OPT}_{mBSP}}$,
 \item a slightly looser lower bound of $(\frac{3}{2}-\varepsilon)$ for the ratios $\frac{\textsc{OPT}_{mBSP}}{\textsc{OPT}_{SPD}}$, $\: \frac{\textsc{OPT}_{BSP}}{\textsc{OPT}_{\beta}}$ and $\frac{\textsc{OPT}_{mBSP}}{\textsc{OPT}_{\beta}}$.
\end{itemize}
\end{theorem}

Finally, we discuss a few further properties of the models; while these can be proven with rather simple constructions, they still provide some valuable insight. Firstly, we note that while all cells in the first column of Table \ref{tab:tax2} seem identical, in case of weighted DAGs, synchronization can in fact make a significant difference even in this classical model. Secondly, we briefly analyze how duplication affects the different models in the taxonomy.

\begin{proposition} \label{lem:PRAMweights}
In classical scheduling with work weights, barrier synchronization can increase the optimum cost (by a factor $(2_{\!}-_{\!}\varepsilon)$ for any $\varepsilon _{\!}>_{\!}0$).
\end{proposition}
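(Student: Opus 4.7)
My plan is to exhibit a small weighted DAG on which the optimum makespan is strictly smaller without barrier synchronization than with it. Take $P = 2$ processors and the four-node DAG on vertices $\{a, b, c, d\}$ whose only edge is $(a, c)$, with work weights $w(a) = 1$, $w(b) = 2$, $w(c) = 1$, $w(d) = 2$. The total work is $6$, so any schedule on two processors has makespan at least $3$ by the weighted analogue of Proposition~\ref{prop:approx}.

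I would first exhibit a non-synchronized schedule of makespan exactly $3$: assign $a$ and $d$ to processor $1$ with start times $0$ and $1$, and $b$ and $c$ to processor $2$ with start times $0$ and $2$. The only precedence constraint $(a, c)$ is respected since $c$ starts at time $2$ while $a$ finishes at time $1$; the intervals on each processor are disjoint; and the whole schedule completes at time $3$, matching the lower bound.

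For the synchronized case I would argue $\textsc{OPT}_{\mathrm{sync}} \geq 4$ by a short case analysis on the phase structure. If the schedule uses a single phase, then the edge $(a,c)$ forces $a$ and $c$ onto the same processor (cross-processor dependencies within a phase are forbidden in the synchronized model), so that processor has load at least $w(a) + w(c) = 2$, while the remaining nodes $b$ and $d$ of combined weight $4$ must be split between the two processors, making the maximum processor load at least $4$. If the schedule uses two or more phases, then whichever phase contains $b$ has duration at least $w(b) = 2$; a short enumeration of the handful of ways to partition the remaining nodes $\{a,c,d\}$ into at most two further phases (either $a$ precedes $c$ in a strictly earlier phase, or $a$ sits before $c$ on the same processor in the same phase) shows that the other phase durations contribute at least $2$ in total in every case, since $d$ alone has weight $2$ and the pair $a,c$ on one processor also has weight $2$. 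Combining the two possibilities yields $\textsc{OPT}_{\mathrm{sync}} \geq 4 > 3 = \textsc{OPT}_{\mathrm{no\text{-}sync}}$, which proves the proposition.

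The main obstacle is the case enumeration in the synchronized setting, but since the DAG has only four nodes the list of relevant schedules is short and every case can be verified in one or two lines. The intuitive reason the gap appears is that the no-sync schedule overlaps the execution of $b$ on processor $2$ with the execution of $a$ followed by $d$ on processor $1$, while picking up the dependent node $c$ on processor $2$ as soon as $a$ finishes on processor $1$ — a staggered, cross-processor continuation pattern that barrier synchronization explicitly disallows.
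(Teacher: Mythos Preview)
Your construction is correct and establishes the proposition, but via a different route from the paper. The paper uses a connected $7$-node DAG on $P=3$ processors (Figure~\ref{fig:classWW}): a single weight-$3$ node on one branch blocks any synchronization point during its execution, so the other branch cannot be parallelised and the optimum rises from $5$ to $6$. Your instance is smaller ($4$ nodes, $P=2$, mostly disconnected) and rests on a different mechanism: any makespan-$3$ schedule must keep both processors fully busy on $(0,3]$, so no global idle instant exists to serve as the sync point that the cross-processor edge $(a,c)$ requires. The paper's example has the advantage that its idea generalises to an asymptotic factor-$2$ gap (as worked out in the appendix); yours gives only $4/3$, but for the bare statement of the proposition it is more economical.

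One gap to close: in your multi-phase analysis you write ``partition the remaining nodes $\{a,c,d\}$ into at most two further phases,'' implicitly assuming none of $a,c,d$ shares a phase with $b$. That case needs a line too. If $d$ joins $b$'s phase on the other processor (duration $2$), the phases holding $a$ and $c$ still contribute at least $2$ --- but you should say why; and if $a$ or $c$ joins $b$'s phase, that phase already has some processor with load at least $3$, plus at least one more phase for the leftover node. Similarly, your single-phase sentence ``must be split between the two processors, making the maximum processor load at least $4$'' is loose: the clean observation is simply that the only weight-$3$/weight-$3$ bipartitions of $\{a,b,c,d\}$ are $\{a,b\}/\{c,d\}$ and $\{a,d\}/\{b,c\}$, both of which separate $a$ from $c$, so forcing $a$ and $c$ onto one processor guarantees an imbalance of at least $4{:}2$.
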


\begin{proposition} \label{lem:recomp}
Duplication can reduce the optimum cost in the following models:
\begin{itemize}[leftmargin=13pt, itemsep=2pt, topsep=2pt]
\item all models with communication cost (middle and right-hand column of Table \ref{tab:tax2}),
\item classical scheduling models with barrier synchronization, but only if we have work weights.
\end{itemize}
\end{proposition}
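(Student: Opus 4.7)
The plan is to treat the two bullets of the proposition as two separate positive constructions, and then add a short negative observation that justifies the ``only if'' qualifier in the second bullet.

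For the first bullet I would use one single ``fan-out'' DAG to handle all six cells in the middle and right columns of Table~\ref{tab:tax2} at once: a source $s$ together with $P$ pairwise independent sink children $v_1, \dots, v_P$. Without duplication, $s$ resides on a unique processor, and in order to execute the $v_i$ in parallel on distinct processors the value of $s$ must be transmitted $P-1$ times, producing a strictly positive communication term (and in every phase-based model an extra synchronization barrier). With duplication, each processor computes $s$ locally and then its own $v_i$, eliminating all inter-processor traffic. A per-cell comparison then gives a strict improvement, e.g.\ from $2 + g(P-1) + 2L$ down to $2 + L$ in BSP and from $P+1$ down to $2$ in SPD; the other four cells are analogous.

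For the second bullet I would use the weighted DAG of Figure~\ref{fig:recomp} with $P = 3$, naming its nodes $s$ (source), $a,b$ (the weight-$2$ and weight-$1$ nodes on the lower branch), $g$ (gray intermediate), $c,d$ (its weight-$1$ and weight-$2$ children) and $t$ (sink). With duplication there is an explicit schedule of cost $5$: superstep~$1$ computes $s$ on all three processors (load $1$); superstep~$2$ runs $\{a,b\}$ on $p_1$, $\{g,c\}$ on $p_2$ and $\{g,d\}$ on $p_3$ (max processor load $3$); superstep~$3$ computes $t$ on any processor (load $1$). For the matching lower bound, I would first observe that the chain $s \to a \to b \to t$ has total weight $5$, so any schedule costs at least $5$, and then rule out cost exactly $5$ without duplication by case-splitting on how the unique copy of $g$ and the nodes $c,d$ are distributed across supersteps and processors: either $c$ and $d$ share the superstep and processor of $g$, forcing load $\geq 1+1+2 = 4$ there, or at least one of $c,d$ lies in a strictly later superstep, which (combined with $a$ and $d$ each having weight $2$ and not being coschedulable with their dependents on another processor without another barrier) forces the four-phase pattern $s \mid \{a,g\} \mid \{b,c,d\} \mid t$ of total cost $1+2+2+1 = 6$.

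Finally, for the ``only if'' qualifier, I would note that in classical scheduling without barrier synchronization every computed value is immediately available on every processor, so from any duplicated schedule one may keep only the earliest copy of each node and discard the rest without affecting validity or makespan; and in classical scheduling with barrier synchronization but unit weights every integer time step is automatically a barrier (all processors finish simultaneously), reducing to the previous case. The step I expect to be the main obstacle is the $\geq 6$ lower bound in the second bullet; the remaining pieces (schedule correctness for the constructed upper bounds and the per-cell cost arithmetic for the fan-out DAG) are entirely mechanical.
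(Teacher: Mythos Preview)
Your first bullet is correct in spirit but uses a different construction than the paper. The paper takes a source $v_0$ forking into two long directed paths of length $\ell$; with $\ell$ large, the non-duplication optimum is clearly $\ell+1+g$ in every model, while duplicating $v_0$ gives $\ell+1$. Your fan-out DAG also works, but your stated non-duplication optima are not quite right: in SPD with $g=1$ and $P=5$, for instance, keeping three of the $v_i$ on $p_1$ and sending $s$ to only two other processors gives makespan $4$, not $P+1=6$. The qualitative conclusion (duplication strictly helps) survives, since any model with $g\geq 1$ and $P\geq 2$ forces non-duplication makespan at least $3>2$, but the ``per-cell cost arithmetic'' you call mechanical is in fact where the fan-out construction becomes fiddly. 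The paper's long-path variant sidesteps this entirely.

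The real gap is in your lower bound for the second bullet. Your Case~2 asserts that ``at least one of $c,d$ in a later superstep \ldots\ forces the four-phase pattern $s\mid\{a,g\}\mid\{b,c,d\}\mid t$''. This is false: many other phase patterns are possible, e.g.\ $\{s,g\}\mid\{a,c,d\}\mid\{b\}\mid\{t\}$ with loads $2+2+1+1=6$, or the three-phase $\{s\}\mid\{a,b;\,g,c,d\}\mid\{t\}$ with loads $1+4+1=6$. Each of these happens to give $6$, but your argument does not exclude a pattern giving $5$. The paper's argument is different and is the missing idea: if the makespan is $5$, both length-$5$ critical paths $s\to a\to b\to t$ and $s\to g\to d\to t$ are tight, so $a$ occupies the time interval $(1,3]$ and $d$ occupies $(2,4]$; hence no global idle point exists in $(1,4]$. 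Since $g$ finishes at time $2$, its value cannot reach a third processor before time $4$, so $c$ cannot finish before time $5$ on any processor other than $\pi(g)$, and on $\pi(g)$ itself $c$ is blocked until time $5$ by $d$. Either way $t$ is pushed to time $6$. This time-interval reasoning is what replaces your incomplete superstep case split.

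Your ``only if'' paragraph matches the paper's reasoning and is fine.
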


\section{Communication models within BSP} \label{sec:commodels}

For the rest of the paper, we focus on the BSP model. However, even within BSP, there are some different options to model the communication rules, and these seemingly small changes in the model definition can have a significant effect on the properties of the model.

We discuss two modelling choices that can be combined to form $4$ different \textit{communication models within BSP}. We name these models in Table~\ref{tab:commodels}. Our results in Sections \ref{sec:CS} and \ref{sec:ILP} will show that these submodels can indeed influence some properties of the problem. We note that these modelling choices only arise in models like BSP, which capture communication volume; in e.g.\ classical scheduling or commdelay, these variants are all equivalent.

\subparagraph*{Free movement of data.} \label{sec:commod_free}

For simplicity, our base BSP model assumed $\pi(v)=p_1$ for all $(v, p_1, p_2, s) \in \Gamma$, i.e.\ values are always sent from the processor where they were computed. In practice, there is no reason to make this restriction; to transfer a value from $p_1$ to $p_2$, one might as well send it from $p_1$ to a third processor $p_3$ first, and then from $p_3$ to $p_2$.

Moreover, there are simple examples where such unrestricted movement of data between processors can indeed result in a lower communication cost altogether. Consider the BSP schedule in Figure~\ref{fig:freedata} with $P\!=\!3$ processors. This schedule has a node that is computed on $p_3$ in superstep $1$, but later only needed on $p_1$ in superstep $3$. In case of direct transfer, we can send this value from $p_3$ to $p_1$ in either superstep $1$ or $2$; however, $p_1$ must already receive a value in superstep $1$, and $p_3$ must already send a value in superstep $2$, so both of these choices increase the communication cost in one of the supersteps. On the other hand, with free data movement, we can send the value from $p_3$ to $p_2$ in superstep $1$, and then from $p_2$ to $p_1$ in superstep $2$, without increasing the communication cost in either superstep.

\subparagraph*{Singlecast or broadcast.} \label{sec:commod_bc}

Another interesting question is what happens if processor $p$ wants to send a single value $v$ to multiple other processors $p_1, ..., p_k$ in the same superstep. In our base model, this requires a separate entry $(v, p, p_i, s)$ for all $i _{\!} \in _{\!} [k]$, and hence contributes $k$ units to the send cost $C_{send}\,\!^{(s, p)}$. This is a reasonable assumption e.g.\ if the communication topology is a fully connected graph; in this case, $p$ needs to send this value over $k$ distinct network links. However, in other cases, it is more reasonable to only charge a single unit of send cost for this, i.e.\ to assume that data transfers are \emph{broadcast operations}, and hence the values can be received by any number of processors. This can correspond to e.g.\ a star-shaped topology with a single communication device in the middle.

\renewcommand{\arraystretch}{1.5}
\begin{figure}
\begin{floatrow}
\ffigbox[0.43\textwidth]{
    \centering
    \resizebox{0.44\textwidth}{!}{
    \hspace{-4pt}
    \begin{tikzpicture}
	
    \begin{scope}[very thick, gray]
    \draw (-10pt,0pt) rectangle (40pt,20pt);
    \draw (-10pt,40pt) rectangle (40pt,60pt);
    \draw (-10pt,80pt) rectangle (40pt,100pt);
    \draw (70pt,0pt) rectangle (120pt,20pt);
    \draw (70pt,40pt) rectangle (120pt,60pt);
    \draw (70pt,80pt) rectangle (120pt,100pt);
    \draw (150pt,0pt) rectangle (200pt,20pt);
    \draw (150pt,40pt) rectangle (200pt,60pt);
    \draw (150pt,80pt) rectangle (200pt,100pt);
    \end{scope}
    
    \node[anchor=center, gray] at (15pt,-6pt) {\small \textbf{$p_3$, $\,s\!=\!1$}}; 
    \node[anchor=center, gray] at (15pt,34pt) {\small \textbf{$p_2$, $\,s\!=\!1$}};
    \node[anchor=center, gray] at (15pt,74pt) {\small \textbf{$p_1$, $\,s\!=\!1$}};

    \node[anchor=center, gray] at (95pt,-6pt) {\small \textbf{$p_3$, $\,s\!=\!2$}}; 
    \node[anchor=center, gray] at (95pt,34pt) {\small \textbf{$p_2$, $\,s\!=\!2$}};
    \node[anchor=center, gray] at (95pt,74pt) {\small \textbf{$p_1$, $\,s\!=\!2$}};

    \node[anchor=center, gray] at (175pt,-6pt) {\small \textbf{$p_3$, $\,s\!=\!3$}}; 
    \node[anchor=center, gray] at (175pt,34pt) {\small \textbf{$p_2$, $\,s\!=\!3$}};
    \node[anchor=center, gray] at (175pt,74pt) {\small \textbf{$p_1$, $\,s\!=\!3$}};

    \begin{scope}[arrows=-stealth]
    \draw[thin] (2pt,10pt) -- (12pt,10pt);
    \draw[thin] (15pt,10pt) -- (28pt,10pt);

    \draw[thin] (82pt,10pt) -- (92pt,10pt);
    \draw[thin] (95pt,10pt) -- (108pt,10pt);

    \draw[thin] (2pt,50pt) -- (12pt,50pt);
    \draw[thin] (15pt,50pt) -- (28pt,50pt);

    \draw[thin] (162pt,50pt) -- (172pt,50pt);
    \draw[thin] (175pt,50pt) -- (188pt,50pt);

    \draw[thin] (82pt,90pt) -- (92pt,90pt);
    \draw[thin] (95pt,90pt) -- (108pt,90pt);

    \draw[thin] (162pt,90pt) -- (172pt,90pt);
    \draw[thin] (175pt,90pt) -- (188pt,90pt);
    
    \draw[thick] (15pt,50pt) -- (94pt,86pt);
    \draw[thick] (95pt,10pt) -- (174pt,46pt);
    \draw[thick] (15pt,10pt) -- (174pt,86pt);
    
    \end{scope}

    \node[anchor=center] at (-1pt,6pt) {\small \textbf{...}};
    \draw[black, fill=white] (15pt,10pt) circle (0.85ex);
    \node[anchor=center] at (31pt,6pt) {\small \textbf{...}};

    \node[anchor=center] at (79pt,6pt) {\small \textbf{...}};
    \draw[black, fill=white] (95pt,10pt) circle (0.85ex);
    \node[anchor=center] at (111pt,6pt) {\small \textbf{...}};

    \node[anchor=center] at (-1pt,46pt) {\small \textbf{...}};
    \draw[black, fill=white] (15pt,50pt) circle (0.85ex);
    \node[anchor=center] at (31pt,46pt) {\small \textbf{...}};

    \node[anchor=center] at (159pt,46pt) {\small \textbf{...}};
    \draw[black, fill=white] (175pt,50pt) circle (0.85ex);
    \node[anchor=center] at (191pt,46pt) {\small \textbf{...}};

    \node[anchor=center] at (79pt,86pt) {\small \textbf{...}};
    \draw[black, fill=white] (95pt,90pt) circle (0.85ex);
    \node[anchor=center] at (111pt,86pt) {\small \textbf{...}};

    \node[anchor=center] at (159pt,86pt) {\small \textbf{...}};
    \draw[black, fill=white] (175pt,90pt) circle (0.85ex);
    \node[anchor=center] at (191pt,86pt) {\small \textbf{...}};

\end{tikzpicture}
    \hspace{-5pt}
    }
    }{
    \vspace{-16pt}
    \caption{Example BSP schedule where free data movement allows a lower communication cost than direct data transfer.}
    \label{fig:freedata}
}
\hspace{20pt}
\capbtabbox[0.48\textwidth]{
  \centering
  \resizebox{0.48\textwidth}{!}{
  \hspace{-6pt}
    \begin{tabular}{r|c|c|}
      & Singlecast & Broadcast \\
      \hline
      \makecell{Direct \\ transfer}& \Gape[5pt]{\makecell{\medskip \large \textbf{DS model} \normalsize \\ \smallskip CS: \textit{open problem}\\ ILP: $O(n _{\!} \cdot _{\!} P _{\!} \cdot _{\!} S)$ \textit{vars} }} & \Gape[5pt]{\makecell{\medskip \large \textbf{DB model} \normalsize \\ \smallskip CS: \textit{NP-hard} \\ ILP: $O(n _{\!} \cdot _{\!} P _{\!} \cdot _{\!} S)$ \textit{vars} }} \\
      \hline
      \makecell{Free data \\ movement} & \Gape[5pt]{\makecell{\medskip \large \textbf{FS model} \normalsize \\ \smallskip CS: \textit{NP-hard} \\ ILP: $O(n _{\!} \cdot _{\!} P^2 _{\!} \cdot _{\!} S)$ \textit{vars} }} & \Gape[5pt]{\makecell{\medskip \large \textbf{FB model} \normalsize \\ \smallskip CS: \textit{NP-hard} \\ ILP: $O(n _{\!} \cdot _{\!} P _{\!} \cdot _{\!} S)$ \textit{vars} }} \\
     \hline
    \end{tabular}
    \hspace{-8pt}
    }
}{
  \vspace{-1pt}
  \caption{Different communication models within BSP, and their properties established in Sections \ref{sec:CS} and \ref{sec:ILP}. Our main complexity results (Theorems~\ref{th:chains}--\ref{th:apx}) hold in any of these models.}
    \label{tab:commodels}
}
\end{floatrow}
\end{figure}

\section{NP-hardness} \label{sec:hardness}

One fundamental question regarding our scheduling problem is its complexity. Unsurprisingly, the problem is NP-hard in general DAGs. However, this raises a natural follow-up question, which has also been studied in simpler models: in which subclasses of DAGs is the problem still solvable in polynomial time, and when does it become NP-hard?

The simplest non-trivial subclass of DAGs is \emph{chain DAGs}, where both the indegree and outdegree of nodes is at most $1$. This subclass has been analyzed in different scheduling models before~\cite{chains, shopsched}, and has even been studied in BSP~\cite{chainsHard}, under slightly different assumptions (see Appendix \ref{app:hardness}). We also consider a slightly more realistic version of this subclass: we say that a DAG is a \emph{connected chain DAG} if it can be obtained by adding an extra source node $v_0$ to a chain DAG, and drawing an edge from $v_0$ to the first node in every chain. We show that for these relatively simple classes of DAGs, the optimal BSP schedule can still be found in polynomial time. The key observation in the proof is that the optimal BSP schedule in chain DAGs always consists of at most $P$ supersteps; this allows us to find the optimum through a rather complex dynamic programming approach.

\begin{theorem} \label{th:chains}
The BSP scheduling problem can be solved in polynomial time in $n$ for chain DAGs and connected chain DAGs.
\end{theorem}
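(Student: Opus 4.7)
The plan is to develop a polynomial-time algorithm for chain DAGs and then extend it to connected chain DAGs. For chain DAGs, the input consists of $k$ independent chains of lengths $\ell_1, \ldots, \ell_k$ with $\sum_i \ell_i = n$. The key structural features are that chains are mutually independent (interacting only through shared processor and superstep capacity), nodes within a chain must be computed sequentially, and splitting a chain across processors requires at least one communication and one superstep boundary per split. Notably, splitting is not always suboptimal: for instance, with $3$ chains of length $100$ on $P_{\!}=_{\!}2$ and small $g, L$, splitting each chain in half across the two processors yields a $3$-superstep schedule with work cost $150$, beating the unsplit optimum of $200$. Hence the algorithm cannot simply reduce to an unsplit $P\|C_{\max}$ instance.

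My approach is dynamic programming over a polynomially bounded space of schedule structures. First I would establish structural properties of optimal schedules: (i) WLOG each chain is split into at most $P$ contiguous parts placed on distinct processors, since revisiting a processor only adds latency and communication without benefit; (ii) the number of supersteps $S$ in an optimal schedule is polynomially bounded in $n$, since the trivial single-processor schedule has cost at most $n$, and any superstep without useful communication can be merged with its neighbour. The DP then processes the chains one by one; its state encodes, after placing the first $i$ chains, the cumulative work and communication load at each (processor, superstep) slot. For each next chain $c_{i+1}$, the algorithm enumerates all valid placements (unsplit, or split into $j \le P$ parts with chosen split points, processors, and supersteps), computes the marginal contribution to $\sum_s C_{work}\,\!^{(s)}$, $g \sum_s C_{comm}\,\!^{(s)}$, and $S \cdot L$, and keeps the minimum. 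Since $P$ is a fixed constant and $S$ is polynomially bounded, the DP has polynomially many states and transitions, yielding a polynomial-time algorithm.

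For connected chain DAGs, the source $v_0$ must be computed first; WLOG in superstep $1$ on some processor $p_0$ which we enumerate over $[P]$ (a constant-factor overhead). We then charge the cost of communicating $v_0$ to every other processor that starts a chain (either $P-1$ singlecasts or one broadcast, depending on the communication variant of BSP), and reduce the remainder to a chain DAG scheduling instance starting from superstep $2$, which is solved by the algorithm above. The main obstacle I foresee is carefully bounding $S$ and justifying that only ``canonical'' schedules with the structural properties above need to be enumerated: since the per-superstep cost function involves maxima over processors rather than sums, a subtle exchange argument is needed to rule out pathological schedules while keeping the DP state space polynomial in $n$ for fixed $P$.
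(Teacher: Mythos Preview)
Your overall strategy (enumerate schedule structure, then DP over load profiles) is the same as the paper's, but there is a genuine gap: you only argue that the number of supersteps $S$ is \emph{polynomially} bounded in $n$, and then claim the DP has polynomially many states. This does not follow. Your DP state is the load profile across all $(p,s)$ slots; with $P\cdot S$ slots and each load ranging over $\{0,\dots,n\}$, the number of states is $n^{\Theta(P\cdot S)}$. If $S$ is merely $\mathrm{poly}(n)$, this is $n^{\mathrm{poly}(n)}$, which is superpolynomial. The argument only goes through if $S$ is bounded by a \emph{constant} (depending on $P$ alone). The paper's key lemma, which you are missing, is exactly this: for chain DAGs the optimal total communication cost is at most $P-1$, hence $S\le P$. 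The proof is a short constructive argument (greedily peel off longest chains onto dedicated processors; once all remaining chains are short, lay them end-to-end and cut into $P'$ equal blocks, incurring at most $P'-1$ transfers). Once $S\le P$, the number of communication configurations is $(nP^2)^{O(P^2)}$ and the load-profile DP table has size $n^{O(P^2)}$, both polynomial for fixed $P$.

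Your reduction for connected chain DAGs is also too coarse. It is not in general optimal to broadcast $v_0$ to all processors in superstep $1$ and then treat the rest as an independent chain-DAG instance from superstep $2$: the send cost of $v_0$ interacts with the other $h$-relations, and in the direct-transfer models different processors may receive $v_0$ in different supersteps, which constrains which chain prefixes can be placed on them early. The paper handles this by first showing $v_0$ can be isolated in its own superstep, then bounding the optimum by $\max(\frac{n-1}{P},\ell_{\max})+2(P-1)g+1$, which gives the constant bound $S\le 2P-1$; it then folds the (at most $P-1$) transmissions of $v_0$ into the enumeration of communication configurations and filters DP states by when $v_0$ becomes available on each processor. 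Your ``charge and reduce'' step would need a comparable argument to be correct.
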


On the other hand, it turns out that BSP scheduling already becomes NP-hard for slightly more complex DAGs. In particular, we consider (i) DAGs with height only $2$, where \emph{height} is the number of nodes in the longest directed path, and (ii) \emph{in-trees}, which are DAGs where every node has outdegree at most $1$. We prove that in these (still relatively simple) classes, the problem is already NP-hard. Since the scheduling problem is known to be polynomially solvable for these specific classes in simpler models~\cite{oppforest, commDinf1}, this shows that our more advanced model indeed comes at the cost of increased complexity.

\begin{theorem} \label{th:2level}
BSP scheduling is already NP-hard if restricted to DAGs of height $2$.
\end{theorem}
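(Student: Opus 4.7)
The plan is to prove NP-hardness by reducing from a classical NP-hard combinatorial problem, with \textsc{Partition} as the most natural starting point: given $a_1,\dots,a_n\in\mathbb{N}$ summing to $2B$, decide whether an index subset $S\subseteq[n]$ exists with $\sum_{i\in S}a_i=B$. I would fix $P=2$, $L=0$, and a small constant $g\ge 1$, and construct a bipartite DAG of height $2$ in which each item $i$ is represented by a \emph{star gadget} $G_i$ consisting of $a_i$ fresh source nodes, all pointing to a single sink $t_i$. Extra auxiliary sources and sinks would be added as needed so that the sink-side workload can be perfectly balanced independently of how the items are split among processors, i.e.\ so that the only nontrivial combinatorial decision left is how to partition the source-side work between the two processors.

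With this construction, a valid partition immediately yields a two-superstep BSP schedule with work $B$ on each processor in the first superstep (sources) and balanced sink work in the second superstep, with no inter-processor communication, giving a target cost $C_0$. For the converse direction, I would argue structurally that any schedule with cost $\le C_0$ must (i)~effectively use two supersteps with sources computed before their dependent sinks, and (ii)~keep each gadget $G_i$ intact on a single processor. Claim (ii) is the heart of the reduction and would follow from a cost-accounting argument: splitting $G_i$ across the two processors forces at least one cross-processor send of a source value, adding $g\ge 1$ to the communication cost, while the corresponding work-balance gain in superstep $1$ is at most $1$ per split unit; with $C_0$ chosen tightly the splits are therefore strictly suboptimal. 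Once every gadget is coherent, the processor assignment of the gadgets is literally a partition of $[n]$, and the work-balance requirement forces $\sum_{i\in S}a_i=B$.

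The main obstacle is the cost-accounting step that rules out splitting: even though $g$ is a fixed constant, one has to show that no clever combination of source splits, sink displacements, and additional supersteps can beat the coherent two-superstep schedule. Care is needed because the BSP cost function combines three different ingredients (work, send/receive $h$-relations, and latency $L$), so the argument proceeds by showing that any single type of deviation from the canonical schedule is already penalized strictly more than any possible savings. A secondary and subtler obstacle is encoding strength: since \textsc{Partition} is only \emph{weakly} NP-hard and the star-gadget construction is pseudo-polynomial in $(a_i)_i$, this basic form yields only weak NP-hardness. To get NP-hardness properly in the standard binary encoding, I would either (a)~replace \textsc{Partition} by an intrinsically polynomial-sized source problem, reducing from \textsc{3-Sat} via constant-sized variable and clause gadgets arranged as a bipartite DAG in the obvious way, or (b)~adapt a known reduction for scheduling with communication delays on bipartite DAGs into the BSP cost model, carrying the hardness over by matching the optimal costs in the two models on the constructed instance. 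Either route preserves the strategy above while making the reduction a polynomial one.
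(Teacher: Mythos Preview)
Your proposal has a genuine gap, and you essentially identify it yourself: the \textsc{Partition} reduction with star gadgets of size $a_i$ is only pseudo-polynomial. The DAG you build has $\Theta(\sum_i a_i)$ nodes, which is exponential in the binary encoding of the \textsc{Partition} instance, so this is \emph{not} a polynomial-time reduction and does not establish NP-hardness in the usual sense. Everything therefore rests on your fixes (a) and (b), and neither is a proof. Saying that one would reduce from \textsc{3-Sat} ``via constant-sized variable and clause gadgets arranged as a bipartite DAG in the obvious way'' is precisely the hard part: there is no obvious way to force the BSP cost function (a sum of per-superstep maxima of work plus $g$ times an $h$-relation) to check clause satisfaction with constant-sized gadgets. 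Likewise, ``adapt a known reduction for scheduling with communication delays'' hides all the work, since the BSP cost function behaves quite differently from a fixed per-edge delay. Your cost-accounting sketch for the \textsc{Partition} version is also loose (it does not actually rule out multi-superstep schedules, and the ``gain at most $1$ per split unit'' claim needs the observation that total work is lower-bounded by $n/P$, not a per-unit trade), though these points are secondary to the encoding issue.

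For comparison, the paper avoids the number-encoding problem entirely by reducing from $k$-\textsc{Clique}, which has no large integers. The construction is quite different from yours: rather than stars whose \emph{work} must be balanced, it uses large ``$2$-level blocks'' (a set of $M_1$ level-$1$ nodes fully connected to a level-$2$ set) that are so expensive to split that each block's level-$2$ part must go to a single processor. Two oversized blocks pin the two processors; the remaining $n'$ blocks (one per vertex of the input graph) must then be split exactly $k$ versus $n'-k$ to keep work balanced. Each edge $e'$ of the input graph is a single level-$1$ node with successors in both endpoint blocks and in the block pinned to $p_1$; it incurs a communication step unless \emph{both} endpoint blocks sit on $p_1$. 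A further set of $|E'|-\binom{k}{2}$ level-$1$ nodes forces that many unavoidable communications and fixes the $h$-relation budget, so that the total communication stays within $C_0$ iff the $k$ blocks on $p_1$ induce $\binom{k}{2}$ edges, i.e.\ form a clique. The crucial difference from your plan is that the combinatorial hardness is carried by \emph{counting communication steps tied to graph edges}, not by balancing numeric work, which is what makes the reduction polynomial.
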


\begin{theorem} \label{th:trees}
BSP scheduling is already NP-hard if restricted to in-trees.
\end{theorem}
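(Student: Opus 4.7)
I plan to prove Theorem~\ref{th:trees} by polynomial-time reduction from the PARTITION problem. Given an instance $a_1, \ldots, a_m$ with $\sum_i a_i = 2B$, I would construct an in-tree $G$ by taking, for each $a_i$, a directed chain $C_i$ of $a_i$ unit-weight nodes, together with a single root $r$ with an incoming edge from the terminal node of each chain. Every non-root node has outdegree exactly $1$ and $r$ has outdegree $0$, so $G$ is an in-tree. With $P = 2$ and a suitable choice of $g$, $L$, and threshold $C_0$, the claim is that a BSP schedule of cost at most $C_0$ exists if and only if the $a_i$ admit a balanced partition.

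For the forward direction, given a balanced partition $S_1 \sqcup S_2$, I would assign the chains of $S_j$ to processor $p_j$, compute them entirely in superstep~$1$, send the terminal node of each $C_i$ with $i \in S_2$ from $p_2$ to $p_1$ in the communication phase, and compute $r$ on $p_1$ in superstep~$2$. The work cost contributed across the two supersteps is $\max(B, B) + 1 = B + 1$, the communication cost is $|S_2|$, and the latency contribution is $2L$, giving a total schedule cost of $B + 1 + g \cdot |S_2| + 2L$, which is at most the threshold $C_0$ for a suitable choice.

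For the backward direction, I would invoke the inequality $\sum_s \max_p C_{\mathrm{work}}^{(s,p)} \ge \max_p \sum_s C_{\mathrm{work}}^{(s,p)}$ to lower-bound the total work contribution by the load of the busiest processor. If no balanced partition exists, any whole-chain assignment leaves some processor with total load at least $B + 1$, by integrality of the $a_i$, so the work-cost contribution strictly exceeds that of the YES case. I would then need three normalization lemmas: (i) splitting a chain $C_i$ across processors introduces at least one extra communication (for the boundary node) and at least one extra superstep (for the induced sequential dependency), so with $g, L \ge 1$ each split costs at least $g + L$ while saving at most one unit of work; (ii) the trivial single-processor schedule of cost $2B + 1 + L$ can be ruled out by choosing $B$ large enough relative to $g$ and $L$; (iii) schedules with more than two supersteps only pay additional latency without reducing the work-balance requirement, so the optimum uses exactly two supersteps.

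The principal obstacle is the parameter tuning needed for a clean YES/NO separation: the threshold $C_0$ must accept all balanced partitions regardless of the cardinality $|S_2|$, while excluding every unbalanced split, which might have a small $|S_2|$ but imbalanced work. A delicate point is that an unbalanced split with a single small element on the lighter side can match the work cost of a balanced partition when the root is placed carefully; resolving this cleanly is likely to require either exploiting the extra unit of work contributed by $r$, or adopting the weighted BSP variant of Section~\ref{sec:weights}, where each chain can be collapsed to a single source node of weight $a_i$ and the NO-direction argument reduces almost algebraically to the PARTITION balance condition. Since PARTITION is only weakly NP-hard, the pseudo-polynomial blow-up from chain lengths in the unweighted construction is acceptable, and the weighted alternative handles the separation argument more transparently.
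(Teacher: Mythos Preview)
Your proposal has a fatal complexity-theoretic error. You reduce from PARTITION, which is only \emph{weakly} NP-hard, and your unweighted construction produces a DAG with $\sum_i a_i + 1$ nodes---pseudo-polynomial in the encoding length of the PARTITION input. Your claim that ``since PARTITION is only weakly NP-hard, the pseudo-polynomial blow-up \ldots\ is acceptable'' has the logic exactly inverted: it is precisely \emph{because} PARTITION admits a pseudo-polynomial algorithm that a pseudo-polynomial-size reduction proves nothing. A polynomial-time BSP algorithm on your in-tree would run in time $\mathrm{poly}(\sum_i a_i)$, which is already achievable for PARTITION by dynamic programming. To make a unary blow-up valid you must start from a \emph{strongly} NP-hard source; this is why the paper reduces from 3-PARTITION. (The paper even discusses this exact pitfall in the context of chain DAGs, where a prior PARTITION-based ``hardness'' claim is shown to be an artefact of input encoding.)

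There is a second, independent gap that you flag but do not close: no single threshold $C_0$ separates YES from NO. A balanced partition incurs communication cost $g\cdot|S_2|$ with $|S_2|$ ranging up to roughly $m/2$, whereas an unbalanced whole-chain assignment can place a single item on the lighter side and pay only $g$; since $g$ is a fixed model constant (not an instance parameter), these ranges overlap. Your proposed escape to the weighted variant would prove a strictly weaker statement than the theorem, and even there a PARTITION reduction yields only weak hardness. The paper's actual proof is far more elaborate precisely to sidestep both obstacles: it reduces from 3-PARTITION, uses $P=16$ processors, and builds large ``cone'' gadgets together with a critical path that force all communication rounds to occur at exactly prescribed time steps, so that the only remaining degree of freedom is which triple of numbers is packed into each segment.
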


\renewcommand*{\proofname}{Proof sketch}

\begin{proof}
The proof of Theorem~\ref{th:2level} uses a reduction from the $k$-clique problem, and can be loosely understood as an adaptation of the reduction approach for partitioning in~\cite{hyperDAG} to our setting. Intuitively, the second level of the DAG consists of gadgets representing each node of the input graph, while the first level consists of gadgets representing the edges; each edge gadget is connected to the two incident node gadgets. Our construction ensures that at most $k$ of the node gadgets can be assigned to a given processor $p$ without incurring a too large work cost, and that each edge gadget incurs a communication step exactly if one of its incident node gadgets is not assigned to $p$. With the appropriate cost limit $C_0$, the DAG only admits a valid schedule if there are $k$ original nodes that induce ${k \choose 2}$ edges.

Theorem~\ref{th:trees} is our most technical proof, requiring $P\!=\!16$ processors; we only provide a brief intuitive overview here. Firstly, our choice of $C_0$ ensures that we can only have $\frac{n}{P}$ computations on any processor, and only $d$ communication steps altogether (for some $d$). 
Then on the one hand, our construction contains a large gadget that occupies a single processor $p$ for the entire schedule (otherwise the communication cost is too high), and $d$ further gadgets that each need to send a value to $p$. Any schedule needs to manage these gadgets carefully to ensure that one of these $d$ values is already available every time we do a communication step; this ensures that the communications can only happen at specific times at the earliest. 
On the other hand, we use critical paths to ensure that specific nodes need to be computed by a given time step. We then attach further gadgets to given segments of these paths, which forces us to split the work between multiple processors (and thus communicate) to satisfy these deadlines; hence communications must happen at specific times at the latest.
Together, these define the exact times when we need to communicate, otherwise the cost exceeds $C_0$. Due to this, there is essentially only one way to develop a valid schedule in our DAG. The underlying reduction then uses the $3$-partition problem with gadgets representing each input number $a_i$, and ensures that it is only possible to satisfy each communication deadline if the numbers $a_i$ are sorted into triplets that each sum up to a given value.
\end{proof}

\section{Inapproximability}

Given the NP-hardness of the problem, another follow-up question is whether we can at least approximate the optimal solution in polynomial time. However, we show that on general DAGs, BSP scheduling is APX-hard: there is a constant $\epsilon>0$ such that it is already NP-hard to approximate the optimum cost to a $(1+\epsilon)$ factor. To our knowledge, for the case of constant $P$, no similar hardness results are known for other DAG scheduling problems.

\begin{theorem} \label{th:apx}
BSP scheduling is APX-hard: there exists a constant $\epsilon>0$ such that it is NP-hard to approximate the optimal scheduling cost to a $(1+\epsilon)$ factor, already for $P=2$.
\end{theorem}

\begin{proof}
We provide a reduction from MAX-$3$SAT($B$), i.e.\ maximizing the satisfied clauses in a $3$SAT formula where each variable occurs at most $B$ times; this is already known to be APX-hard for $B \in O(1)$ \cite{maxsat3, maxsat4}. Our construction consists of a separate gadget for each clause and each variable of the given $3$SAT formula. On a high level, the variable gadgets contain two separate subgadgets that we need to assign to the two different processors; this choice corresponds to setting the variable to true or false in the underlying formula. The clause gadgets then contain subgadgets corresponding to the $3$ literals in the clause, which need to be assigned according to the truth value chosen in the corresponding variable gadget. Whenever a clause is unsatisfied, the corresponding gadgets in the schedule incur a slightly higher work cost than a satisfied clause. With this, altogether, the total cost of the BSP schedule has a $\Theta(n)$ term that is proportional to the number of unsatisfied clauses, which allows us to complete the reduction.

The technical part of the proof is to show that any reasonable schedule in our DAG is indeed structured as described above, representing a solution of MAX-$3$SAT($B$). Intuitively, our construction ensures that using any other kind of subschedule in our gadgets results in a higher work or communication cost. More formally, we can always apply a sequence of transformation steps to convert any other BSP schedule into a schedule that represents a valid $3$SAT assignment, while only decreasing the cost of the solution.
\end{proof}

\section{Problem within a problem: communication scheduling} \label{sec:CS}

Since $h$-relations are a complex communication metric, it is natural to wonder if the hardness of BSP scheduling lies only within the assignment of nodes to processors and steps (as in simpler models), or if the scheduling of communication steps also adds another layer of complexity on this. More specifically, assume that the nodes are already assigned to processors and supersteps, but we still need to decide when the values are communicated between the chosen processors; is this subproblem easier to solve? 

\begin{definition}
In the \emph{communication scheduling (CS) problem}, we are given a fixed $\pi$ and $\tau$, and we need to find a communication schedule $\Gamma$ that minimizes the resulting BSP cost.
\end{definition}

We implicitly assume that $\pi$ and $\tau$ ensure that there is a feasible communication schedule, i.e.\ for all $(u,v) \in E$, we have $\tau(u) \leq \tau(v)$ if $\pi(u)=\pi(v)$, and $\tau(u) < \tau(v)$ if $\pi(u) \neq \pi(v)$.

As a heuristic approach, one might consider an eager or lazy communication policy, i.e.\ to communicate each value immediately after it is computed (eager), or only in the last possible superstep before it is needed (lazy). However, both of these schedules might be suboptimal. Consider the example in Figure~\ref{fig:CS}, where most of the computed values need to be sent immediately in the same superstep, but $p_2$ also computes a value in $s\!=\!1$ that is only needed at $p_1$ in $s\!=\!4$, so $p_2$ can choose to send it in superstep $1$, $2$ or $3$. The eager and lazy policies send this value in superstep $1$ and $3$, respectively, but these both increase the cost of the given $h$-relation; in contrast to this, sending in superstep $2$ comes at no further cost.

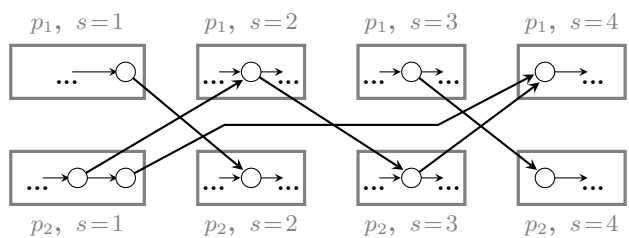
\begin{figure}
    \centering
    \begin{tikzpicture}
	
    \begin{scope}[very thick, gray]
    \draw (-10pt,0pt) rectangle (40pt,20pt);
    \draw (-10pt,40pt) rectangle (40pt,60pt);
    \draw (60pt,0pt) rectangle (100pt,20pt);
    \draw (60pt,40pt) rectangle (100pt,60pt);
    \draw (120pt,0pt) rectangle (160pt,20pt);
    \draw (120pt,40pt) rectangle (160pt,60pt);
    \draw (180pt,0pt) rectangle (220pt,20pt);
    \draw (180pt,40pt) rectangle (220pt,60pt);
    \end{scope}
    
    \node[anchor=center, gray] at (15pt,-8pt) {\small \textbf{$p_2$, $\,s\!=\!1$}}; 
    \node[anchor=center, gray] at (15pt,68pt) {\small \textbf{$p_1$, $\,s\!=\!1$}};

    \node[anchor=center, gray] at (80pt,-8pt) {\small \textbf{$p_2$, $\,s\!=\!2$}}; 
    \node[anchor=center, gray] at (80pt,68pt) {\small \textbf{$p_1$, $\,s\!=\!2$}};
    
    \node[anchor=center, gray] at (140pt,-8pt) {\small \textbf{$p_2$, $\,s\!=\!3$}}; 
    \node[anchor=center, gray] at (140pt,68pt) {\small \textbf{$p_1$, $\,s\!=\!3$}};

    \node[anchor=center, gray] at (200pt,-8pt) {\small \textbf{$p_2$, $\,s\!=\!4$}}; 
    \node[anchor=center, gray] at (200pt,68pt) {\small \textbf{$p_1$, $\,s\!=\!4$}};

    \begin{scope}[arrows=-stealth]
    \draw[thin] (2pt,10pt) -- (12pt,10pt);
    \draw[thin] (15pt,10pt) -- (30pt,10pt);

    \draw[thin] (68pt,10pt) -- (76pt,10pt);
    \draw[thin] (80pt,10pt) -- (92pt,10pt);

    \draw[thin] (128pt,10pt) -- (136pt,10pt);
    \draw[thin] (140pt,10pt) -- (152pt,10pt);

    \draw[thin] (190pt,10pt) -- (206pt,10pt);

    \draw[thin] (13pt,50pt) -- (30pt,50pt);

    \draw[thin] (68pt,50pt) -- (76pt,50pt);
    \draw[thin] (80pt,50pt) -- (92pt,50pt);

    \draw[thin] (128pt,50pt) -- (136pt,50pt);
    \draw[thin] (140pt,50pt) -- (152pt,50pt);

    \draw[thin] (190pt,50pt) -- (206pt,50pt);

    \draw[thick] (33pt,50pt) -- (77pt,13pt);
    \draw[thick] (80pt,50pt) -- (137pt,13pt);
    \draw[thick] (140pt,50pt) -- (187pt,13pt);

    \draw[thick] (15pt,10pt) -- (77pt,47pt);
    \draw[thick] (33pt,10pt) -- (70pt,30pt) -- (150pt,30pt) -- (186pt,49pt);
    \draw[thick] (140pt,10pt) -- (188pt,46pt);
    
    \end{scope}

    \node[anchor=center] at (0pt,6pt) {\small \textbf{...}};
    \draw[black, fill=white] (15pt,10pt) circle (0.85ex);
    \draw[black, fill=white] (33pt,10pt) circle (0.85ex);

    \node[anchor=center] at (66pt,6pt) {\small \textbf{...}};
    \draw[black, fill=white] (80pt,10pt) circle (0.85ex);
    \node[anchor=center] at (94pt,6pt) {\small \textbf{...}};

    \node[anchor=center] at (126pt,6pt) {\small \textbf{...}};
    \draw[black, fill=white] (140pt,10pt) circle (0.85ex);
    \node[anchor=center] at (154pt,6pt) {\small \textbf{...}};

    \draw[black, fill=white] (190pt,10pt) circle (0.85ex);
    \node[anchor=center] at (208pt,6pt) {\small \textbf{...}};

    \node[anchor=center] at (11pt,46pt) {\small \textbf{...}};
    \draw[black, fill=white] (33pt,50pt) circle (0.85ex);

    \node[anchor=center] at (66pt,46pt) {\small \textbf{...}};
    \draw[black, fill=white] (80pt,50pt) circle (0.85ex);
    \node[anchor=center] at (94pt,46pt) {\small \textbf{...}};

    \node[anchor=center] at (126pt,46pt) {\small \textbf{...}};
    \draw[black, fill=white] (140pt,50pt) circle (0.85ex);
    \node[anchor=center] at (154pt,46pt) {\small \textbf{...}};

    \draw[black, fill=white] (190pt,50pt) circle (0.85ex);
    \node[anchor=center] at (208pt,46pt) {\small \textbf{...}};

\end{tikzpicture}
    \vspace{-5pt}
    \caption{Illustration of the CS problem: both eager and lazy scheduling can be suboptimal.}
    \label{fig:CS}
\end{figure}

%This suggest that CS is indeed an interesting problem on its own. From the theoretical side, it is important to understand if our more realistic communication cost is, intuitively speaking, complex enough to already make the CS problem NP-hard. From the practical side, the CS problem has significantly less degree of freedom; as such, a heuristic algorithm might prefer to first find a good initial BSP schedule, and then try to further improve this by fixing $\pi$ and $\tau$, and reducing the communication cost separately.

This suggest that CS is indeed an interesting problem from a theoretical perspective. Furthermore, from the practical side, the CS problem has significantly less degree of freedom; as such, a heuristic scheduler could aim to first find a good initial BSP schedule, and then try to improve this by fixing $\pi$ and $\tau$, and reducing the communication cost separately.

For the simplest case of $P_{\!}=_{\!}2$, one can show that a greedy approach already finds the optimal solution (in any model of Table~\ref{tab:commodels}, since these are all equivalent for $P_{\!}=_{\!}2$).

\begin{lemma} \label{th:CSDP}
The CS problem can be solved in polynomial time for $P_{\!}=_{\!}2$ processors.
\end{lemma}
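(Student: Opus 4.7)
Since $\pi$ and $\tau$ are fixed, the work cost $\sum_s C_\text{work}^{(s)}$ and the total latency $S \cdot L$ are also fixed, so the CS problem reduces to minimizing $\sum_s C_\text{comm}^{(s)}$. For $P = 2$, every communication step is either from $p_1$ to $p_2$ (an ``A-job'') or from $p_2$ to $p_1$ (a ``B-job''), and each job's set of admissible supersteps is an interval $[\tau(u), s^{*}-1]$, where $s^{*}$ is the first superstep in which its value is required on the destination processor. All four communication variants of Table~\ref{tab:commodels} coincide here: there is no third processor to route through, and every value is sent to at most a single other processor. Writing $a_s$ and $b_s$ for the numbers of A- and B-jobs assigned to superstep $s$, a direct computation from the definitions shows $C_\text{sent}^{(s,p_1)} = C_\text{rec}^{(s,p_2)} = a_s$ and $C_\text{sent}^{(s,p_2)} = C_\text{rec}^{(s,p_1)} = b_s$, hence $C_\text{comm}^{(s)} = \max(a_s, b_s)$, so we are minimizing $\sum_s \max(a_s, b_s)$ subject to every job being placed inside its admissible interval.

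The next step is the algebraic identity $\max(a,b) = a + b - \min(a,b)$, which yields
\begin{equation*}
\sum_s \max(a_s, b_s) \;=\; n_A + n_B - \sum_s \min(a_s, b_s),
\end{equation*}
where $n_A = \sum_s a_s$ and $n_B = \sum_s b_s$ are the total numbers of A- and B-jobs and are both constants determined by $\pi$ and $\tau$. Minimizing the objective is therefore equivalent to maximizing $\sum_s \min(a_s, b_s)$, i.e.\ the total number of A--B ``co-occurrences'' across all supersteps.

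Finally, I would argue that this maximum equals the size of a maximum matching in the bipartite graph $H$ whose two sides are the A- and B-jobs and whose edges join pairs of jobs with overlapping admissible intervals. From a matching to a schedule: given a matching $M$ in $H$, place each matched pair in a common superstep of its two intervals (which exists by definition of the edge set) and distribute the unmatched jobs arbitrarily inside their intervals; the resulting schedule satisfies $\sum_s \min(a_s, b_s) \ge |M|$. From a schedule to a matching: inside each superstep $s$ arbitrarily pair off $\min(a_s, b_s)$ A- and B-jobs present there, producing a matching in $H$ of size exactly $\sum_s \min(a_s, b_s)$. Since both families of windows are intervals of supersteps with edges defined by overlap, the maximum matching in $H$ is a classical interval bipartite matching problem and can be solved in polynomial time by a simple sweep-line greedy: process $s = 1, \dots, S$ in order, maintain the pending A- and B-jobs whose intervals contain $s$, and at each $s$ form as many new pairs as possible, prioritising the jobs with the earliest deadlines so that any job about to expire is paired if at all possible. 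The main obstacle is proving the optimality of this greedy, which I expect to follow from a standard exchange argument on interval matching, but care is needed in tie-breaking and in deciding which ``optional'' (non-expiring) jobs to bring forward in each step; once this is established, the cost of the optimal communication schedule is $n_A + n_B - |M^{*}|$, computable in polynomial time.
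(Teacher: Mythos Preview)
Your proposal is correct and takes a genuinely different route from the paper. The paper gives a direct greedy algorithm: it processes supersteps in order, in each superstep $s$ sends exactly the values that are immediately needed in superstep $s+1$ on the other processor (this fixes $C_{\text{comm}}^{(s)}$ at the unavoidable value $\max(|\Lambda'_{1,2}|,|\Lambda'_{2,1}|)$), and then fills the remaining capacity in the ``smaller'' direction with not-yet-sent values of earliest deadline; optimality is proved by a superstep-by-superstep exchange argument against any hypothetically better schedule. You instead rewrite the objective via $\max(a,b)=a+b-\min(a,b)$, reduce to maximizing $\sum_s \min(a_s,b_s)$, and identify this maximum with a maximum bipartite matching between A-jobs and B-jobs whose interval windows overlap. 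Both directions of that identification are clean and correct: the schedule-to-matching direction gives the upper bound, and placing each matched pair in a common superstep (with $\min(a_s,b_s)\ge m_s$ in each superstep) gives the lower bound.

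What your reduction buys is that polynomial-time solvability is immediate once you reach bipartite matching: you do not even need the sweep-line greedy you sketch, since any standard matching algorithm suffices, and the ``main obstacle'' you flag simply disappears. What the paper's approach buys is an explicit near-linear one-pass algorithm and a constructive schedule without a detour through matching. The two greedies are in fact closely related: the paper's ``fill with earliest deadline'' rule is precisely the earliest-deadline-first pairing you would perform in your sweep, so the exchange arguments are proving essentially the same fact from different vantage points.
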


%Interestingly, for general $P$, it turns out that the hardness of CS may depend on the communication model. In particular, we prove that CS is already NP-hard in case of the DB, FS or FB models. However, the same proof approach does not work in case of DS; we leave it as an open question whether communication scheduling is also NP-hard in DS.

Interestingly, for general $P$, the hardness of CS may depend on the communication model. In particular, we prove that CS is already NP-hard in case of the DB, FS or FB models; however, we leave it as an open question whether the problem is also NP-hard in DS.

\begin{theorem} \label{th:commsched}
The CS problem is NP-hard in the DB, FS and FB models.
\end{theorem}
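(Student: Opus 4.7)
My plan is to give polynomial-time many-one reductions from a canonical NP-hard problem (most likely a variant of \emph{Bin Packing}, \emph{3-Partition}, or \emph{Set Cover}) into the decision version of CS: given an assignment $(\pi,\tau)$ and a cost bound $C_0$, decide whether some $\Gamma$ achieves cost at most $C_0$. Since DS is the one model in which both the sender and the single-receiver nature of each transfer are rigid, I expect that any reduction has to actively exploit at least one of the two extra flexibilities present in DB, FS, and FB (broadcast bundling or relay routing); this is also what suggests that DS is genuinely different and hence left open.

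The common template is to fix $\pi$ and $\tau$ so that the computational structure is essentially trivial, but the induced set of required data transfers is large and tightly constrained into a short window of supersteps. Scaffolding nodes can be added to pin the $h$-relation cost of the ``forced'' supersteps close to a known value, so that the scheduler's only meaningful freedom lies in how a set of ``floating'' transfers is routed and timed. For DB and FB I would exploit broadcast: many values computed by a single sender are each needed by a prescribed subset of receivers, and the scheduler's choice of which superstep to broadcast each value in induces per-receiver receive loads that directly encode a packing or covering constraint on the recipient sets. For FS I would instead exploit relay routing: each data movement can pick an intermediate processor as its sender in a later superstep, and the choice of relay loads different $h$-relations in the intermediate supersteps; this too can be arranged so that a balanced schedule exists iff the underlying partition/packing instance is feasible.

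The main obstacle will be the ``negative'' direction of correctness: showing that \emph{every} schedule attaining cost at most $C_0$ must adopt the canonical structure corresponding to a YES-instance of the source problem, thus ruling out exotic combinations of broadcast bundling, relay routing, splitting of transfers across multiple supersteps, and inter-processor rebalancing that might achieve a lower cost by unintended means. A secondary challenge is to check that the same construction genuinely collapses in DS, where every (value, receiver) pair becomes an independent fixed-sender transfer, and whose scheduling reduces to a comparatively rigid bipartite load-balancing problem (closer in spirit to the $P=2$ case handled by Lemma~\ref{th:CSDP}) with much less combinatorial structure for a reduction to latch onto.
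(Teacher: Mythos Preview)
Your high-level diagnosis is right: the hardness must come from the coupling that broadcast (DB, FB) or relay (FS) creates between transfers to different receivers, and scaffolding with immediate-deadline values to pin the $h$-relation of each superstep is exactly the tool the paper uses. But the concrete mechanism you sketch---choosing which superstep to broadcast each value in so that \emph{per-receiver receive loads} encode a packing over \emph{recipient subsets}---is not what the paper does, and it is not clear your version would close. The difficulty is that in CS every (value, receiver) pair has a deadline but no lower bound on when it may be sent, so a packing-style argument on receive loads alone tends to leak: the scheduler can smear transfers over many supersteps and it is hard to force a sharp combinatorial choice without some additional structure.

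The paper instead reduces from $3$D-matching and splits the schedule into two phases. In an \emph{initialization} phase, scaffolding ensures that the source processor $p_0$ can push at most $M-N$ values out, and---this is the crucial use of broadcast/relay---those $M-N$ values necessarily reach \emph{all} receivers (in DB/FB because one send covers everyone; in FS because the only open slots route $p_0 \to p_a$ and then $p_a \to$ each receiver). So the scheduler's only real choice is \emph{which} $N$ triplets are left over. A second \emph{matching} phase then tests whether those $N$ leftover triplets form an exact cover: six receiver processors $p_X,p_Y,p_Z,p_{\widetilde X},p_{\widetilde Y},p_{\widetilde Z}$ are given deadlines indexed by $i$ and by $N+1-i$ respectively, and scaffolding leaves exactly one free receive slot per receiver per block of six supersteps. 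The forward-ordered receivers force ``at most one triplet containing $x_1$, at most two containing $\{x_1,x_2\}$, \ldots'' while the reverse-ordered ones force the mirrored inequalities from the top; together these pin down an exact cover. This two-phase design, and especially the reverse-ordering trick that turns one-sided prefix constraints into equality, is the idea your plan is missing. Your intuition about why DS resists the argument matches the paper's: without broadcast or relay, $p_0$'s sends to different receivers decouple, so the initialization phase can no longer force the \emph{same} $M-N$ triplets to be pre-distributed everywhere.
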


\renewcommand*{\proofname}{Proof sketch}

\begin{proof}
The proof uses a reduction from the $3$D-matching problem: given three sets $X$, $Y$, $Z$ of size $N$, and $M$ triplets from $X _{\!} \times _{\!} Y _{\!} \times _{\!} Z$, we need to select $N$ triplets that form a disjoint cover. We convert this into a CS problem where each triplet is represented by a value which needs to be sent from $p_0$ to several other processors, with the concrete deadlines depending on the given triplet. The construction consists of two parts. The first part allows us to send exactly $(M-N)$ values from $p_0$ to all other processors (the triplets that are not chosen) within the allowed cost. The second part allows us to send $N$ further values from $p_0$ to the other processors, but only if the corresponding triplets are disjoint.

Intuitively, the proof works in DB, FS and FB because sending a value from $p_1$ to $p_2$ and from $p_1$ to $p_3$ are not independent operations: with broadcast, a value sent from $p_1$ can be received by both $p_2$ and $p_3$ at the same time, and with free data movement, $p_1$ can send a value to another processor $p_a$, which relays this to both $p_2$ and $p_3$. These are crucial to ensure that $p_0$ indeed sends the same $(M-N)$ values to all other processors in the first part. In contrast to this, the communication steps from $p_1$ to $p_2$ and from $p_1$ to $p_3$ are essentially independent from each other in DS, so the same approach cannot be applied.
\end{proof}

On the other hand, if we have communication weights, then it becomes relatively simple to reduce CS to standard packing problems already for $P _{\!} = _{\!} 2$ processors.

\begin{lemma} \label{lem:CSweights}
With communication weights, the CS problem is already NP-hard for $P_{\!}=_{\!}2$.
\end{lemma}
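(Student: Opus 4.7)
The plan is to reduce from the \textsc{Partition} problem: given positive integers $a_1, \ldots, a_n$ with $\sum_i a_i = 2B$, decide whether some subset sums to $B$. I would construct a CS instance on $P = 2$ processors and $S = 3$ supersteps in which the only remaining freedom --- the choice of superstep for each forced transfer --- corresponds exactly to selecting such a subset.

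For each $i \in [n]$, I would place a node $v_i$ on $p_1$ in superstep $1$ with $w_{comm}(v_i) = a_i$, together with a node $v_i'$ on $p_2$ in superstep $3$ and an edge $(v_i, v_i')$. Since $\pi(v_i) \neq \pi(v_i')$ the value must be sent from $p_1$ to $p_2$, and because $\tau(v_i) = 1$ while $\tau(v_i') = 3$, the transfer may legally happen in superstep $1$ or $2$; this is the \emph{only} freedom in $\Gamma$. To fix a matching weight of traffic in the opposite direction, I add a node $u_1$ on $p_2$ in superstep $1$ with $w_{comm}(u_1) = B$ feeding a node $u_1'$ on $p_1$ in superstep $2$, and analogously a pair $(u_2, u_2')$ that forces a weight-$B$ send from $p_2$ to $p_1$ in superstep $2$.

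Let $x_s$ denote the total weight of $v_i$'s transferred in superstep $s \in \{1, 2\}$, so $x_1 + x_2 = 2B$. On $P = 2$ processors the $h$-relation collapses cleanly: $p_1$ sends $x_s$ and receives $B$ while $p_2$ does the opposite, so $C_{comm}^{(s)} = \max(x_s, B)$. The total communication cost is therefore
\[
g \cdot \bigl( \max(x_1, B) + \max(x_2, B) \bigr) \;\geq\; 2gB,
\]
with equality iff $x_1 = x_2 = B$, i.e.\ iff the $v_i$'s sent in superstep $1$ form a \textsc{Partition} witness. Since $\pi$ and $\tau$ are fixed, the work cost and latency contribution sum to a constant $W_0$, so setting the decision threshold $C_0 = W_0 + 2gB + S \cdot L$ yields a yes-instance of CS iff the input is a yes-instance of \textsc{Partition}.

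The only subtle step is verifying that no unintended freedom remains in $\Gamma$: every transfer other than the $v_i$'s must be pinned to a unique superstep. This holds by construction because each $u_j$ is consumed exactly one superstep after it is produced, and there are no other inter-processor edges. Compared to Theorem~\ref{th:commsched}, no 3D-matching or broadcast trickery is required here, which is why hardness already appears at $P = 2$ --- weighted $h$-relations alone suffice to encode a balancing problem.
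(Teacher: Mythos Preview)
Your reduction is correct and follows the same structural idea as the paper: fix a constant-weight transfer in one direction per superstep to set a target ``budget,'' then let the flexible transfers in the other direction encode the packing problem. The paper, however, reduces from \textsc{3-Partition} rather than \textsc{Partition}: it uses $m'+1$ supersteps, forces a weight-$T$ send from $p_1$ to $p_2$ in each of the first $m'$ supersteps, and places all the flexible items (with weights $a_i$) on $p_2$ in superstep $1$ with a deadline at superstep $m'+1$.

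The practical difference is in the strength of the hardness result. Your \textsc{Partition} reduction only establishes weak NP-hardness: if the communication weights are polynomially bounded (equivalently, encoded in unary), the instance you produce can be solved in polynomial time by the standard pseudo-polynomial DP. The paper's \textsc{3-Partition} reduction gives strong NP-hardness, so the problem remains hard even with unary-encoded weights --- consistent with the care the paper takes elsewhere (e.g.\ in the chain-DAG discussion) about input encodings. Since the lemma as stated only claims ``NP-hard,'' your argument suffices; but the paper's choice buys a more robust result at essentially no extra cost in proof complexity.
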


\section{Formulation as an ILP problem} \label{sec:ILP}

Finally, we present a straightforward approach to formulate BSP scheduling as an Integer Linear Programming (ILP) problem. Since today's ILP solvers can often solve even fairly large instances in reasonable time, this naive ILP formulation may already be a viable approach in several applications, especially if the computation is modelled as a DAG at relatively coarse granularity. A similar approach with ILP solvers has already provided remarkable empirical results  for various combinatorial problems~\cite{jenneskens22}. Our ILP formulation also generalizes very naturally to the extensions with node weights and/or duplication.

\begin{proposition} \label{prop:ILP}
BSP scheduling can be formulated as an ILP problem on $O(n \cdot P \cdot S)$ variables in the DS, DB and FB models, and  on $O(n \cdot P^2 \cdot S)$ variables in the FS model.
\end{proposition}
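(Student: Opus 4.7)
The plan is to exhibit an explicit ILP formulation for each of the four communication models and count the number of binary variables used. In all four models, the backbone of the formulation is a family of binary assignment variables $x_{v,p,s}$ with $x_{v,p,s}=1$ iff node $v$ is computed on processor $p$ in superstep $s$; these contribute $O(nPS)$ variables. A single equation $\sum_{p,s} x_{v,p,s}=1$ per node enforces a valid assignment, and for each edge $(u,v) \in E$ the precedence constraint reduces to a linear inequality on the $x$-indexed supersteps, requiring $\tau(u) \le \tau(v)$ when $u,v$ share a processor and $\tau(u) < \tau(v)$ otherwise.

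The remaining variables encode the communication set $\Gamma$ and differ across the models. In DS the sender of any transfer of $v$ is forced to be $\pi(v)$, so a single binary $\gamma_{v,p,s}$ indicating that $v$ reaches processor $p$ in superstep $s$ suffices, yielding $O(nPS)$. In DB I would use receiver indicators $\eta_{v,p,s}$ together with a broadcast indicator $\beta_{v,s}$ forced up by $\beta_{v,s} \ge \eta_{v,p,s}$ for every $p$; the single broadcast is charged to the (unique, $x$-determined) sender, and the receive cost on $p$ is $\sum_v \eta_{v,p,s}$, still $O(nPS)$. FB is analogous, but the broadcast indicator becomes $\beta_{v,p,s}$ with a free sender $p$, again $O(nPS)$. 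In FS both sender and receiver are free and each transfer is a separate singlecast, forcing four-indexed variables $\gamma_{v,p_1,p_2,s}$ with $O(nP^2S)$ total.

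On top of the assignment and communication variables, each model needs (i) availability constraints ensuring that a value is only sent from a processor that currently holds it, and (ii) a standard max-linearization of the cost: a slack $W^{(s)}$ lower-bounded by $\sum_v w_{work}(v) \, x_{v,p,s}$ for every $p$, and $H^{(s)}$ lower-bounded by each processor's send and receive totals, combined into the objective $\sum_s (W^{(s)} + g \cdot H^{(s)} + L)$. The step I expect to be most delicate is the availability constraint in the free-movement models, where a value may be relayed through several processors before reaching its destination. I would handle this with recursively-defined auxiliary availability flags $a_{v,p,s}$ of size $O(nPS)$, enforced by inequalities of the form $a_{v,p,s} \le a_{v,p,s-1} + \sum_{p_1} \gamma_{v,p_1,p,s-1}$ (with a base case tying $a_{v,p,\tau(v)}$ to $x_{v,p,\tau(v)}$), and each communication variable upper-bounded by the corresponding $a$. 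A parallel concern in the broadcast models is that the send cost must be credited only to the processor actually computing $v$; this is handled by a one-shot linearization of the implicit product $x_{v,\pi(v),\tau(v)} \cdot \beta_{v,s}$ via a single auxiliary binary per $(v,s)$, which is again within the $O(nPS)$ budget. All auxiliary families are of the same order as the communication variables, so the leading terms stated in the proposition are preserved.
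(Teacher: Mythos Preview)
Your overall architecture (assignment variables $x_{v,p,s}$, availability flags $a_{v,p,s}$, max-linearization of $C_{work}$ and $C_{comm}$) is essentially the paper's formulation, and for the FS, DB and FB models your variable families match the paper's $\textsc{comp}$, $\textsc{pres}$, $\textsc{sent}/\textsc{rec}$, $\textsc{comm}$ up to renaming. Two points, however, need attention.

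\textbf{The DS send cost is the real gap.} You write that in DS ``a single binary $\gamma_{v,p,s}$ \dots\ suffices'', but you never explain how to express $C_{sent}^{(s,p)}$ linearly. In DS (singlecast, direct), the send cost of $p$ in superstep $s$ is
\[
\sum_{v:\,\pi(v)=p}\;\sum_{p'\neq p}\gamma_{v,p',s},
\]
and since $\pi(v)$ is itself encoded by the $x$-variables, this is a sum of bilinear terms $\textsc{home}_{v,p}\cdot\gamma_{v,p',s}$. A naive linearization of these products needs one auxiliary per $(v,p,p',s)$, i.e.\ $O(nP^2S)$ variables, which would destroy the claimed bound for DS. The paper handles exactly this case with an integer variable $\textsc{senttimes}_{v,p,s}$ and a big-$M$ constraint
\[
\textsc{senttimes}_{v,p,s}+P\cdot(1-\textsc{home}_{v,p})\ \ge\ \sum_{p'\neq p}\gamma_{v,p',s},
\]
together with $\textsc{senttimes}_{v,p,s}\le P\cdot\textsc{home}_{v,p}$; this keeps the count at $O(nPS)$. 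Your proposal omits this step, so as written it does not establish the DS bound.

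\textbf{A smaller issue in DB.} Your plan to charge the broadcast via ``a single auxiliary binary per $(v,s)$'' linearizing $x_{v,\pi(v),\tau(v)}\cdot\beta_{v,s}$ does not work: $x_{v,\pi(v),\tau(v)}$ is identically $1$, so that product is just $\beta_{v,s}$ and carries no information about \emph{which} processor to charge. What you actually need is a $(v,p,s)$-indexed send variable constrained by $\textsc{home}_{v,p}$ (exactly the paper's $\textsc{sent}_{v,p,s}\le\textsc{home}_{v,p}$), or equivalently a linearization of $\textsc{home}_{v,p}\cdot\beta_{v,s}$. This is still $O(nPS)$, so the stated bound survives, but the argument you gave for it does not.
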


\begin{proof}
For all $v _{\!} \in _{\!} V$, $p _{\!} \in _{\!} [P]$, $s _{\!} \in _{\!} [S]$, we introduce two binary variables: $\textsc{comp}_{v, p, s}$ indicates whether value $v$ is computed on processor $p$ in superstep $s$, while $\textsc{pres}_{v, p, s}$ indicates whether the output of $v$ is already present on processor $p$ after the computation phase of superstep $s$. We can then already express the work costs $C_{work}\,\!^{(s,p)}$ by a summation, and $C_{work}\,\!^{(s)}$ by requiring $C_{work}\,\!^{(s)} \geq C_{work}\,\!^{(s,p)}$ for all $p$. The main difference between the models is how we capture the actual communication steps:
\begin{itemize}[leftmargin=15pt, itemsep=4pt, topsep=2pt]
\item In broadcast models (DB and FB), we simply add binary variables $\textsc{sent}_{v, p, s}$ and $\textsc{rec}_{v, p, s}$ to indicate if processor $p$ sends/receives value $v$ in superstep $s$. The constraints here must ensure that if $\textsc{rec}_{v, p_1, s}=1$, then $\textsc{send}_{v, p_2, s}=1$ for some other processor $p_2$.
\item In FS, we need to add binary variables $\textsc{comm}_{v, p_1, p_2, s}$ to indicate the concrete data transfer, i.e.\ whether value $v$ is sent from processor $p_1$ to processor $p_2$ in superstep $s$. The validity of these operations can simply be checked using the variables $\textsc{pres}_{v, p, s}$.
\item Finally, in DS, a naive solution would also use the variables $\textsc{comm}_{v, p_1, p_2, s}$. However, since here only processor $\pi(v)$ is allowed to send $v$, we can use some auxiliary variables to also capture this case with $O(n \cdot P \cdot S)$ variables, similarly to DB and FB.
\end{itemize}

Note that a value $v$ is present on processor $p$ in superstep $s$ if it was already present before, it was computed in this superstep, or it was received in the previous superstep; hence e.g.\ in the broadcast models, the constraint $\textsc{pres}_{v, p, s} \, \leq \, \textsc{pres}_{v, p, (s-1)} + \textsc{comp}_{v, p, s} + \textsc{rec}_{v, p, (s-1)}$ ensures the validity of the variables $\textsc{pres}_{v, p, s}$. The precedence constraints in the DAG can be encoded by requiring $\textsc{comp}_{v, p, s} \leq \textsc{pres}_{u, p, s}$ for all $(u,v) \in E$. The entire set of variables and constraints for each model is discussed in Appendix \ref{app:ILP}.
\end{proof}

\newpage

\bibliography{references}

\newpage

\appendix

\section{Detailed definition of different model variants} \label{app:def}

We first give a more formal definition of the numerous scheduling models that were analyzed (or only briefly mentioned) in the paper. Note that besides BSP, the simplest scheduling models from the literature (classical scheduling and commdelay) were already defined in Section~\ref{sec:tax}.

We begin with the alternative communication models within BSP (the DB, FS and FB models), since these play a central role in our theoretical results in Sections~\ref{sec:CS} and \ref{sec:ILP}. We then continue with the remaining models that are only loosely related to BSP. Finally we discuss some extensions and alternative variants of the BSP scheduling problem.

\subsection{DB, FS and FB models}

To define the BSP scheduling problem with broadcast communication (and hence switch from our initial definition of the DS model to DB), we need to redefine the send cost $C_{sent}\,\!^{(s,p)}$ as
\[ C_{sent}\,\!^{(s,p)}=|\{ v \in V \: | \: \exists \, p' \in [P] \text{ such that } (v, p, p', s) \in \Gamma \}| \, , \]
while the rest of the definitions remain unchanged.

To allow free movement of data (switch from the DS model to FS), we need to introduce auxiliary variables similarly to the \textsc{pres}$_{v,p,s}$ variables in the ILP formulation; to avoid confusion, we develop a different notation for these in this context, referring to them as $\mu(v, p, s)$. Formally, we let the auxiliary variable $\mu(v, p, s) \in \{0, 1\}$ indicate whether the output of $v$ is already present on processor $p$ at the end of the computational phase of superstep $s$; that is, we have $\mu(v, p, s)=1$ if either $\pi(v)=p$ and $\tau(v) \leq s$, or if there is a $(v, p' p, s') \in \Gamma$ with $s' < s$. In the validity conditions of Section \ref{sec:BSP}, Condition (i) will now require for all $(u,v) \in E$ to have $\mu(u, \pi(v), \tau(v))=1$, and Condition (ii) will state that if $(v, p_1, p_2, s) \in \Gamma$, then we must have $\mu(v, p_1, s)=1$.

Note that the modified definition for broadcasting and for free data movement can be combined in a straightforward way to obtain the definition of the FB model, since the corresponding modifications affect different aspects of the model definition anyway.

\subsection{Single-port duplex model}

The SPD model has been studied in the work of \cite{SPD}; we slightly change this definition to adapt it to node-based communication.

Similarly to commdelay, an SPD schedule has a constant parameter $g$, and it consists of an assignment $\pi: V \rightarrow [P]$ and $t: V \rightarrow \mathbb{Z}^+$, but also a communication schedule $\Gamma \subset V \times [P] \times [P] \times \mathbb{Z}^+$, where $(v, p_1, p_2, t_0) \in \Gamma$ describes that value $v$ is sent from $p_1$ to $p_2$ at time step $t_0$. These communication steps in $\Gamma$ must satisfy two conditions for validity: the values that are sent must already be computed (i.e.\ if $(v, p_1, p_2, t_0) \in \Gamma$ then $t(v) \leq t_0$), and each processor can only send and receive a single value at a given time (i.e.\ if $(v_1, p_{1,1}, p_{2,1}, t_1), (v_2, p_{1,2}, p_{2,2}, t_2) \in \Gamma$ and the intervals $(t_1, t_1+g]$ and $(t_2, t_2+g]$ are not disjoint, then we must have $p_{1,1} \neq p_{1,2}$ and $p_{2,1} \neq p_{2,2}$).

For a correct SPD schedule, we still require that a processor only computes a single node at a time ($\nexists \, u, v \in V$ with $\pi(u)=\pi(v)$, $t(u)=t(v)$). For all $(u,v) \in E$, if $\pi(u)=\pi(v)$, then we still need to have $t(u)<t(v)$. For edges $(u,v) \in E$ with $\pi(u) \neq \pi(v)$, we require that there is a $(u, \pi(u), \pi(v), t_0) \in \Gamma$ with $t_0+g < t(v)$. As before, the cost of a schedule is simply its makespan $\max_{v \in V\,} t(v)$.

\subsection{MaxBSP}

As discussed in Section~\ref{sec:tax}, a reasonable interpretation of maxBSP for DAG scheduling requires some consideration. For instance, in the example in Figure~\ref{fig:maxBSP}, $v_2$ needs to be the last node to be computed on processor $p_1$ in superstep $2$, and $v_3$ must be the first node to be computed on $p_2$ in superstep $3$ in any case; this shows that the communication phase of superstep $2$ (sending $v_2$ from $p_1$ to $p_2$) cannot happen simultaneously with the computation phase of either superstep $2$ or $3$.

As such, simply applying the maxBSP cost function leads to an inaccurate model; we also need to consider which computations and communications can happen at the same time. For instance, $v_1$ is already computed in superstep $1$, and only required in superstep $3$, so it can indeed happen simultaneously with the computations in superstep $2$.

In general, DAG scheduling in maxBSP means that each superstep $s$ consists of a simultaneous computation and communication phase, and we can only send a value $v$ from $p_1$ to $p_2$ if $v$ was computed on $p_1$ in superstep $(s-1)$ or before, and is only needed on $p_2$ in superstep $(s+1)$ or after. For a formal definition of this, we only need minor modifications in the original BSP definition. In particular, in validity condition (ii), we need to require that if $(v, p_1, p_2, s) \in \Gamma$, then besides $p_1=\pi(v)$, we must have $\tau(v) < s$ with strict inequality, since the communication phase of $s$ now overlaps with its computation phase. Besides this, we simply need to adjust the cost of a superstep. One possible choice for this is
\[ C\,\!^{(s)} = \max \left( C_{work}\,\!^{(s)} \: , \: g \cdot C_{comm}\,\!^{(s)} + L \right) \, , \]
but as a natural alternative, we could also have
\[ C\,\!^{(s)} = \max \left( C_{work}\,\!^{(s)} \: , \: g \cdot C_{comm}\,\!^{(s)} \right) + L \, . \]

\begin{figure}
    \centering
    \begin{tikzpicture}
	
    \begin{scope}[very thick, gray]
    \draw (0pt,-6pt) rectangle (50pt,20pt);
    \draw (0pt,40pt) rectangle (50pt,66pt);
    \draw (80pt,-6pt) rectangle (150pt,20pt);
    \draw (80pt,40pt) rectangle (150pt,66pt);
    \draw (180pt,-6pt) rectangle (230pt,20pt);
    \draw (180pt,40pt) rectangle (230pt,66pt);
    \end{scope}
    
    \node[anchor=center, gray] at (25pt,-14pt) {\small \textbf{$p_2$, $\,s\!=\!1$}}; 
    \node[anchor=center, gray] at (25pt,74pt) {\small \textbf{$p_1$, $\,s\!=\!1$}};

    \node[anchor=center, gray] at (115pt,-14pt) {\small \textbf{$p_2$, $\,s\!=\!2$}}; 
    \node[anchor=center, gray] at (115pt,74pt) {\small \textbf{$p_1$, $\,s\!=\!2$}};
    
    \node[anchor=center, gray] at (205pt,-14pt) {\small \textbf{$p_2$, $\,s\!=\!3$}}; 
    \node[anchor=center, gray] at (205pt,74pt) {\small \textbf{$p_1$, $\,s\!=\!3$}};

    \begin{scope}[thick, arrows=-stealth]
    \draw (19pt,53pt) -- (35pt,53pt);
    \draw (40pt,53pt) -- (86pt,53pt);
    \draw (91pt,53pt) -- (106pt,53pt);
    \draw (124pt,53pt) -- (135pt,53pt);
    \draw (140pt,53pt) -- (188pt,11pt);
    \draw (40pt,53pt) -- (186pt,8pt);
    \draw (191pt,7pt) -- (211pt,7pt);
    \end{scope}

    \draw[black, fill=white] (39pt,53pt) circle (1.3ex);
    \node[anchor=center] at (12pt,49pt) {\textbf{$\ldots$}};
    \draw[black, fill=white] (91pt,53pt) circle (1.3ex);
    \draw[black, fill=white] (140pt,53pt) circle (1.3ex);
    \node[anchor=center] at (115pt,49pt) {\textbf{$\ldots$}};
    \draw[black, fill=white] (191pt,7pt) circle (1.3ex);
    \node[anchor=center] at (218pt,4pt) {\textbf{$\ldots$}};

    \node[anchor=center] at (39.4pt,52.5pt) {$v_{_{\!}1}$};
    \node[anchor=center] at (140.8pt,52.5pt) {$v_{_{\!}2}$};
    \node[anchor=center] at (191.4pt,6.5pt) {$v_{_{\!}3}$};

\end{tikzpicture}
    \caption{Example for interpreting DAG scheduling in the maxBSP model.}
    \label{fig:maxBSP}
\end{figure}
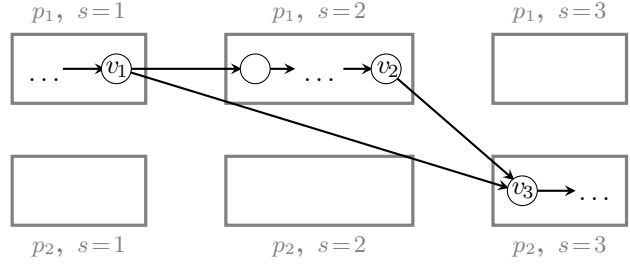

\subsection{The $\alpha-\beta$ model and subset-BSP} \label{app:subs:ABsubset}

The third cell in the last column of Table~\ref{tab:tax2} can be interpreted in several different ways for DAG scheduling; below we discuss two of these. The main difference between these two variants is whether they require \textit{subset synchronization} between the nodes that are currently communicating, similarly to how BSP requires it between all processors. That is, one natural interpretation of this setting is what we call subset-BSP (also studied before~\cite{subsetBSP1, subsetBSP2}), where the schedule is still divided into computation and communication phases, just like BSP, but the different processors can decide to do these communication phases at separate points in time: each such communication phase can happen only between a given subset of processors, while the rest of the processors keep computing in the meantime. The other natural interpretation is closer to the spirit of the $\alpha-\beta$ model (with $\alpha=0$), where no synchronization is required at all; in other words, this is essentially identical to the SPD model, with the minor difference that while communicating, a processor is unable to compute.

To demonstrate the difference between these two models, assume for instance that $p_1$ computes a value $v_1$ in time step $t_0$, and wants to send this to $p_2$ in the time interval $(t_0, t_0+g]$, while $p_3$ computes another value $v_2$ in time step $t_0+1$, and wants to send this to $p_1$ in the time interval $(t_0+1,t_0+g+1]$. This does not violate any condition in the $\alpha-\beta$ model, but it is not allowed in subset-BSP, since $p_3$ is not yet finished with the computation of $v_2$ at $t_0$, so it cannot begin a communication phase together with $p_1$ at this point.

We first begin with the definition of the $\alpha-\beta$ model, since this is surprisingly simple. We can simply take the definition of the SPD model, and extend it with a simple further condition on validity: if there is a communication step $(u, p_1, p_2, t_0) \in \Gamma$, then we cannot have any $v \in V$ such that $\pi(v) \in \{ p_1, p_2\}$ and $t_0<\tau(v) \leq t_0+g$. We once again point out that since the $\alpha-\beta$ model is mostly used as a cost function and not a complete model in the related work, this is just one possible interpretation of it; some other works assume e.g.\ that a processor is not even allowed to send and receive at the same time, which would make this setting significantly different from all other models in our taxonomy. 

On the other hand, formally defining the subset-BSP scheduling model much more technical. For convenience, we modify the definition of supersteps for this, and assume that each superstep on a processor only contains computations. The communication phases are defined separately as a set $\{ \zeta_1, \zeta_2, ... \}$, where each $\zeta_i$ is a $4$-tuple $(\Gamma, P_0, t_s, t_f)$, with $\Gamma$ being a set of communication steps as before, $P_0 \subseteq [P]$ is a subset of processors of size $|P_0| \geq 2$, and $t_s, t_f \in \mathbb{N}$ are integers denoting the starting and finishing time of the communication phase, respectively. The set $\Gamma$ now only consists of $3$-tuples $(v, p_1, p_2) \in V \times [P] \times [P]$, and these must satisfy $p_1, p_2 \in P_0$. The communication cost of $\zeta_i$ is defined as the cost of the communication phase before: $C_{comm}\,\!^{(\zeta_i,p)}$ is the maximum of $C_{sent}\,\!^{(\zeta_i,p)}$ and $C_{rec}\,\!^{(\zeta_i,p)}$, and $C_{comm}\,\!^{(\zeta_i)}$ is the maximum of these values, with $p$ restricted to the processors in $P_0$. The starting and finishing times must satisfy $t_s+C_{comm}\,\!^{(\zeta)}=t_f$.

We also need to establish the starting and ending times of the computational phases (supersteps) with functions $t'_s, t'_f: [P] \times [S] \rightarrow \mathbb{N}$; these must satisfy $t'_f(p, s) = t'_s(p, s) + C_{work}\,\!^{(s,p)}$. Furthermore, the supersteps need to be ordered correctly ($t'_s(p, s) \geq t'_f(p, s')$ for $s'<s$), and disjoint from the communication phases containing $p$ (if $\zeta_i$ has $p \in P_0$, then for all $s$ either $t'_f(p,s)\leq t_s$ or $t'_s(p,s)\geq t_f$ for the $t_s$, $t_f$ in $\zeta_i$). Note that if the desired number of computation phases is different for different processors, then we can simply add supersteps of length $0$ for some processors at the end of their schedule.

The validity conditions in this model are as follows. For condition (i), in case of $\pi(u) \neq \pi(v)$, we must now require that there exists a $\zeta_i=(\Gamma, P_0, t_s, t_f)$ such that $(u, \pi(u), \pi(v)) \in \Gamma$ and $t_f \leq t'_s(\pi(v),\tau(v))$. For condition (ii), we need for each $\zeta_i=(\Gamma, P_0, t_s, t_f)$ and $(v, p_1, p_2) \in \Gamma$ to have $p_1=\pi(v)$ and $t'_f(p_1, \tau(v)) \leq t_s$. Finally, the cost of a schedule can be defined as
\[  \max\left( \max_{p \in [P]} t'_f(p, S)\, , \: \max_{\zeta_i} t_f \right) \, . \]

Subset-BSP can also easily be extended with a latency cost at each subset synchronization. On the other hand, there is no straightforward way to do the same in the $\alpha-\beta$ model.

\subsection{Commdelay model variants}

While the different variants of the commdelay model (in the middle column of Table \ref{tab:tax2}) are further away from the main focus of the paper, we briefly show how these can be defined in a straightforward way. These $4$ commdelay variants are indeed different from each other, and hence the properties on the vertical axis of the table (simultaneous computation and communication, and barrier synchronization) also have a significant effect in this simplified model where data volume is not measured.

To define the different commdelay variants, it is easier to specifically include the communication steps in the schedule; that is, just like in SPD, the schedule again contains a $\Gamma \subset V \times [P] \times [P] \times \mathbb{Z}^+$, where $(v, p_1, p_2, t_0) \in \Gamma$ indicates that $v$ is sent from $p_1$ to $p_2$ in the interval $(t_0, t_0+g]$. The standard commdelay model can of course also be redefined this way; we must require that any $(v, p_1, p_2, t_0) \in \Gamma$ must have $t(v) \leq t_0$, and for edges $(u,v) \in E$ with $\pi(u) \neq \pi(v)$, we need a $(u, \pi(u), \pi(v), t_0) \in \Gamma$ with $t_0+g < t(v)$. In contrast to SPD, there is no condition requiring that the communication time intervals on a processor cannot intersect. One can observe that this definition is indeed equivalent to the original commdelay model if we construct $\Gamma$ by adding the $4$-tuple $(u, \pi(u), \pi(v), t(u)) \in \Gamma$ for each $(u,v) \in E$ with $\pi(u) \neq \pi(v)$.

To define a commdelay model with sync points, we essentially assume that the start of each communication step requires a barrier synchronization between all processors; this essentially means that no processors can be in the middle of a computation or a communication step at this point. If each computation has a unit work weight, then every integer time $t_0$ is naturally a point where no computation is happening currently; as such, for commdelay with sync points, we only need to ensure that no processor is in the middle of a communication step when another communication begins. Formally, this means adding the following condition to our schedule: for any two $(v_1, p_{1,1}, p_{2,1}, t_1), (v_2, p_{1,2}, p_{2,2}, t_2) \in \Gamma$, we must have either $t_1 \leq t_2$ or $t_1 \geq t_2+g$.

The commdelay model with phases, which has already been studied in \cite{CDphases}, is essentially identical to BSP, but every communication phase lasts for exactly $g$ time units; as such, its definition can indeed be obtained from BSP by setting $C_{comm}\,\!^{(s)}=g$ for any superstep. One can also add a latency $L$ to this model, but this essentially only creates a new delay parameter $g'=g+L$. Alternatively, the model can be defined by starting from commdelay with sync points, and further requiring that if $(v, p_1, p_2, t_0) \in \Gamma$, then for all $v \in V$ we have $t(v) \leq t_0$ or $t(v) > t_0+g$.

In case of the third column of the table we once again have two possible interpretations, similarly to those discussed in Section~\ref{app:subs:ABsubset}. If no synchronization is required at all, then we can get a commdelay model with ``timeouts'', i.e.\ where a processor has to stop computing while it sends/receives. This is essentially again the $\alpha-\beta$ model, but now with $\beta=0$; we can define this by starting from commdelay, and adding the condition that if $(u, p_1, p_2, t_0) \in \Gamma$, then we cannot have a $v \in V$ with $\pi(v) \in \{ p_1, p_2\}$ and $t_0<\tau(v) \leq t_0+g$. On the other hand, if we require subset synchronization between the communicating nodes, then we can also define a subset-commdelay model analogously to subset-BSP by changing the communication costs to $C_{comm}\,\!^{(\zeta_i)}=g$ in subset-BSP.

\subsection{BSP model extensions}

Real-world computations often consist of nodes with significantly different computational or communication cost. As such, one natural extension is to consider computation (i.e.\ work) weights for nodes. That is, we assume that we also have a \emph{work weight} function $w_{work}: V \rightarrow \mathbb{Z}^+$ that defines the amount of time required to execute each node; similarly to other works, we assume for simplicity that these weights are integers. Most scheduling models are easy to extend with these work weights. E.g.\ in case of our BSP, the work cost of a superstep is then understood as the sum of node weights in the superstep, i.e.\ $ C_{work}\,\!^{(p, s)} = \sum_{v \in H^{(s,p)}} \, w_{work}(v)$.

Besides heterogeneous work weights, the size of the output data of nodes can also differ significantly, and hence it can take a different amount of time to communicate specific values between the processors. As such, another extension is to assume \emph{communication weights} $w_{comm}: V \rightarrow \mathbb{Z}^+$ which describe the data size (e.g.\ number of bytes) to be communicated in case of each node (alternatively, some works consider this to be a function $w_{comm}: E \rightarrow \mathbb{Z}^+$; we briefly discuss this in Section \ref{app:subs:edgebased}). The communication cost can again be adjusted accordingly, by multiplying each term in $C_{sent}\,\!^{(s,p)}$ and $C_{rec}\,\!^{(s,p)}$ by the respective $w_{comm}(v)$ weight.

Besides work/communication weights, another extension of the base model is to allow \emph{duplications} (also called replications). That is, our initial definitions assign every computational task to a single processor and time step; in contrast to this, the extension with duplication also allows us to execute the same node multiple times, on different processors. Formally, a schedule in this case becomes a function $\tau: V \times [P] \rightarrow \mathbb{Z}^+ \cup \{\emptyset\}$ such that for all $v \in V$ there exists a $p \in [P]$ with $\tau(v,p) \neq \emptyset$. Naturally, these duplication steps come with more total time spent on computation, but on the other hand, they can reduce the required amount of communication between the processors, hence decreasing the scheduling cost altogether (see Proposition \ref{lem:recomp} for examples).

\subsection{Edge-based communication} \label{app:subs:edgebased}

Regarding the BSP model, we also point out that (as per our definitions in Section \ref{sec:BSP}) we assume a \textit{node-based} communication model throughout the paper. That is, each communication step corresponds to sending the output of some node $u$ to another processor, which is then saved on this other processor indefinitely. In particular, if $u$ is computed on $p_1$, and it has many outgoing edges $(u, v_1), ..., (u, v_k) \in E$ to some nodes $v_1, ..., v_k$ which are all assigned to $p_2$, then this only incurs a single communication step in our model. This is in line with several recent works that also assume such a node-based communication model~\cite{hyperDAG, DAH}.

In contrast to this, some other works (e.g.\ ~\cite{SPD}) assume an edge-based model, where the situation above incurs $k$ distinct communication steps, one for each of the edges $(u, v_i)$. This edge-based model might be motivated by a setting where the memory of processors is limited; however, in general, it can overestimate the real communication cost by a factor $k$.

Note that if we have communication weights, then the natural choice for $w_{comm}$ is a function $V \rightarrow \mathbb{Z}^+$ in this node-based model. On the other hand, other works assuming an edge-based model sometimes use a function $w_{comm} : E \rightarrow \mathbb{Z}^+$ instead.

For completeness, we also provide a definition of the edge-based alternative, i.e.\ the BSP scheduling problem with edge-based communication. In this edge-based model, the communication steps $\Gamma$ are defined as a set of pairs $((u,v),s) \in E \times [S]$ (instead of $4$-tuples), indicating that the data corresponding to edge $(u,v)$ is transferred from $\pi(u)$ to $\pi(v)$ in the communication phase of superstep $s$. In case of $\pi(u) \neq \pi(v)$, validity condition (i) of the BSP schedule will now state that we must have $((u,v), s) \in \Gamma$ for some $s < \tau(v)$. Validity condition (ii) is also changed accordingly: if $((u,v), s) \in \Gamma$, then we need to have $\tau(u) \leq s$. Finally, the definition of communication costs is modified as follows:
\begin{gather*}
C_{sent}\,\!^{(s,p)}=|\{ ((u,v),s_1) \in \Gamma \, | \, \tau(u)=p, s_1=s \}| \, , \\
C_{rec}\,\!^{(s,p)}=|\{ ((u,v),s_1) \in \Gamma \, | \, \tau(v)=p, s_1=s \}| \, , \\
\end{gather*}

\section{Proofs for Section \ref{sec:tax}} \label{app:taxproofs}

This section provides a detailed discussion of our claims regarding the relations between the different models in the taxonomy.

\subsection{Communication time intervals and $h$-relations}

Regarding the last column of Table \ref{tab:tax1}, note that BSP measures data volume through the simpler notion of $h$-relations, while SPD assigns communications to concrete time intervals. It is a natural question whether these two approaches indeed describe the same cost function (for $L _{\!} = _{\!} 0$ in BSP). That is, to justify the use of $h$-relations in BSP, we need to show that if each processor $p$ sends $C_{sent}\,\!^{(s,p)}$ and receives $C_{rec}\,\!^{(s,p)}$ values, then the communications can indeed be scheduled into $C_{comm}\,\!^{(s)}$ time slots according to the rules of SPD, i.e.\ such that each processor only sends and receives a single value in a step. More specifically, assume that for each pair of processors $p \neq p'$, we have $\Lambda(p, p')$ values (all available at time $t$) that we need to send from $p$ to $p'$. Naturally, $\Lambda(p, p)=0$ for all $p \in [P]$. We show that, at least in the unweighted case, we can indeed communicate all these values by the time $t+g \cdot C_{comm}\,\!^{(s)}$ according to the rules of the SPD model, where
\[ C_{comm}\,\!^{(s)} = \max_{p, p' \in [P]} \Lambda(p, p') \]
as in the definition of BSP. This is easy with standard graph-theoretic tools, by decomposing a bipartite multigraph into perfect matchings.

\begin{lemma} \label{lem:matching}
If any processor sends and receives at most $C$ values in a communication phase, then the phase can be scheduled into $C$ time slots (according to the rules of SPD).
\end{lemma}

\renewcommand*{\proofname}{Proof.}
\begin{proof}
Consider a bipartite multigraph on $P+P$ nodes $u_1, ..., u_P$ and $v_1, ..., v_P$, where for each $p, p' \in [P]$, we draw $\Lambda(p, p')$ distinct edges from $u_p$ to $v_{p'}$. Until there is a node which has degree less than $C_{comm}\,\!^{(s)}$, we can select an arbitrary such node in both $\{ u_1, ..., u_P \}$ and $\{ v_1, ..., v_P \}$, and add an edge between them (we will call this an ``artificial'' edge). After several such steps, the graph becomes $C_{comm}\,\!^{(s)}$-regular.

It is known that such a regular bipartite graph can be decomposed into $C_{comm}\,\!^{(s)}$ perfect matchings. In particular, the condition of Hall's theorem is always satisfied for regular graphs, so there exists a perfect matching; if we remove these edges, the remaining graph is again regular. The fact that our graph is a multigraph does not affect this method at all, since a multiplicity of a given edge contributes the same number of edges to a subset $U \subseteq \{u_1, ..., u_P \}$ and its neighborhood when applying Hall's theorem.

These perfect matchings naturally divide the communications into $C_{comm}\,\!^{(s)}$ consecutive rounds where every processor only sends and receives a single value in each round. As such, we can transfer the values corresponding to the first matching in the time interval $(0, g]$, the values in the second matching in the interval $(g, 2\cdot g]$, and so on. When an edge from $u_{p_1}$ to $v_{p_2}$ is artificial, then $p_1$ simply sends no value to $p_2$ in that round.
\end{proof}

The same claim does not hold however in case of communication weights. Consider the case when $\Lambda(p_1, p_2)=\Lambda(p_2, p_3)=\Lambda(p_3, p_1)=3$, each of these induced by a single value of communication weight $w_{comm}=3$, and we furthermore have a processor $p_4$ that needs to send a single value of communication weight $1$ to all of $p_1$, $p_2$ and $p_3$. The cost of the $h$-relation here is $C_{comm}\,\!^{(s)}=4$, since $p_1$, $p_2$ and $p_3$ receive $4$ units of data in total. However, we are unable to schedule this into the slots $(0, g],(g, 2g],(2g, 3g],(3g, 4g]$ in the SPD model: the values sent between $p_1$, $p_2$ and $p_3$ each occupy a continuous interval of length $3g$, so each of $p_1$, $p_2$ and $p_3$ can only receive values from $p_4$ either in $(0, g]$ or in $(3g, 4g]$. However, $p_4$ has to send a value to all three processors, which is not possible in these two intervals.

In case of communication weights, the same equivalence would only hold if the communication steps are preemptive, i.e. we can split $w_{comm}(v)$ into smaller integer-sized data units; in this case, we could apply the same technique as in the unweighted case. However, the study of this preemptive case is beyond the scope of this paper.

\subsection{Lemma~\ref{lem:horizOpt}}

The upper bound of factor $P$ in Lemma~\ref{lem:horizOpt} follows directly from Proposition~\ref{prop:approx}. As for the upper bound of a factor $(1+g)$, consider a schedule $(\pi, t)$ in the classical model, and assume that after each step, we keep all the processors idle for $g$ time steps and communicate every value computed in the given time step to all other processors. In other words, the consider the timing $t'(v)=(1+g) \cdot t$ for all $v \in V$; this is a valid schedule of the same DAG in the commdelay model, and its makespan is at most a factor $(1+g)$ larger.

For the matching lower bounds, consider a layered DAG of length $\ell$ and width $k$, i.e.\ a DAG consisting of $n=k \cdot \ell$ nodes indexed $v_{i,j}$ for $i \in [\ell], j \in [k]$, such that every node in the $i$-th layer has an edge to every node in the $(i+1)$-th layer, i.e.\ $(v_{i, j_1}, v_{(i+1),j_2}) \in E$ for all $i \in [\ell-1]$ and $j_1, j_2 \in [k]$. With $P=k$ processors, we have $\textsc{OPT}_{class}=\ell$ in this DAG, since computation can be parallelized completely. However, in the commdelay model, we have to wait $g$ time units between each consecutive pair of layers unless both layers are entirely computed on the same processor. As such, if $P \leq (1+g)$, then the optimal strategy is to execute the entire DAG on a single processor, resulting in a makespan of $P \cdot \ell$. If $P > (1+g)$, then it is indeed better to wait for $g$ time units, which gives an execution time of $(\ell-1) \cdot (1+g) +1$ (since we do not have to wait after the last layer). In the first case, we get a ratio of $P$ exactly, whereas in the second case, we get a ratio of
\[ \frac{(\ell-1)(1+g) +1}{\ell}=(1+g) - \frac{g}{\ell} \, \]
which is larger than $(1+g)-\varepsilon$ for any $\varepsilon>0$ if $\ell$ is large enough.

For the lower bound for classical scheduling and BSP, we can even achieve the factor $P$ lower bound when $(1+g)<P$. For this, let us add even more edges to the construction above, such that $(v_{i_1, j_1}, v_{i_2,j_2}) \in E$ for all $i_1, i_2 \in [\ell]$, $i_1<i_2$ and $j_1, j_2 \in [k]$. Here it is intuitively even more clear that the optimum solution uses a single processor only. In particular, if two nodes anywhere in the layers $\frac{\ell}{2}+1$, ..., $\ell$ are assigned to different processors, then one of these processors must receive at least $\frac{1}{4} \cdot \ell \cdot P$ values altogether from the first half of the DAG, resulting in a communication cost of $\frac{1}{4} \cdot \ell \cdot P \cdot g$; this is already larger than $\ell \cdot P$ if $g \geq 4$. As such, the second half of the DAG must be assigned to a single processor $p_1$. Assume that there is a solution with cost below $\ell \cdot P$ that also uses another processor besides $p_1$; we will even further reduce the communication cost of this solution by disregarding every communication cost in our analysis except the values received by $p_1$. We can iterate through the nodes that are assigned to a different processor than $p_1$ (from layer $\frac{\ell}{2}$ to layer $1$), and reassign them one-by-one to $p_1$; each such step may increase computation cost by $1$, but decreases communication cost by $g$, so it decreases the cost in total. After at most $\frac{\ell}{2} \cdot P$ steps, everything is scheduled in $p_1$; however, this solution has cost $\ell \cdot P$, which is a contradiction.

Finally, for a lower bound for the case of commdelay and BSP, consider again a layered DAG of length $\ell$ and width $P$, but now with layer $i$ only having edges to the nodes in layers $\{ i+g+1, ..., \ell \}$, i.e.\ we have $(v_{i_1, j_1}, v_{i_2,j_2}) \in E$ for all $i_1, i_2 \in [\ell]$, $i_1+g<i_2$ and $j_1, j_2 \in [k]$. In the commdelay model, since all values are available on all processors after $g$ time units, we can parallelize each layer of this DAG perfectly and achieve a total cost of $\ell$. In case of BSP, however, we can apply the same proof as above to show that there is again no better schedule than computing all nodes on a single processor $p_1$, at a cost of $P \cdot \ell$. This again amounts to a factor $P$ difference.

We note that the same lower bound construction techniques may also work if we replace the BSP model by SPD; however, in this case, it becomes more technical to prove that an SPD schedule of lower cost does not exist.

\subsection{Theorem~\ref{th:optcosts}}

We now present the proof of Theorem~\ref{th:optcosts}. We note that the same upper bound proof and lower bound construction techniques can be used to show the same relations between the different commdelay model variants (middle column of Table~\ref{tab:tax2}).

\subsubsection{Upper bound}

We begin by showing the upper bound part of Theorem~\ref{th:optcosts}. Consider an SPD schedule of makespan $t_{max}$ on a given DAG; we show how to turns this into a BSP schedule of cost at most $2 \cdot t_{max}$, hence proving $\textsc{OPT}_{BSP} \, \leq \, 2 \cdot \textsc{OPT}_{SPD}$. The remaining upper bounds follow from this, since all the optimum costs are at most as large as $\textsc{OPT}_{BSP}$ and at least as large as $\textsc{OPT}_{SPD}$.

Assume we have a DAG and an SPD schedule of this DAG; then let us divide this to time intervals of length $g$, i.e.\ $(0,g]$, $(g,2g]$, ..., up to $(t_{max}-g,t_{max}]$ (the last interval may be shorter if $t_{max}$ is not divisible by $g$). We turn each such interval $(t_0, t_0+g]$ into a separate superstep $s$ in our BSP schedule. In this superstep $s$, the computational phase on processor $p$ executes the same nodes $v$ that have $\tau(v)=p$ and $t(v) \in (t_0, t_0+g]$ in our SPD schedule; since there are at most $g$ such nodes, we will have $C_{work}\,\!^{(s)} \leq g$. The communication phase of superstep $s$, on the other hand, will send all the communication steps $(v, p_1, p_2, t') \in \Gamma$ that have $t_0 \leq t' < t_0+g$; since the sending and receiving intervals in SPD have length $g$ and are disjoint, this means that each processor can only send and only receive a single value in superstep $s$. As such, we have $C_{comm}\,\!^{(s)} \leq 1$, and hence $C\,\!^{(s)} \leq 2 \cdot g$. This means that the total cost of the BSP schedule is indeed at most
\[ \frac{t_{max}}{g} \cdot 2 \cdot g = 2 \cdot t_{max} \, . \]

We note that if the last interval has length smaller than $g$, then it results in a computational phase with cost equal to the length of the interval, and an empty communication phase, since we cannot begin transmitting any more values at this point. As such, the corresponding SPD and BSP costs are equal for this interval.

\subsubsection{Lower bound of $(2-\varepsilon)$}

Let us now analyze the lower bound for the model pairs
$(\textsc{OPT}_{SPD\,}\!\leq\!\textsc{OPT}_{\beta})\,$, $\;(\textsc{OPT}_{SPD\,}\!\leq\!\textsc{OPT}_{BSP})$, $(\textsc{OPT}_{mBSP\,}\!\leq\!\textsc{OPT}_{BSP})\;$ and $\;(\textsc{OPT}_{mBSP\,}\!\leq\!\textsc{OPT}_{\beta})$. For these cases, it suffices to develop a DAG construction where computation and communication can be completely parallelized in the models where this is permitted.

Assume $P>2$, and consider another parameter $k$, and let $\ell \:= k \cdot g$. Let us construct a layered DAG consisting of $n=P \cdot (\ell + 1)$ nodes indexed $v_{i,j}$ for $i \in \{0\}\!\cup\![\ell], j \in [P]$, having the edges $(v_{(i-1),j},v_{i,j}) \in E$ for all $i \in [\ell]$, $j \in [P]$, as well as the edges
 \[ (v_{(i-1) \cdot g),j} \,, \: v_{i \cdot g+1,\,((j+1) \text{ mod } P)}) \in E \]
for all $i \in [k]$, $j \in [P]$, where we now understand the modulo operation to return a remainder from the set $[P]$ for simplicity.

The DAG above can be scheduled with a cost of $(\ell+1)$ in any of the models that allow simultaneous computation and communication, i.e.\ in the SPD and maxBSP models. More specifically, for all $p \in [P]$, we can assign all the nodes $v_{i, p}$ to processor $p$, and in the SPD model, assign each node $v_{i,j}$ to time step $t(v_{i,j})=i+1$, whereas in the maxBSP model, assign nodes $v_{0,j}$ to superstep $1$, and all other nodes $v_{i,j}$ with $i \in \{ (s-2) \cdot g + 1, ..., (s-1) \cdot g \}$ to superstep $s$ for $s \in \{2, ..., k+1 \}$. This ensures that each processor $p$ can keep computing the given nodes $v_{i, p}$ in order without any interruption. In SPD, the value of $(v_{(i-1) \cdot g,j})$ is transferred from processor $j$ to processor $((j+1) \text{ mod } P)$ in the interval $((i-1) \cdot g+1, i \cdot g+1]$ for all $i \in [k]$, $j \in [P]$, and in maxBSP, the same value is transferred in the communication phase of superstep $s=i+1$.

On the other hand, consider the models that do not allow simultaneous computation and communication: BSP, $(\alpha-\beta)$ with $\alpha=0$, or subset-BSP. In these models, the optimum solution is to once again assign all the nodes $v_{i, p}$ to processor $p$, but now before every node $v_{i \cdot g, p}$, we need to allocate $g$ time units to communicate the values between the processors. As such the optimum cost in this case is $(1+2\cdot g \cdot k)$, hence the ratio between the two optima is
\[  \frac{1 + 2 \cdot g \cdot k}{1 + g \cdot k} \, , \]
which is indeed arbitrary close to $2$ for $k$ large enough.

\subsubsection{Lower bound of $(\frac{3}{2}-\varepsilon)$}

The lower bound construction for the model pairs
$(\textsc{OPT}_{SPD\,}\!\leq\!\textsc{OPT}_{mBSP})\,$, $(\textsc{OPT}_{\beta\,}\!\leq\!\textsc{OPT}_{BSP})\;$ and $\;(\textsc{OPT}_{\beta\,}\!\leq\!\textsc{OPT}_{mBSP})$ is slightly more technical. We first outline the main idea. Our DAG will consist of $g$ distinct layered subDAGs; we will number these by $k \in [g]$. Each of the components can individually be scheduled with a cost of $2 \cdot g + 1$; however, for this, the $k$-th component needs to execute computations in times steps $\{ 1, ..., k \}$, then a communication for $g$ time units, and then again computations for the time steps $\{ g + k + 1 , ...,  2 \cdot g + 1 \}$. That is, the first subDAG requires communications in the time interval $(1, g+1]$, the second one in the time interval $(2,g+2]$, and so on, until $(g,2\cdot g]$. However, this is not possible with barrier synchronization; if synchronization is required, the best we can do is as follows: (i) execute the computation steps before the required communication on all processors, in time $g$, then (ii) execute all the communications together for another $g$ time units, and finally (iii) compute all the post-communication nodes in another $g$ time units. This results in a cost of $3 \cdot g$, and hence a ratio of
\[ \frac{3 \cdot g}{ 2 \cdot g + 1} \, , \]
which is arbitrarily close to $\frac{3}{2}$ in an example where the parameter $g$ is large enough.

Let us introduce a parameter $k_0$ which will be the width of each component. Each component consists of $k_0 \cdot (g+1)$ nodes indexed $v_{i,j}$ for $i \in \{0\}\!\cup\![g], j \in [k_0]$. We add the edges $(v_{(i-1),j},v_{i,j}) \in E$ for all $i \in [g]$, $j \in [k_0]$, as well as the edges
 \[ (v_{k,j} \,, \: v_{k+g+1,\,((j+1) \text{ mod } k_0)}) \in E \]
in the $k$-th component for all $j \in [k_0]$, where the modulo operation once again returns a remainder from the set $[k_0]$. Finally, we select $P=k \cdot k_0$, i.e.\ there is a sufficient number of processors to assign a separate processor $p$ to the values $v_{i,j}$ in every component for every fixed $j \in [k_0]$.

This DAG can be scheduled at a cost of $2 \cdot g + 1$ in the SPD, the $\alpha-\beta$ with $\alpha=0$ and subset-BSP models, as described above: in the $k$-th component, we can execute the computations $v_{1,j}$, ..., $v_{k,j}$ in the time steps $\{ 1, ..., k \}$, then communicate the necessary values for $g$ time units, and then again compute $v_{k+1,j}$, ..., $v_{g+1,j}$ in time steps $\{ g+k+1, ..., 2 \cdot g + 1 \}$.

On the other hand, in the models with barrier synchronization, we need to wait for the computations $v_{1,j}$, ..., $v_{k,j}$ to finish in \textit{all} components before starting a new phase/superstep (in BSP/maxBSP, respectively), then execute all communications simultaneously, and then execute the remaining computations in a following phase/superstep. This results in a total cost of $3 \cdot g$. Note that the ability to execute computations and communications simultaneously in maxBSP does not offer any benefit in this DAG, since each of the nodes $v_{k+1,j}$ require a communication step. Alternatively, in maxBSP, it can also be optimal to have two separate communication phases: for some $k \in [g]$, we organize a separate communication phase for the components $\{1, ..., k \}$ in the time interval $(k;k+g]$, and then another one for the components $\{k+1, ..., g \}$ in the interval $(k+g;k+2 \cdot g]$. However, this again requires an initial superstep with $k$ computations in component $k$ and a final superstep with $(g-k)$ computations in component $(k+1)$, and hence it also has a cost of $3 \cdot g$. Finally, another possible solution would be to reduce the number of processors used (per component) in order to avoid communication. However, if we select $k_0 \geq 3$, then computing each component on a single processor already becomes suboptimal, and if we use at least $2$ processors, then at least one communication step is already required within each component.

\subsection{Propositions~\ref{lem:PRAMweights} and \ref{lem:recomp}}

We now outline the proofs of Propositions~\ref{lem:PRAMweights} and \ref{lem:recomp} from Section~\ref{sec:tax}.

When referring to the classical scheduling model with work weights, we assume that a schedule must fulfill slightly different properties: we need to have $\nexists \, u, v \in V$ with $\pi(u)=\pi(v)$ such that the intervals $(t(u), t(u)+w_{work}(u)]$ and $(t(v), t(v)+w_{work}(v)]$ intersect, and $\forall (u,v) \in E$ we need to have $t(u)+w_{work}(u) \leq t(v)$. The makespan in this case is understood as $\max_{v \in V\,} t(v)+w_{work}(v)$. If we add barrier synchronization to this, we require that for each $(u,v) \in E$ with $\pi(u) \neq \pi(v)$ there exists a possible synchronization point $t_0$ within the interval $[t(u)+w_{work}(u), t(v)]$ such that $\nexists \, v_0 \in V$ with $t_0 \in (t(v_0), t(v_0)+w_{work}(v_0)]$, i.e.\ no processor is computing at time $t_0$.

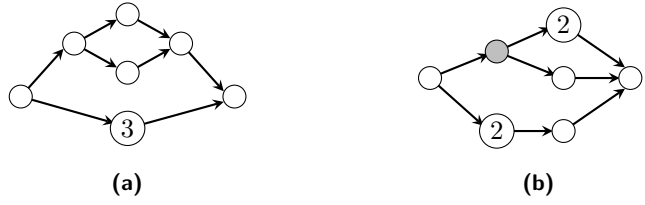
\begin{figure}
    \centering
    \begin{subfigure}[b]{0.23\textwidth}
        \centering
        \begin{tikzpicture}

        \begin{scope}[thick, arrows=-stealth]
	\draw (0pt,17pt) -- (16pt,35pt);
        \draw (20pt,37pt) -- (36pt,46pt);
        \draw (20pt,37pt) -- (36pt,28pt);
        \draw (40pt,26pt) -- (56pt,35pt);
        \draw (40pt,48pt) -- (56pt,39pt);
        \draw (60pt,37pt) -- (76pt,19pt);
        \draw (0pt,17pt) -- (35pt,7pt);
        \draw (40pt,5pt) -- (76pt,15pt);
        \end{scope}
	
	%---
	
	\draw[black, fill=white] (0pt,17pt) circle (1.0ex);
	\draw[black, fill=white] (20pt,37pt) circle (1.0ex);
	\draw[black, fill=white] (40pt,26pt) circle (1.0ex);
	\draw[black, fill=white] (40pt,48pt) circle (1.0ex);
	\draw[black, fill=white] (60pt,37pt) circle (1.0ex);
	\draw[black, fill=white] (80pt,17pt) circle (1.0ex);
	\draw[black, fill=white] (40pt,5pt) circle (1.5ex);
        \node[anchor=center] at (40pt,5pt) {$3$};

\end{tikzpicture}
        \vspace{-9pt}
        \captionsetup{justification=centering}
        \caption{}
        \label{fig:classWW}
    \end{subfigure}
    \hspace{0.14\textwidth}
    \begin{subfigure}[b]{0.23\textwidth}
        \centering
        \begin{tikzpicture}

        \begin{scope}[thick, arrows=-stealth]
	\draw (0pt,20pt) -- (20pt,2pt);
        \draw (0pt,20pt) -- (22pt,29pt);
        \draw (25pt,0pt) -- (46pt,0pt);
        \draw (25pt,30pt) -- (47pt,21pt);
        \draw (25pt,30pt) -- (45pt,39pt);
        \draw (50pt,0pt) -- (73pt,16pt);
        \draw (50pt,20pt) -- (71pt,20pt);
        \draw (50pt,40pt) -- (73pt,24pt);
        \end{scope}
	
	%---
	
	\draw[black, fill=white] (0pt,20pt) circle (1.0ex);
	\draw[black, fill=white] (25pt,0pt) circle (1.5ex);
	\draw[black, fill=lightgray] (25pt,30pt) circle (1.0ex);
	\draw[black, fill=white] (50pt,0pt) circle (1.0ex);
	\draw[black, fill=white] (50pt,20pt) circle (1.0ex);
        \draw[black, fill=white] (50pt,40pt) circle (1.5ex);
	\draw[black, fill=white] (75pt,20pt) circle (1.0ex);
        \node[anchor=center] at (25pt,0pt) {$2$};
        \node[anchor=center] at (50pt,40pt) {$2$};

\end{tikzpicture}
        \vspace{3pt}
        \captionsetup{justification=centering}
        \caption{}
        \label{fig:recomp}
    \end{subfigure}
    \vspace{-6pt}
    \caption{Example DAGs with work weights in the classical scheduling model with barrier synchronization, for $P_{\!}=_{\!}3$. The weights are only shown for nodes of non-unit weight.}
\end{figure}

\subsubsection{Proof of Proposition~\ref{lem:PRAMweights}}

Consider the weighted DAG in Figure~\ref{fig:classWW} with $P\!=\!3$ processors. Let us refer to the bottom node with weight $3$ as $u$. Without barrier synchronization, if we assign the topmost node to $p_1$, node $u$ to $p_2$ and all the other nodes to $p_3$, the DAG admits a straightforward schedule with a makespan of $5$. On the other hand, when barrier synchronization is required, the same DAG will have an optimal makespan of $6$ in classical scheduling. If we set $t(u)=1$, then no communication is possible in the interval $(1,4]$, and hence it is not possible use the other two processors to parallelize the upper part of the DAG between two processors. As such, we will still have $2$ uncomputed nodes at time step $4$, increasing the best makespan to $6$. Alternatively, setting $t(u)=2$ also results in a makespan of $6$.

One can also generalize this into a construction where the optimum cost essentially becomes twice as large when barrier synchronization is introduced. Let $k$ be a constant parameter, and consider the construction from the proof of Lemma~\ref{lem:horizOpt} with width and length $k$, i.e. a DAG sorted into $k$ layers of width $k$ such that every node in layer $i \in [k-1]$ has an edge to every node in layer $(i+1)$. Besides this, let us add an isolated node $u$ with $w_{work}(u)=k$ (we can also add a common source and sink node to the DAG if we prefer it to be connected). Finally, let us select $P=k+1$.

Without barrier synchronization, this DAG can be scheduled with a makespan of $k$. However, if synchronization is required, then whenever node $u$ is computed, the remaining processors are not able to communicate any values within each other, so they must stay idle for $(k-1)$ time steps (or a single processor must compute an entire layer, which essentially has the same effect, resulting in a delay of $(k-1)$). Hence the best makespan with synchronization is $2k-1$. The ratio between the two optima is $\frac{2k-1}{k}$; we can make this higher than $2-\varepsilon$ for any $\varepsilon>0$ with a sufficiently large choice of $k$.

\subsubsection{Proof of Proposition~\ref{lem:recomp}}

For the first point, consider a simple DAG consisting of two directed paths of length $\ell$, and a common source node $v_0$ that has an edge towards the initial node of both paths. For $\ell$ large enough, any of the models with communication assigns the two paths to two different processors $p_1$ and $p_2$, otherwise the cost is much larger. Without latency, this results in a work cost of $\ell+1$ and a communication cost of $1 \cdot g$ in each of the models, so the total cost/makespan is $\ell+g+1$. On the other hand, if we can duplicate $v_0$ on both $p_1$ and $p_2$, then no communication is required, and we easily obtain a cost of $\ell+1$, which is smaller.

In case of classical scheduling without work weights, each integer point in time allows us to communicate the computed values to all processors (even with barrier synchronization). As such, if a node $v$ is computed at time step $t$ at the earliest, then its value can be distributed to all processors by the beginning of time step $(t+1)$, so there is no motivation to compute $v$ on further processors either in time step $t$ or later. Similarly, in case we have work weights but no barrier synchronization, if $t(v)+w_{work}(v)$ is the earliest time when one of the processors finishes computing $v$, then the value of $v$ already becomes available on all other processors by this time $t(v)+w_{work}(v)$, so there is no motivation for any other processor to begin computing $v$. As such, duplication cannot decrease the optimum cost in these cases.

For the case when we have classical scheduling with both work weights and barrier synchronization, consider Figure~\ref{fig:recomp} with $P=3$. Without barrier synchronization, the DAG admits a straightforward schedule of makespan $5$. However, with synchronization, a makespan of $5$ is not possible here: the top and bottom paths are both critical, so a makespan of $5$ is only possible if no communication happens in the intervals $(1,3]$ and $(2,4]$. This implies that the value of the grey node, which becomes available at $t=2$ earliest, cannot be sent to a third processor until $t=4$, and as such, the node on the middle path cannot be computed by a third processor. If we only start computing this node at $t=4$ or on one of the first two processors, then the best makespan is $6$. On the other hand, if duplication is allowed, then we can compute the bottom path on $p_1$, duplicate the grey node on both $p_2$ and $p_3$, and then compute the top and middle paths on $p_2$ and $p_3$, respectively, again achieving a makespan of $5$.

\section{Complexity for specific DAG classes} \label{app:hardness}

\subsection{Proof of Theorem~\ref{th:chains}}

\subsubsection{Chain DAGs} We first show that for chain DAGs, there exists an algorithm that finds the best BSP schedule, and has a running time that is polynomial in $n$.

\renewcommand*{\proofname}{Proof of Theorem~\ref{th:chains} for chain DAGs}
\begin{proof}
The key observation is that the optimal BSP schedule in chain DAGs always consists of either $1$, $2$, ..., or $P$ supersteps; more specifically, the optimum uses at most $(P-1)$ rounds of communication (for clarity, we prove this separately below in Lemma~\ref{lem:maxPss}). We then consider each case separately, i.e.\ the best possible solution assuming that the number of supersteps is $1$, ..., or $P$, and then select the best of these to find the optimum.

Now assume that the number of supersteps is some fixed $S \in [P]$. More specifically, we will assume that the total communication cost is $(S-1)$, and then for convenience, we assume that these $(S-1)$ rounds of communication indeed happen in different supersteps. If one of the supersteps in the optimum actually had a communication cost of more than $1$, then we can split this into several supersteps of communication cost $1$ each, and simply leave the computation phase of the newly created extra supersteps empty.

If the communication cost of a BSP schedule is $S$, then this means that there are at most $P \cdot S$ values communicated altogether. In fact, this implies that we can iterate over all the possible values $v$ and source and target processors $p_1$, $p_2$ that took part in this communication: with $n$ nodes and $P$ processors, there are only $n \cdot P^2+1$ values for each communication step (including the possibility that this communication step is not used), and hence the number of possible configurations for communication is upper bounded by $(n \cdot P^2)^{P \cdot S}$, which is still polynomial in $n$. We will iterate through all these configurations, find the best possible cost for each of them, and simply return the best of these. Note that many of the configurations are actually invalid, e.g.\ if $p_1=p_2$, or $v$ is sent multiple times from $p_1$ to $p_2$, but filtering these out is just a technicality.

Hence consider a given configuration for communication. Note that for each of the chains that have one (or more) of the communicated nodes, this already determines the assignment of each node in the chain to a processor: if a node $v$ is sent from $p_1$ to $p_2$, the only possible reason for this in a chain DAG is that the part of the chain up to $v$ was executed on $p_1$, whereas the part starting after $v$ on $p_2$. For convenience we will even assume more than this: that for these chains, we also know the assignment $\tau$ for each node. Indeed, since there are at most $P \cdot S$ chains with communications, and splitting each chain (of length at most $n$) between the $S$ supersteps monotonically can happen in at most ${(n+S-1) \choose S}=O(n^S)$ ways, this again increases the number of possible cases to study by a factor $O(n^{P \cdot S^2})$ that is polynomial in $n$.

Note that in each case we are studying, the communication costs for each superstep are already fixed, and the analyzed chains already contribute a known amount of work cost to each processor and superstep. As such, our only goal is to consider the remaining chains (\textit{free chains}) that take part in no communication (and hence each of them needs to be assigned to a single processor entirely), and distribute them between the processors and supersteps in a way that minimizes the total computation cost of the schedule.

This can be accomplished with a method that is essentially a more complex variant of the well-known dynamic programming solution for the multiway number partitioning problem~\cite{GJ79}. That is, we create a boolean table $T$ of $(P \cdot S + 1)$ dimensions, where the first $P \cdot S$ dimensions correspond to a combination of a given processor and superstep, and the last dimension corresponds to the free chains. The first $P \cdot S$ dimensions of $T$ will have cells labeled from $0$ to $n$, whereas the last dimension has cells labeled from $0$ to the number of free chains. A given cell of the table indexed by $(1, 1)=a_{1,1}$, ... ,  $(P, S)=a_{P, S}$ and by $i$ in the last dimension will describe whether it is possible to assign the first $i$ free chains to processors and supersteps such that the number of nodes in each processor $p \in [P]$ and superstep $s \in [S]$ is exactly $a_{p,s}$. Initially, we set each cell of the table to false, except the single cell that describes the exact number of nodes sorted into each processor/superstep earlier due to communications, and has $0$ as its last index.

We can then fill this table out systematically, iterating over the last index $i$ (i.e.\ iterating through the free chains). Given a current chain, there are $P$ different processors that the chain could be assigned to, and for a chain of length $\ell$, there are ${(\ell+S-1) \choose S} = O(\ell^S) \leq O(n^P)$ ways to distribute it into the $S$ supersteps. For each of these possible assignments, we update the corresponding values in the subtable indexed by $(i+1)$ in the last dimension. That is, if we have an assignment $(1, 1)=b_{1,1}$, ... ,  $(P, S)=b_{P, S}$, and some cell $T\left( (1, 1)=a_{1,1} ; \,  ... ; \, (P, S)=a_{P, S}; \;  i \right)$ of the table was set to true, then we also set $T\left( (1, 1)=a_{1,1}+b_{1,1} ; \,  ... ; \, (P, S)=a_{P, S}+b_{P,S}; \;  i+1 \right)$ to true. This allows us to fill the entire table appropriately.

In the end, it simply remains to check all schedules in the last subtable (where all free chains are assigned), compute the cost of their schedules, and select the one with lowest cost; this gives us the best obtainable schedule for the given communication configuration. Note that the table has size $O(n^{P^2+1})$, and there are $O(n)$ free chains and $O(n^P)$ possible assignments for each free chain, so the running time of the algorithm is bounded by $O(n^{P^2+P+2})$, which is indeed polynomial in $n$. Running this process for each possible configuration still results in polynomial time.
\end{proof}

It remains to show that we indeed need at most $P$ supersteps.

\begin{lemma} \label{lem:maxPss}
The total communication cost in the optimal BSP schedule for a chain DAG is at most $(P-1)$.
\end{lemma}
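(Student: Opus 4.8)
The plan is to prove the equivalent-looking statement that some optimal BSP schedule of a chain DAG uses at most \(P\) supersteps; the bound \(\sum_{s}C_{comm}\,\!^{(s)}\le P-1\) then follows by a short normalisation. I will carry this out assuming \(L=0\), which is the setting in which the lemma is used (a positive latency only makes short supersteps even less attractive, so the general case only changes the normalisation step). First I would clean up an arbitrary optimal schedule with two cost-neutral operations: (1) as long as some superstep \(s<S\) has an empty communication phase, merge its computation phase into that of superstep \(s+1\) (this breaks no availability constraint, since no value was in transit across the merged phase, and does not change the cost); (2) if a superstep has communication cost \(c>1\), split it into \(c\) consecutive supersteps, keeping the whole computation phase in the first one and giving each a communication phase of cost exactly one (cost-neutral for \(L=0\)). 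After these operations every superstep except the last carries a communication phase of cost exactly one, so \(\sum_{s}C_{comm}\,\!^{(s)}=S-1\) for the normalised schedule, and since normalisation preserves optimality it suffices to show that such a normalised optimal schedule has \(S\le P\).

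The second ingredient is the shape of a chain inside a schedule. Fix a chain \(u_1\!\to\!\dots\!\to\!u_\ell\); whenever two consecutive nodes lie on different processors, validity forces a strictly earlier communication superstep, so \(\tau(u_i)<\tau(u_{i+1})\), and since \(\tau\) is already non-decreasing along the chain, the chain decomposes into maximal blocks, each computed on a single processor, with the superstep index strictly increasing from one block to the next. Two consequences will be used repeatedly: (a) all nodes of a chain computed in one superstep lie on a single processor; (b) a chain occupying \(r\) blocks spans \(r\) distinct supersteps and is responsible for exactly \(r-1\) communication steps, each sending the last node of a block to the processor of the next block.

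The crux is to eliminate a superstep. Suppose \(S\ge P+1\); I would build a valid schedule of no larger cost with fewer supersteps, contradicting optimality. The natural move is to collapse the last two supersteps \(S-1\) and \(S\) into a single final superstep: for every chain with a block in \(\{S-1,S\}\) \emph{and} a block in \(\{1,\dots,S-2\}\), place its entire tail (everything from superstep \(S-1\) onward) on the processor \(\hat p\) carrying its last earlier block, and distribute the chains living entirely inside \(\{S-1,S\}\) freely among the processors. By consequence (b), every communication step that occurred in the phases of supersteps \(S-1\) and \(S-2\) is now redundant — each communicated value is the last node of a block whose (unique) successor block has just been re-routed onto the value's own processor \(\hat p\) — so the collapse deletes superstep \(S-1\) altogether and empties the communication phase of superstep \(S-2\), lowering the communication cost by \(2g\) and the superstep count by one. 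What remains is to check the change in the work cost \(\sum_{s}C_{work}\,\!^{(s)}\): only the new final superstep differs, and its cost is the maximum over processors of the total re-routed load.

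The main obstacle is precisely this last accounting. A priori all the re-routed tails could land on a single processor, so the new final superstep could be far more expensive than \(C_{work}\,\!^{(S-1)}+C_{work}\,\!^{(S)}\) and overwhelm the \(2g\) saved on communication. Handling this is where the real work lies: one expects to need a more careful choice of which consecutive pair of supersteps to collapse (for instance the pair minimising the combined work, or the pair whose merge produces the smallest blow-up), possibly combined with permission to re-shuffle the computations of the earlier supersteps as well, or an amortised argument that sums the work increases over a whole chain of such eliminations and charges them globally against \(\sum_{s}C_{work}\,\!^{(s)}\) and against the accumulated \(g\)-savings. Consequence (a) of the chain structure — that in any superstep the portion of any single chain sits on one processor — is the natural tool to cap how much one chain can contribute to the blow-up. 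Once this quantitative balance is pinned down, \(S\le P\) follows, hence \(\sum_{s}C_{comm}\,\!^{(s)}=S-1\le P-1\).
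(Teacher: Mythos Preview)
Your proposal is not a complete proof: you yourself flag that the ``last accounting'' step is where the real work lies, and you leave it unresolved. The collapse of supersteps $S-1$ and $S$ can indeed pile arbitrarily many chain tails onto a single processor, and nothing in your outline controls the resulting work blow-up against the fixed $2g$ saving. The vague suggestions (choose a different pair to collapse, amortise over a sequence of collapses) are not arguments, and it is not clear any of them can be made to work without essentially reproving the lemma from scratch. In particular, observation (a) only tells you that each chain contributes to one processor per superstep; it gives no global bound on how many chains hit the same processor after re-routing.

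The paper avoids this difficulty entirely by taking the opposite direction: instead of massaging an unknown optimal schedule, it \emph{constructs} an explicit schedule whose work cost equals the trivial lower bound $\max(n/P,\ell_{\max})$ and whose total communication cost is at most $(P-1)\cdot g$. The construction is a simple greedy/wrap-around assignment: repeatedly peel off the longest remaining chain onto its own processor whenever that chain has length at least $n'/P'$, and once all remaining chains are short, number the leftover nodes consecutively and cut them into $P'$ equal contiguous blocks, incurring at most $P'-1<P-1$ cross-processor cuts. Since no schedule can beat this work cost, any schedule with total communication cost exceeding $(P-1)\cdot g$ is strictly worse and hence not optimal. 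This sidesteps all the delicate cost-balancing your approach requires.
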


\renewcommand*{\proofname}{Proof}
\begin{proof}
Note that the minimum work cost in a BSP schedule is lower bounded by both $\frac{n}{P}$ and the length $\ell_{max}$ of the longest chain. We show that it is always possible to achieve this work cost with a total of $(P-1)$ communication rounds, and thus get a total cost of $\max(\frac{n}{P}, \ell_{max})+(P-1) \cdot g$. Since a higher number of communication steps could only yield a higher cost, any such solution is clearly suboptimal.

For this, let us set $P'=P$ and $n'=n$ initially, and consider the following iterative procedure. In each step, we take the currently longest chain $\ell'_{max}$. If $\ell'_{max} \geq \frac{n'}{P'}$, then we assign the entire chain to a single processor, and discard both this chain and the processor, hence updating our values to $P' \leftarrow (P'\!-\!1$) and  $n' \leftarrow (n'\! -\! \ell'_{max})$, and continue the process. Otherwise, if $\ell'_{max} < \frac{n'}{P'}$, then we terminate the iterative process, and develop a schedule for the remaining chains in a single step. For this, we simply number the remaining nodes in all chains from $1$ to $n'$, considering the chains in an arbitrary order and iterating though each chain from beginning to end, and then we assign the nodes $\{ 1, ..., \frac{n'}{P'} \}$ to the first processor, the nodes $\{ \frac{n'}{P'}+1, ..., \frac{2 \cdot n'}{P'} \}$ to the second processor, and so on.

We can then execute the nodes on the last $P'$ processors in any desired order; whenever we reach a node on $p_1$ with a successor that is assigned to a different processor $p_2$, we simply close the computation phase, and add a communication phase to send this value from $p_1$ to $p_2$ (and possibly others that are available). Since there are at most $(P'-1)$ nodes with a successor assigned to a different processor, this results in a communication cost of at most $P'-1 < P - 1$. The total work cost on each of the last $P'$ processors is exactly $\frac{n'}{P'}$; note that our updates of $P'$ and $n'$ always ensure that $\frac{n'}{P'} \leq \frac{n}{P}$. On the processors discarded earlier (assigned to a single chain), we simply compute the next node in each step, hence the work cost on these is bounded by $\ell_{max}$.
\end{proof}

\subsubsection{Discussion} Note that the polynomial algorithm above can easily be adapted to the case of having non-zero latency $L>0$: since each of the solutions we consider has a known number of supersteps, we can simply add the latency costs to each candidate solution and compare the different solutions this way.

As for adapting the algorithm to the other communication models in Table~\ref{tab:commodels}: switching to broadcast operations has no advantage on chain DAGs, since each node has outdegree at most $1$, so each value is required on at most one further processor. In case of free data movement, some further configurations of communication become valid, e.g.\ if $v$ is transferred from $p_1$ to $p_2$ and then from $p_2$ to $p_3$, then we can assume that the chain part starting at the successor of $v$ is computed on $p_3$. However, this is only a slight change in the conditions for evaluating the validity of each configuration.

We also note that the earlier work of~\cite{chainsHard} also studies optimal BSP scheduling for chain DAGs. In particular, this work claims that the BSP scheduling problem is already NP-hard on chain DAGs; this might look surprising at first glance, since it seemingly contradicts our result. The proof of~\cite{chainsHard} uses a reduction to the partitioning problem (i.e.\ multiway number partitioning with only $2$ classes) in the case of $S\!=\!1$ supersteps, and invokes the fact that this partitioning problem is NP-complete. This raises the underlying technical question of how our problem input is actually encoded. That is, given some integers $a_1, ..., a_k$, the dynamic programming algorithm for partitioning is a so-called pseudo-polynomial time algorithm. This means that it only takes polynomial time in terms of the sum of the numbers $\sum a_i$ (which corresponds to $n$ in our case), but if the numbers $a_1, ..., a_k$ are provided in binary form in the input, then this $\text{poly}(n)$ time is in fact exponential in the number of input bits; as such, from a complexity theoretical perspective, the problem is still NP-hard. 

For the BSP scheduling of chain DAGs, this means the following. If we restrict our interest to chain DAGs, then it indeed becomes possible to select a different, more compressed encoding format for our inputs (only describing the length of each chain, as a binary number) which makes the problem NP-hard. However, if we consider the input of the problem to always be a DAG description, listing the nodes and edges in the DAG (which we believe to be a more natural problem formulation in general), then the problem is not NP-hard anymore, since our algorithm above has a running time that is polynomial in the number of nodes $n$.

We also point out that for this polynomial solution, it is once again a critical assumption that the number of processors $P$ is a constant, and does not scale arbitrarily with $n$. If $P$ were a part of the input (as assumed in some other works), then the problem very easily becomes unreasonably hard; e.g.\ a reduction from $3$-partition shows that the problem would already be NP-hard for the simple case of chain DAGs and only $S\!=\!1$ superstep, as also noted in~\cite{chainsHard}.

Finally, we note that~\cite{chainsHard} also carries out a more detailed analysis of the largest possible number of supersteps that is required for an optimal schedule in chain graphs, showing both an upper and lower bound of essentially $\frac{P}{2}$ for this. However, for our algorithm above, the simpler upper bound of $P$ was already sufficient.

\subsubsection{Connected chain DAGs} We now discuss the modifications required if our chains also have a common source node $v_0$.

\renewcommand*{\proofname}{Proof of Theorem~\ref{th:chains} for connected chain DAGs}

\begin{proof}
The simplest possible schedule in this case is to have a single superstep, assigning the entire DAG to $p_1$ at a cost of $n$; we will compare this to the rest of the solutions.

If our schedule has $S \geq 2$ supersteps (and actual communication between them), then naturally $v_0$ will be in the first superstep. Furthermore, we can show that it is always optimal to place $v_0$ in a separate superstep. In particular, assume that some chain up to another node $v$ is also assigned to superstep $1$ (and hence $\pi(v_0)$). If $v$ is the last node of the chain, or if it has a successor $v'$ assigned to $\pi(v')=\pi(v_0)$, then we can simply reassign the chain to superstep $2$: this decreases $C_{work}\,\!^{(1)}$ by the length of the chain, increases $C_{work}\,\!^{(2)}$
by at most the same amount, and incurs no communication. Otherwise, if $v$ has a successor $v'$ assigned to $\pi(v') \neq \pi(v_0)$, then we can reassign the whole chain up to $v$ to processor $\pi(v')$ and superstep $\tau(v')$. This still requires us to send one value from $\pi(v_0)$ to $\pi(v')$ (we can do this in the same superstep as before), and as before, the work cost of superstep $1$ is decreased by at least as much as the work cost of superstep $\tau(v')$ is increased.

After all these reassignments, $v_0$ is in a superstep of its own, and only affects the remaining part of the DAG in a very simple way: if we have a chain starting with a processor different from $\pi(v_0)$, then this will incur an extra communication step from $\pi(v_0)$. However, with $P$ processors and a single value $v_0$, this means at most $(P-1)$ extra communication steps. Indeed, applying the method in Lemma~\ref{lem:maxPss} still provides a solution of cost at most $\max(\frac{n-1}{P}, \ell_{max})+2 \cdot (P-1) \cdot g + 1$, even if all processors appear in the beginning of some chain and thus $v_0$ is sent to them all in the first superstep, so $C_{comm}\,\!^{(1)}=P-1$. This shows that the optimal schedule still consist of at most $(2 \cdot P - 1)$ supersteps, since otherwise it would produce a higher cost.

Hence we can still execute the same dynamic programming approach as before, but now allowing $(2 \cdot P - 2)$ different communication steps, including steps that send the value of $v_0$. Naturally, the communications of $v_0$ must be handled slightly differently. As they determine the superstep $s$ when the value of $v_0$ becomes available on each processor $p$, the solutions need to be filtered out according to this: before superstep $s$, no segment of any chain can be assigned to this $p$. Nonetheless, it is still only a technicality to check the validity of each configuration (in polynomial time), and then divide the free chains into supersteps and processors according to the availability of $v_0$. The table can be filled in each configuration the same way as before to find the optimal schedule.
\end{proof}

In the FS model, the method above only needs a minor modification, as $v_0$ may now be relayed between the processors; however, the communication steps in each configuration still determine when $v_0$ becomes available on each processor. On the other hand, in the broadcast models, the algorithm actually becomes significantly simpler, since in a single round of communication, $v_0$ reaches all the processors in superstep $1$, and from this point the construction behaves like a chain DAG, not affected by $v_0$ at all.

\subsection{Proof of Theorem~\ref{th:2level}}

We prove Theorem~\ref{th:2level} through a reduction from the clique problem: given an undirected graph $G'(V',E)'$ and a number $k$, the task is to find a subset of $k$ nodes in $V'$ such that any two of them is adjacent. This problem is long known to be NP-complete~\cite{GJ79}. Our reduction will only have $P=2$ processors, hence it applies to any of the communication models equivalently.

\renewcommand*{\proofname}{Proof of Theorem~\ref{th:2level}}

\begin{proof}
Consider some fixed parameters $M_1$, $M_2$. The main tool in our construction is a so-called \textit{2-level block} $U=(U^{(1)}, U^{(2)})$, which consists of (i) a set $U^{(1)}$ of $M_1$ nodes which will be on the first level of the DAG, and (ii) another set $U^{(2)}$ of an unspecified number of nodes on the second level of the DAG, such that each node in $U^{(1)}$ has an edge to each node in $U^{(2)}$. We call $|U^{(2)}|$ the \text{size} of the block; in our construction, this will always be a multiple of $M_2$. Intuitively, such a block will ensure that each node in $U^{(2)}$ has to be assigned to the same processor; if the block is split, then each node in $U^{(1)}$ will need to be communicated to at least one of the processors used in $U^{(2)}$, hence resulting in a communication cost of at least $M_1$, which will be too large to provide a reasonable solution.

Given a clique problem on a graph $G'(V',E)'$ on $n'=|V'|$ nodes and a parameter $k$, we turn this into a BSP scheduling problem as follows.
\begin{itemize}[leftmargin=13pt, itemsep=4pt]
\item For each node $v' \in V'$, we create a $2$-level block $U_{v'}$ of size $|U_{v'}^{(2)}|=M_2$. Besides, we add two further blocks $U_{p_1}$ and $U_{p_2}$ of sizes $(2 \cdot n' - k) \cdot M_2$ and $(n' + k) \cdot M_2$, respectively.
\item Then for each edge $(v'_1, v'_2) \in E'$, we add a single node $v_{e'}$ on the first level of the DAG, and draw an edge form $v_{e'}$ to (i) an arbitrary node in $U_{v'_1}^{(2)}$, (ii) an arbitrary node in $U_{v'_2}^{(2)}$, (iii) an arbitrary node in $U_{p_1}^{(2)}$.
\item We add $|E'|-{k \choose 2}$ more nodes $\hat{v}_i$ to the first level of the DAG, and add an edge from $\hat{v}_i$ to (i) an arbitrary node in $U_{p_1}^{(2)}$, and (ii) an arbitrary node in $U_{p_2}^{(2)}$.
\item Finally, we add $n' \cdot M_1 + {k \choose 2}$ isolated nodes to our DAG; we will imagine these nodes to be a part of level $1$.
\item For our parameters, we select $M_1:=|E'|+1$ and $M_2= M_1 \cdot (n'+1) + 2 \cdot |E'| \cdot g$. Note that the size
$n$ of the resulting DAG construction is still polynomial in the size of $G'$.
\item In the derived BSP scheduling problem, we set $P=2$, $L=0$, choose $g$ as a small constant (e.g.\ $g=2$), and set the allowed cost to $C_0 = \frac{n}{2} + (|E'| - { k \choose 2}) \cdot g$.
\end{itemize}

We first outline the main idea of the construction, and then discuss the technical details separately.
\begin{enumerate}[leftmargin=13pt, itemsep=4pt, label=(\roman*)]
\item The computational cost of any schedule is at least $\frac{n}{2}$, so any schedule within the allowed cost can have at most $|E'| - {k \choose 2}$ communication steps. As outlined before, assigning a block to several processors results in a cost of $M_1> |E'|$; hence in each valid schedule, the level-$2$ part of each block must entirely be assigned to the same processor.
\item In the second level of our DAG, the blocks in our construction essentially allow us to adapt the packing technique in the recent work of \cite{hyperDAG} to our BSP scheduling problem, ensuring that we have the desired number of blocks assigned to both processors in a reasonable solution. That is, assigning both $U_{p_1}^{(2)}$ and $U_{p_2}^{(2)}$ to the same processor would result in a work cost of at least $3 \cdot n' \cdot M_2$, which is significantly larger than $C_0$. As such, $U_{p_1}^{(2)}$ and $U_{p_2}^{(2)}$ are assigned to different processors; assume w.l.o.g. that these are $p_1$ and $p_2$, respectively.
\item Now consider the second level of the blocks $U_{v'}$, which all have size $M_2$. One can then observe that we need to assign exactly $k$ of these to $p_1$ and $(n'-k)$ of them to $p_2$, otherwise, the work cost will be split in a very imbalanced way between $p_1$ and $p_2$ on the second level, and since most of the nodes are in this level, this also makes the total work cost too high.
\item As for communication costs, the nodes $\hat{v}_i$ all have a successor on both processors, hence they will all incur a communication step, regardless of which processor they are assigned to. Also, since $U_{p_1}^{(2)}$ is assigned to $p_1$, each value $v_{e'}$ is required on processor $p_1$ at some point; as such, node $v_{e'}$ will incur a communication step if and only if one of its other two successors is assigned to $p_2$.
\item The construction allows us to distribute the work cost evenly between the two processors on both levels, hence resulting in a total work cost of $\frac{n}{2}$. As such, a valid solution must have a communication cost of $(|E'| - { k \choose 2}) \cdot g$ at most, which means at most $2 \cdot (|E'| - { k \choose 2})$ communication steps, half of them from $p_1$ to $p_2$ and the rest in the reverse direction.
\item The isolated nodes incur no communication, and neither do the first levels of blocks if they are assigned to the same processor as the second level. The nodes $\hat{v}_i$ already incur $|E'| - { k \choose 2}$ communication steps. As such, the communication cost will be determined by the number of $v_{e'}$ that have a successor assigned to $p_2$. To limit the number of communications to $2 \cdot (|E'| - { k \choose 2})$ altogether, we need to ensure that the nodes $v_{e'}$ also incur at most $|E'| - { k \choose 2}$ communication steps, i.e.\ there are ${ k \choose 2}$ distinct nodes $v_{e'}$ that have all of their successors assigned to $p_1$.
\item Altogether, this means that we need to select $k$ of the second-level blocks $U_{v'_1}^{(2)}$ to assign to $p_1$, such that the number of edges induced by these $k$ nodes is at least ${ k \choose 2}$, i.e.\ the nodes form a  clique. Hence a BSP of cost $C_0$ exist if and only if $G'$ contained a clique of size $k$.
\end{enumerate}

More formally, given a clique of size $k$ in $G'$, we can assign (both levels of) the corresponding $k$ blocks and $U_{p_1}$ to $p_1$, and (both levels of) the remaining blocks to $p_2$. This ensures that the number of level-$2$ nodes assigned to both processors is $2 \cdot n' \cdot M_2$, and the number of level-$1$ nodes assigned to $p_1$ and $p_2$ is $(k+1) \cdot M_1$ and $(n'-k+1)\cdot M_1$, respectively. We assign all the nodes $v_{e'}$ also to $p_1$, the nodes $\hat{v}_i$ to $p_2$, and as for the remaining isolated nodes, we assign $(n'-k) \cdot M_1$ of them to $p_1$, and $k \cdot M_1 + {k \choose 2}$ of them to $p_2$. This ensures that the number of level-$1$ nodes assigned to both processors is $(n'+1) \cdot M_1 + |E'|$. As such, we can assign all nodes on the first and second levels, respectively, to supersteps $1$ and $2$, and the total work cost of the two supersteps will be $\frac{n}{2}$. Finally, the nodes $\hat{v}_i$ will be sent from $p_2$ to $p_1$, and the $v_{e'}$ not corresponding to the clique edges will need to be sent from $p_1$ to $p_2$, resulting in a communication cost of $(|E'|-{k \choose 2}) \cdot g$. This indeed provides a BSP schedule of total cost $C_0$.

Conversely, assume that a BSP schedule of cost $C_0$ exists. As discussed before, this implies that the second level of all blocks is assigned to a single processor, $U_{p_1}^{(2)}$ and $U_{p_2}^{(2)}$ are assigned to $p_1$ and $p_2$, respectively, and exactly $k$ of the $U_{v'}^{(2)}$ are assigned to $p_1$. The work cost of the solution must be $\frac{n}{2}$ at least. For the communication cost to be at most $(|E'| - { k \choose 2}) \cdot g$, we can have at most $2 \cdot (|E'| - { k \choose 2})$ communicated values; after subtracting the $\hat{v}_i$, only $|E'| - { k \choose 2}$ communicated values remain. All the edges in $E'$ not induced by the $k$ chosen nodes will result in a communication step, so to keep the number of these steps below $(|E'| - { k \choose 2})$, the $k$ nodes must induce ${k \choose 2}$ edges at least, i.e.\ they must form a clique. 

It remains to discuss a few technical details regarding our parameters. Note that the number of nodes in the construction is 
\[ n = 2 \cdot \left( M_1 \cdot (n'+1) + |E|' \right) + 4 \cdot n' \cdot M_2 = O(n'^2 \cdot |E'|) \, . \]
For point (ii) of the argument above, we also need to show that $3 \cdot n' \cdot M_2 > C_0$; indeed, since
\[ C_0 = M_1 \cdot (n'+1) + |E|' + 2 \cdot n' \cdot M_2 + \left(|E'| - { k \choose 2} \right) \cdot g \, , \]
we only need to show
\[ n' \cdot M_2 > M_1 \cdot (n'+1) + |E|' + \left(|E'| - { k \choose 2} \right) \cdot g \, , \]
which follows from our choice of $M_2=M_1 \cdot (n'+1) + 2 \cdot |E'| \cdot g$. Also, for point (iii), note that if we choose to assign more or less than $k$ of the blocks $U_{v'}$ to $p_1$, then one of the processors will be assigned at least $(2 \cdot n' +1) \cdot M_2$ nodes on the second level. Even if all nodes on the first level are assigned to the other processor, the number of nodes on this processor will be too high. In particular, we have $(2 \cdot n' +1) \cdot M_2 > C_0$; this is equivalent to
\[ M_2 > M_1 \cdot (n'+1) + |E|' + \left(|E'| - { k \choose 2} \right) \cdot g \, , \]
which once again holds due to our choice of $M_2$. As such, a work cost of at least $(2 \cdot n' +1) \cdot M_2$ indeed cannot provide a valid BSP schedule, so second-level blocks must be distributed in a completely balanced fashion.
\end{proof}

\subsection{Proof of Theorem~\ref{th:trees}}

For in-trees, we provide a reduction from the so-called $3$-partition problem. In this problem, we are given a list of $m$ positive integers $a_1, \ldots, a_m$, with $m= 3 \! \cdot \! m'$ for some other integer $m'$. The integers are known to satisfy $\sum_{i=1}^{m}\, a_i \, = \, m' \cdot T$ for some integer $T$, and also $\frac{T}{4} _{\!} < _{\!} a_{i\!}< _{\!} \frac{T}{2}$ for all $i \in [m]$. The goal is to divide the numbers into $m'$ distinct subsets of size $3$ each, such that the sum in each subset is exactly $T$. This problem is known to be NP-hard~\cite{GJ79} (in the strong sense, i.e.\ even if the input has length $m \cdot T$).

Our construction to convert a $3$-partition problem to a BSP scheduling task in in-trees will be described in several steps. We first describe the gadgets used in the proof, and then prove the main properties of our construction. We then show in a separate step how to embed the input $3$-partition problem in our construction. Finally, we sort out the technical details.

For ease of presentation, we will often describe our BSP schedule as if it was a classical scheduling approach, i.e.\ as if the nodes within each superstep were ordered in an arbitrary topological ordering, and hence each single node is executed in a given integer time step.

\subsubsection{Main ideas and tools}

Consider some integer parameters $M$, $M_0$, $A$, $B$; we will decide the values of these later. Intuitively, we will have $M>M_0$ and $B>A$, and $M$ and $M_0$ will be much larger than $B$ and $A$. The parameters will satisfy $M_0 + A < M < M_0 + B$. Altogether, our construction will initially consist of $n= 4 \cdot m \cdot (2 \cdot M + M_0) + 4$ nodes, and $P _{\!} = _{\!} 4$ processors. We will choose $g$ as an arbitrary constant, e.g.\ $g=2$.

The main gadgets in our construction will be so-called \textit{cones}: a cone on $k$ nodes consists of a single node $u$, and $(k-1)$ further nodes that have an edge to $u$. We call node $u$ the \textit{top} of the cone. A cone essentially ensures that we need to assign most of these $k$ nodes to the same processor; indeed, if $k_0$ of the predecessors of $u$ are assigned to a different processor than $u$, then this results in a communication cost of at least $k_0$, since these values all need to be sent to $u$. Due to this, we will refer to the processor assigned to the top node of the cone as the processor assigned to the cone itself.

The allowed scheduling cost in our derived BSP problem will be set to $C_0 = \frac{n}{P} + 3 \! \cdot \! m \! \cdot \! g$. Since the work cost is at least $\frac{n}{P}$, this will mean that any BSP schedule can have at most $ 3 \! \cdot \! m$ communication rounds. We will show that this amount of communication is indeed needed in any reasonable schedule, and hence the work cost needs to be exactly $\frac{n}{P}$. In other words, this implies that every processor needs to execute a node in every time step.

Our construction will contain a \textit{final cone} of size $\frac{n}{P}$ where the top node $u_0$ is a sink node of the DAG. Furthermore, there will be $3 _{\!}\cdot_{\!}m$ further cones in the DAG (called \textit{semi-final} cones, to be described in detail later), each of size at least $M_0$, and the top node of each of them will have an edge to $u_0$. Altogether, this implies that the top node of each semi-final cone has to be assigned to a different processor than $u_0$; if not, then the given top node and $u_0$ together have more than $\frac{n}{P}+M_0$ immediate predecessors. This cannot be a solution below the cost limit: if at least $\frac{n}{P}+M_0-3_{\!}\cdot_{\!}m$ of these predecessors are on the same processor as $u_0$, then that yields a too high work cost (since our parameters will ensure $M_0>3 \cdot m \cdot (g+1)$), otherwise we need to communicate more than $3 _{\!}\cdot_{\!}m$ values to this processor, yielding a too high communication cost.

This implies that to each of the $3 _{\!}\cdot_{\!}m$ semi-final cones, we will need to assign a different processor than $\pi(u_0)$. This already gives us $3 _{\!}\cdot_{\!}m$ values that need to be communicated to $u_0$ from a different processor, hence establishing a communication cost of at least $3 _{\!}\cdot_{\!}m$, and as a result, a work cost of exactly $\frac{n}{P}$.

Note that this also means that each of the $(\frac{n}{P}-1)$ predecessors of $u_0$ within the final cone needs to be assigned to $\pi(u_0)$, otherwise the communication cost grows even higher. As such, there will be one processor in our schedule, let us call it $p_4$, which is not particularly interesting: it will compute the source nodes of the final cone in the first $(\frac{n}{P}-1)$ steps, $u_0$ in the last step, and nothing else in the DAG.

Also note that having $3 _{\!}\cdot_{\!}m$ values to transfer from semi-final cones to $p_4$, which is exactly the allowed communication cost, implies that one of these values need to be sent to $p_4$ in every round of communication. This means that whenever we need to communicate any value within our DAG, we always need to have a new semi-final cone already computed and ready. Hence intuitively, the semi-final cones ensure in our solution that each of the communication rounds must happen \textit{at earliest} at a given point in time, since the next communication round can only be used once a new semi-final cone is computed.
 
Besides the cones, our construction will also have a critical path of length $\frac{n}{P}$; hence to have a work cost of $\frac{n}{P}$, we will need to make sure that the $i$-th node of this path is computed in the $i$-th step for all $i \in [\frac{n}{P}]$. However, there will be further cones attached to this path, which, intuitively speaking, require you to have communication rounds \textit{at latest} in given supersteps, otherwise we cannot execute all the predecessors of the path node and hence cannot compute the $i$-th node in time. Together with the semi-final cones above, this will ensure that communications have to happen \textit{exactly} at given time steps. This will be crucial for our reduction.

\subsubsection{Construction outline} \label{app:subs:outline}

Our construction will essentially consist of $m$ identical copies of a set of gadgets. We now describe these gadgets and discuss the properties they ensure. We will later show how to customize each gadget slightly to represent a concrete number $a_i$ of the $3$-partition problem.

As for the semi-final cones, $2 \! \cdot \! m$ of these will simply be cones of size $M$, with their top nodes having an edge to $u_0$. The remaining $m$, however, we be turned into so-called \textit{triple cones}. For each triple cone, we create nodes $u_1$, $u_2$, $u_3$ that are the top nodes of a cone of size $M_0$, $B$ and $A$, respectively, and we then add the edges $(u_3, u_2)$ and $(u_2, u_1)$ to our DAG (besides the edge $(u_1, u_0)$, which was already mentioned before).

As for the critical path, we will divide it into $m$ continuous \textit{segments} of size $(2 \! \cdot \! M + M_0)$, plus a single final node. For each of the segments, we add $3$ separate cones such that the top node of the cone is a given node of the segment. In particular, for each segment, we add (i) a cone of size $(M-A)$ with its top being the $(M+1)$-st node of the segment (from the source), (ii) a cone of size $(M-B)$ with its top being the $(2_{\!} \cdot _{\!} M + 1)$-st node of the segment, and (iii) a cone of size $M_0$ with its top being the \ $(2_{\!} \cdot _{\!} M + M_0 +1)$-st node, i.e.\ the node immediately after the segment. These $3$ top node split the segment into $3$ parts.

This implies that the $i$-th top node on the critical path (let us denote it by $v_i$) has at least $i \cdot (2 \! \cdot \! M_0 - B)$ predecessors. We will use this to show that at least one communication round is always needed between $v_{(i-1)}$ and $v_i$ for all $i \in [3 \! \cdot \! m]$. For $i=1$, this is clear: the top node $v_1$ has $(2 \cdot M - A)$ predecessors, and recall that it has to be computed exactly in the $(M+1)$-st step. With $(2_{\!}\cdot _{\!} M - A) > M$, this results in a too high work cost if done on a single processor; as such, these predecessors must be split among at least two processors, and hence at least one communication round is required.

Also recall that one of the semi-final cone values has to be transmitted in each communication round, so in the first $M$ steps, another work cost of at least $M$ has to be invested into one of the semi-final cones (either a triple cone or a regular one). Since there is at most $A$ work cost available that is not used on the critical path or the cone attached to it, the earliest possible time for finishing the semi-final cone (and hence having the computation step) is at time $(M-A)$. Also, since all these nodes have to be computed, the processors can compute at most $A$ further nodes in the first $M$ steps besides those mentioned so far.

The same argument can be continued in an inductive way. Node $v_i$ has $i \cdot (2 \cdot M_0 - B)$ predecessors at least. Up to the time when we compute $v_{(i-1)}$, there are at most $(i\!-\!1) \cdot B$ nodes in the DAG that have been computed, except for (i) the critical path component up to $v_{(i-1)}$, (ii) $i$ semi-final cones that were completely finished, and (iii) the final cone which we disregarded for convenience. As such, out of the at least $(2 \cdot M_0 - B)$ nodes in the path between $v_{(i-1)}$ and $v_i$ and the cone attached to $v_i$, there are still at least $(2 \cdot M_0 - B) - (i-1) \cdot B$ nodes that are not yet computed. There are either $M$ or $M_0$ time units between computing $v_{(i-1)}$ and $v_i$; if we have $(2 \cdot M_0 - B) - m \cdot B > M$, this ensures that we cannot execute all the remaining nodes on a single processor, so another round of communication is required before $v_i$. Once again, at least $M$ work cost is needed to finish another semi-final cone by this round; this shows that the number of extra nodes that are computed (besides the path up to $v_i$ and the new semi-final cone) is again upper bounded by $(i \! \cdot \! B)$.

Hence one round of communication is needed between each two top nodes on the critical path; since we can only have $3 \cdot m$ communication steps altogether, this means that there is \textit{exactly} one round of communication between each two top nodes. This also ensures the convenient property that the cones attached to the path are all entirely assigned to single processor: otherwise, they would incur further communication cost without reasonably splitting the work cost between $v_{(i-1)}$ and $v_i$. As such, between each pair $v_{(i-1)}$ and $v_i$, the path consists of two continuous parts, with the nodes in the first and second parts are assigned to to different processors.

As a detail, note that we also add $2$ isolated nodes to the construction to increase the number of sink nodes from $2$ to $4$, and hence ensure that all processors can still compute in the last step.

\subsubsection{Scheduling the construction} \label{app:subs:conebehave}

We now go over the segments in another induction from source to sink, and show that each segment can only be scheduled in a specific way without violating the allowed cost $C_0$. Let us consider the first segment in detail. Recall that all three parts of this segment must have exactly one communication round.

We begin with the first part of the segment, up to $v_1$. This has a path of length $(M+1)$ ending in $v_1$, and a cone of size $(M-A)$ with $v_1$ as its top node. To compute $v_1$ by time step $(M+1)$, we need to compute all the $(2M-A)$ predecessors of $v_1$; this leaves only $(M+A)$ further nodes we can compute on our $3$ processors in the first $M$ times steps. Note that we also need to finish one of the semi-final cones by time step $M$ in order to send a value to $p_4$ in the first communication round. As the triple cones all have size $M_0+A+B > M$, we can only achieve this by computing one of the semi-final cones of size $M$.

There are also several other observations we can make. Firstly, to avoid further communication, the semi-final cone must be assigned entirely to a single processor; let us call this w.l.o.g.\ $p_3$. This implies that the communication must happen after exactly $M$ steps (at the last possible point in time), i.e.\ we have $C_{work}\,\!^{(1)}=M$. As such, another processor $p_1$ must process the first $M$ nodes of the path, and a third one $p_2$ must process the $(M-A)$ nodes in the cone on the path (and also $v_1$, but this is already in the computational phase of the next superstep). As such, in the communication phase of superstep $1$, $p_1$ sends the $M$-th node of the path to $p_2$, and $p_3$ sends the top node of a semi-final cone to $p_4$. Besides this, $p_2$ can compute $A$ further nodes in another part of the DAG.

Now consider the second part of the segment, up to $v_2$, and time steps $(M+1)$ to $2 \! \cdot \! M$. This is again a path of length $(M+1)$ ending in $v_2$, and a cone of size $(M-B)$ attached to $v_2$. However, also note that the beginning of the path must be assigned to $\tau(v_1)=p_2$, even up to the $(2M-B)$-th node, otherwise the remaining path and the cone would contain more than $M$ nodes altogether, and hence would require another communication step. This means that most of the capacity of $p_2$ in this interval is used on the path. In particular, we cannot use $p_2$ to compute another semi-final cone, since even with the leftover computations from earlier, we have at most $(A + B)$ steps to do this, and both $M$ and $M_0$ are significantly larger than $(A + B)$.

Hence another semi-final cone must be computed on $p_1$ or $p_3$. Note that the role of these two processors is symmetrical here: the nodes previously computed by them have no direct connection to the rest of the DAG, and they both sent a value in the first communication step, and did not receive one. Assume that the semi-final cone is computed on $p_1$. This cannot be a triple cone, since even with the leftover from the earlier steps, we can compute at most $(A+M) < (A + B + M_0)$ nodes on a single processor, and using more processors requires extra communication. As such, $p_1$ must compute a regular semi-final cone of size $M$ in the interval $[M+1, 2 \cdot M]$, which again shows that the second round of communication happens after time step $2 \cdot M$. This means that $p_2$ computes the path up to length $2 \! \cdot \! M$, and $p_3$ computes the attached cone, and then node $v_2$ in step $(2 \! \cdot \! M +1)$. In the communication round, we send a value from $p_2$ to $p_3$ and from $p_1$ to $p_4$. There are $B$ leftover computations that $p_3$ can execute in this time interval.

Finally, consider the third part up to $v_3$; recall that this is a shorter part, corresponding only to the time steps $(2 \! \cdot \! M+1)$ to $(2 \! \cdot \! M+M_0)$. Once again the beginning of the path is assigned to $p_3$, up to at least node $(2 \! \cdot \! M + M_0)-B$, otherwise the rest could not be completed in time by a single processor, even with the leftover from earlier. Then $2 \! \cdot \! B$ is again much smaller than $M_0$, so the semi-final cone must be computed by a processor other than $p_3$. However, $p_1$ and $p_2$ only have $M_0$ time steps in this interval, plus $p_2$ has $A$ leftovers, and $M_0 + A < M$; recall that $p_3$ also has $B$ leftovers, but this cannot be combined in a semi-final cone with $p_1$ or $p_2$.

This means that it is not an option to compute a regular semi-final cone by time step $(2\!\cdot \! M + M_0)$; the only way to transmit a value to $p_4$ in time is to compute a triple cone. Note that a triple cone has $A+B+M_0$ nodes, and the path component between $v_2$ and $v_3$ has $2 \! \cdot \! M_0$ nodes, so this is also only possible if we use all the $3 \! \cdot \! M_0$ computations in our time interval, plus the $(A+B)$ leftovers from earlier; in other words, no leftovers remain after this step. The top of the triple cone cannot be finished before step $(2 \! \cdot \! M+M_0)$, since then all three processors would need to actively work on the path and the attached cone in the remaining steps, resulting in extra communication; hence the top node $u_1$ is only computed in step $(2 \cdot M+M_0)$, and communication can only follow afterwards, again at the last possible point in time.

This also means that $p_3$ is computing the path until node $(2 \! \cdot \! M+M_0)$, and not doing any other computations in this interval. Then the $B$ leftover steps of $p_3$ from earlier cannot be used anywhere else than in the cone of size $B$ in the triple cone; all other cones are smaller or larger (i.e.\ sizes $A$ or $M_0$), so otherwise some cone would be split, incurring extra communication. Since these leftovers are computed between time steps $(M+1)$ to $2 \! \cdot \! M$, the cone of size $A$ in the triple cone must be computed by the leftovers of $p_2$ in the interval $1$ to $M$, to make $u_3$ already available to $u_2$ by time step $2 \cdot M$ at the latest. The cone attached to $v_3$ and the largest cone in the triple cone are then assigned to $p_1$ and $p_2$ between time steps $(2 \! \cdot \! M+1)$ to $(2 \! \cdot \! M+M_0)$ in either way; we will assume that $p_1$ computes the attached cone and $p_2$ computes the semi-final cone, since this nicely maintains the periodic use of the processors for the next segment that follows after $v_3$.

Note that the assignment described above indeed provides a schedule with the desired cost: all processors are computing in each time step, so the work cost is $(2 \cdot M+M_0)$. The assignments in the triple cone also require further communication: sending $u_3$ from $p_2$ to $p_3$ in superstep $1$, and sending $u_2$ from $p_3$ to $p_2$ in superstep $2$. However, this does not increase the cost of the given $h$-relations.

The same proof can be used in an inductive fashion for the next segments that follow. In particular, the arguments above ensure that we need to use all the $3$ processors in all $(2 \cdot M+M_0)$ time steps, so no other nodes are computed. The only connection between the already assigned part of the DAG and the rest is the node $v_3$, which is assigned to $p_1$; however, the role of the processors in the rest of the schedule is symmetrical, so this is not a restriction in any sense. The number of regular semi-final cones is decreased by $2$, the number of triple cones by $1$; as such, the unassigned part of the DAG behaves exactly like the same construction, but now containing only $(m-1)$ segments.

This induction proves that in each segment of the DAG, a valid scheduling in this in-tree is only possible by repeating the steps described above for each segment from the beginning to the end of the critical path.

\subsubsection{Making the communication times fixed}

In our actual reduction, the different segments will not be so flawlessly separated from each other as described above; in particular, we might have a very small number of leftover nodes from the previous segments. While the number of these nodes will be much less than $A$ or $B$, they could still influence with the behavior of the construction, unfortunately, so we need a further technical step to prevent this.

More specifically, we will sort the segments into consecutive triplets, i.e.\ segments $1$ to $3$, segments $4$ to $6$, and so on, referring to each such triplet as a $3$-segment. Note that each such $3$-segment has exactly $9$ communication steps. The main idea of the reduction is that each triple cone will correspond to a number $a_i$ in the original $3$-partition problem. Recall from above that any schedule assigns each triple cone to a given segment, so a $3$-segment corresponds to $3$ consecutive numbers. We will ensure that the schedule is only valid if the $3$ numbers (i.e.\ triple cones) assigned to each of our $3$-segments sum up to exactly $T$.

However, as mentioned, this method will result in a few leftover nodes within the $3$-segment, from the first to the second segment, and from the second to the third segment (but no leftover after the entire $3$-segment). The problem with this is that e.g.\ in segment $2$, these leftover nodes from segment $1$ might be used to pre-compute some nodes of the semi-final cones earlier, which might allow us to execute e.g.\ the first communication of the segment after \textit{less than} $M$ steps. This in turn can allow us to assign $(A+1)$ nodes to processor $p_2$, and only $(M-1)$ to $p_1$, in contrast to what we described above, and with that the correctness of the reduction would not be guaranteed.

In order to resolve this, we further expand our construction to ensure that the communication steps happen at exactly the fixed time steps described in Section~\ref{app:subs:conebehave}. More specifically, we create $4$ identical copies of the DAG described so far (and also increase the number of processors to $P=16$, so the work cost $\frac{n}{P}$ remains unchanged). While this might sound like a major modification, in fact, each of the $4$ copies of the construction will still behave in the same way as outlined before. In particular, the same induction method as in Section \ref{app:subs:outline} still shows that a separate communication round is needed between each $v_{(i-1)}$ and $v_i$: there are still at most $4 \cdot (i-1) \cdot B$ leftover nodes in the DAG until computing $v_{(i-1)}$, so $(2 _{\!} \cdot _{\!} M_0 - B) - 4 _{\!} \cdot _{\!} m _{\!} \cdot _{\!} B > M$ implies that the next part of the path again requires a communication step, and since we also need to compute $4$ semifinal cones (of size at least $M$), the number of leftover nodes for the next induction step is at most $(4 _{\!} \cdot _{\!} i _{\!} \cdot _{\!} B)$.

It then follows that each of the $4$ copies can once again only be scheduled in the way outlined in Section \ref{app:subs:conebehave}. In particular, the same arguments show that the $4$ groups of $A$ leftover nodes from the first part and the $4$ groups of $B$ leftover nodes from the second part of the segment can only be on $8$ distinct processors; this means that none of these leftovers can be combined in the same cone anyway, and thus each of the $4$ segments can still only be scheduled as discussed before. As a technical note, in the third part of the segment, we can now assume $B> 4 _{\!} \cdot _{\!} A$ for simplicity to make it easier to see that the size-$B$ leftover groups can only be used for the cones of size $B$.

The reason for creating $4$ distinct copies of the DAG is as follows. In our reduction, we will adjust the cone sizes in one of the $4$ copies of the construction (we call this the \textit{main copy}) to model the $3$-partition problem. However, as mentioned before, this embedding of $3$-partition in the main copy will also result in up to $3$ very small groups of leftover nodes between the segments, on (at most) $3$ distinct processors. Having $4$ copies ensures that in each part of every segment, there is at least one copy where none of these inter-segment leftover nodes is used, and hence it is guaranteed that the semi-final cones in this copy are indeed only finished after $M$/$M$/$M_0$ time steps (in the first/second/third part of a segment, respectively). As such, regardless of the remaining $3$ copies, the earliest possible times to execute each communication step in our schedule are still the time steps discussed in Section \ref{app:subs:conebehave}.

We note that this method only ensures that the communication times are fixed; the small leftover groups may still be used to pre-compute a part of some cones, and hence finish these cones slightly earlier. However, in this case, we end up with the same number of leftovers on the same processor after the cone is finished, and hence this does not affect the properties of our construction.

\subsubsection{Embedding $3$-partition}

Note that to prove the properties of the construction above, it was only necessary to have an approximate value for the parameters $A$ and $B$; intuitively speaking, any values in the same magnitude would guarantee these properties. As such, the key idea of the reduction is to consider the main copy of our construction, and fine-tune the values $A$ and $B$ here to model our $3$-partition problem.

In particular, we will consider the $m$ triple cones in the main copy, and turn each of them into a representation of the number $a_i$ in the $3$-partition problem. For this, we replace the cone of size $A$ in the triple cone by a cone of size $(A+a_i)$, and we replace the cone of size $B$ by a cone of size $(B+2 \! \cdot \! T-a_i)$.

Furthermore, the cones attached to the critical path are also modified in the main copy. In particular, in each $3$-segment, the sizes of the cones attached to $v_i$, ..., $v_{(i+8)}$ were originally
\[ (M-A, \, M-B, \, M_0, \; M-A, \, M-B, \, M_0, \; M-A, \, M-B, \, M_0) \]
in order; we replace this by the new sequence
\begin{gather*}
(M-(A+T/2), \, M-(B+ 5 \cdot T), \, M_0, \; M-(A+T/2), \\ M-B, \, M_0, \; M-A, \, M-B, \, M_0) \, .
\end{gather*}
This creates at most $6 \! \cdot \! T$ further leftover computation steps within each such $3$-segment in the main copy. As discussed before, this does not affect the time when the communication steps happen.

Furthermore, note that the number of nodes in any triple cone sum up to $(A+B+2 \! \cdot \! T)$ (not considering the cone of size $M_0$), so three triple cones add up to $3 \cdot (A+B+2 \! \cdot \! T)$. On the other hand, in any $3$-segment, the number of leftover nodes from the attached cones of the path is now also
\[ 2 \cdot (A+T/2) + (B+5 \cdot T) + A + 2 \cdot B = 3 \cdot (A+B+2 \cdot T) \, . \]
This means that in order to finish three triple-cones by the end of the $3$-segment, i.e.\ in any valid schedule, there can be no leftover nodes at all by the end of the $3$-segment: every computation step of each processor is required.

Assume w.l.o.g. that the triple cone chosen in the first segment (of a given $3$-segment) corresponds to the number $a_1$. Since the triple cones of the main copy have cone sizes larger than $A$ and $B$, in the first segment these can only be computed with the corresponding leftovers groups of size $(A+T/2)$ and $(B+ 5 _{\!} \cdot _{\!} T)$ from the attached cones in the main copy. This implies that a valid scheduling of the first segment results in $(A+T/2)-(A+a_1)=T/2-a_1$ leftover nodes on one processor, and $(B+5_{\!} \cdot _{\!} T)-(B+ 2 _{\!} \cdot _{\!} T - a_1)=3 _{\!} \cdot _{\!} T+a_1$ leftover nodes on another processor after the segment. As discussed above, these must be used to pre-compute other nodes in the same $3$-segment; in particular, they need to be used to compute some of the ``$A$-type'' and ``$B$-type'' cones (i.e.\ the cones of size  $(A+a_i)$ and $(B+2 _{\!} \cdot _{\!} T-a_i)$) in the main copy, since the corresponding processors will not have enough computation steps for these cones in segments $2$ and $3$ by design.

Now consider the second segment, corresponding to some number $a_2$. Here the cone of size $(B+2 \! \cdot \! T-a_2)$ cannot be computed by any of the leftover groups in this segment (they all have size at most $B$), so we need to use the leftovers from the previous segment. Moreover, the $(T/2-a_1)$ leftovers from the previous $A$-type cone cannot be used for this cone, since these nodes would still not be enough to compute it in time, as
\[ (T/2-a_1)+B < B+2 \cdot T-a_2 \, . \]
Hence to compute the $B$-type cone in the second segment, the leftover nodes from the previous $B$-type cone need to be used, and hence this second $B$-type cone in the main copy needs to be assigned to the same processor. As for the cone of size $(A+a_2)$, this cannot be combined from a leftover group of size $A$ and the $(T/2-a_1)$ remaining leftovers from the previous segment, since $a_2 > T/2 - a_1$ due to $a_i > T/4$. Hence this is computed again with the group of size $(A+T/2)$ from the main copy. Then after the segment, this results in a group of
\[ B + (3\!\cdot\!T+a_1) - (B+2 \cdot T-a_2) = T + a_1 + a_2 \]
leftover nodes due to the $B$-type cone, and a group of
\[ (A+T/2)-(A+a_2) = T/2 - a_2 \]
leftover nodes due to the $A$-type cone, besides the further $(T/2-a_1)$ leftovers from the first segment. The $(T + a_1 + a_2)$ leftovers cannot be on the same processor as the other two groups, since the all the $A$-type and $B$-type cones are assigned to different processors in every segment. However, the $(T/2 - a_1)$ and $(T/2 - a_2)$ may be on the same processor, forming a common group of $(T - a_1 - a_2)$ leftover nodes; in fact, this is the only possibility that will allow us to finish the triple cone in the third segment.

In the third segment (where the triple cone corresponds to $a_3$), there is again a cone of size $(B+2 _{\!} \cdot _{\!} T-a_3)$. We again cannot use the at most $(T - a_1 - a_2)$ leftovers from the $A$-cones here, since even together with a group of size $B$, this is still only
\[ (T - a_1 - a_2)+B < B + T <  B+2 \cdot T-a_3  \]
nodes, which is once again not enough. Hence for the cone of size $(B+2 _{\!} \cdot _{\!} T-a_3)$, we again need to use the $(T + a_1 + a_2)$ leftovers from the previous $B$-cones. Furthermore, recall that all the leftovers are needed at this point, so the remaining $(T - a_1 - a_2)$ leftovers must be used on the cone of size $(A+a_3)$. In particular, to finish both the $A$-cone and the $B$-cone in time, we need to have
\begin{gather*}
(T - a_1 - a_2)+A \geq A + a_3 \\
(T + a_1 + a_2) + B \geq B+2 \cdot T-a_3 \, ,
\end{gather*}
which only hold together if we have exactly $a_1+a_2+a_3=T$.

This shows that we can only produce a valid schedule for each $3$-segment if we can partition the numbers into groups of $3$ that each sum up to $T$. On the other hand, given such a $3$-partitioning, the method above indeed produces a schedule of cost $C_0$; in particular, note that our method described in Section~\ref{app:subs:conebehave} uses the same processor for each $A$-cone and $B$-cone, respectively, so it can indeed use the leftover nodes in the way described above. This completes the reduction.

\subsubsection{Technical details}

It remains to discuss some technicalities.

Note that with the exception of final cones, it could still happen that a cone is not entirely assigned to the same processor. This does result in extra communication towards the top node, but this is not always prohibitive, because some processors are idle in given communication steps, and hence in some cases, they could execute these extra communications without increasing the cost. We know that any cone must have strictly less than $3\! \cdot \! m$ such \textit{outlier nodes} that are assigned to a different processor than the top, otherwise this already produces a cost of more than $C_0$. However, these few outlier nodes in the cones could still cause small problems in the construction: they may allow us to finish the cones (and hence possibly execute some communication steps) earlier. As a result, the communication step on the main path could happen several nodes before $v_i$, and the workload may be split differently between the processors assigned to the path and to the attached cone (producing leftover groups of slightly different size than discussed before).

To prevent this, we consider a further parameter $D$, and we multiply the size of each gadget in our construction by a factor $D$ in the end; in particular, the size of all cones, and the size of each part $[v_i, v_{(i+1)})$ of the critical paths. Note that there can be no leftovers at all by the end of a $3$-segment, and during each $3$-segment, we process only $O(1)$ different gadgets ($9$ parts of the path, $9$ attached cones, semi-final cones consisting of $15$ cones altogether, in $4$ copies; altogether $132$). Each of these gadgets can only increase/decrease the number of leftover nodes on any processor by at most $3_{\!} \cdot _{\!} m$, by e.g.\ bringing a communication step slightly earlier, or using some outliers instead of the processor assigned to a cone. Hence even if these would add up, the number of leftover nodes on any processor can increase/decrease by at most $132 _{\!} \cdot _{\!} 3 _{\!} \cdot _{\!} m$. As such, if we select $D= O(1) _{\!}  \cdot _{\!} m$ with a large enough constant (e.g. $3 _{\!} \cdot _{\!} 132 _{\!} \cdot _{\!} 3$), then these small offsets in group sizes do not allow us to combine the leftovers in our cones in any other way than described before; in particular, each side in our inequalities can change by at most $\frac{D}{3}$, while an original difference of $1$ now corresponds to $D$ nodes. As such, even if we have a few outliers in some of the cones, this does not influence the behavior of the construction.

As for our other parameters, note that the embedding of $3$-partition can only result in less than $6 \cdot T$ further leftover nodes (besides $A$'s and $B$'s) in a given $3$-segment. It would be enough to make $A$ larger than this; however, to emphasize the size difference between the different kinds of gadgets in our constructions, let us select $A = m \cdot T$ (assuming that $m$ is much larger than $6$). To ensure $B > 4 \cdot A + 6 _{\!} \cdot _{\!} T$, we choose $B = 8 _{\!} \cdot _{\!} A$. We then need to ensure $2 \cdot M_0 > M + (4 _{\!} \cdot _{\!} m+1) \cdot (B + 6 _{\!} \cdot _{\!} T)$; since we will have $M < M_0 + B$, this is satisfied if we ensure $M_0 > (4 _{\!} \cdot _{\!} m+2) _{\!} \cdot _{\!} (B + 6 _{\!} \cdot _{\!} T)$. Since $m _{\!} > _{\!} 2$ and $B _{\!} > _{\!} 6 _{\!} \cdot _{\!} T$, we can do this e.g.\ by selecting $M_0 = 10 _{\!} \cdot _{\!}  m _{\!} \cdot _{\!} B$. To have $M_0 + A + 6 _{\!} \cdot _{\!} T < M < M_0 + B$, we then choose $M=M_0 + A + 7 _{\!} \cdot _{\!} T$. Finally, recall that the size of all gadgets is further increased by a factor $D=O(m)$. However, even in this case, the size $n$ is still polynomial in $m$.

To ensure that $T/2$ is an integer, we also assume for convenience that $T$ is divisible by $6$; otherwise, we can simply multiply all the $a_i$ by $6$ to obtain an equivalent $3$-partition problem.

Finally, note that the proof applies not only to the DS model, but also the other BSP variants in Table~\ref{tab:commodels}. In particular, broadcasting offers no advantages in in-trees, and free data movement also does not affect any of the properties of our construction.

\section{APX-hardness} \label{app:APX}

To show that the problem does not allow a PTAS on general DAGs, we provide a reduction from the MAX-$3$SAT($B$) problem: given a boolean formula over $N$ variables and $M$ clauses in conjunctive normal form, with each clause containing exactly $3$ literals, and such that any variable appears in at most $B$ clauses for some constant bound $B$, the goal is to find a boolean assignment to the variables that maximizes the number of clauses satisfied. This problem has been studied exhaustively before \cite{maxsat1, maxsat2, maxsat3}, and previous work has shown that it is already APX-hard for very small $B$ values, including $B=4$ \cite{maxsat4}.

We also note that any MAX-$3$SAT($B$) problem has a solution where at least $\frac{7}{8} \cdot M$ clauses are satisfied, and such a solution can be found in polynomial time (by applying standard derandomization techniques to a simple randomized algorithm \cite{sat78_1, sat78_2}).

Finally, we point out that the constants in our construction are chosen rather large, in order to simplify the analysis; with some further effort, it is possible to further improve upon these constants significantly.

\subsection{Construction}

\subparagraph*{General tools.}

The dominant tool in our construction is a \textit{rigid chain gadget}: a chain of nodes $u_1, u_2, ..., u_{\ell}$, with $(u_i, u_j) \in E$ for all $i<j$, $i, j \in [\ell]$. The gadget is a simple way to ensure that any good schedule assigns all these nodes to a single processor. If the nodes $u_{1}, ..., u_{\ell}$ are not assigned all to the same processor in a schedule, we say that the rigid chain is \textit{broken}, and if even the nodes $u_{\ell/2+1}, ..., u_{\ell}$ are not all assigned to the same processor, we say that it is \textit{critically broken} (assuming $\ell$ is even for simplicity). If the chain is critically broken, i.e.\ both processors appear in the second half of the chain, then this means that each node $u_1, ..., u_{\ell/2}$ needs to be sent to the other processor, resulting in at least $\frac{\ell}{2}$ communication steps altogether.

Our construction will be split into $M_{\!}+_{\!}N_{\!}+_{\!}2$ gadgets, and these will be connected in a serial fashion. That is, each gadget will have a last layer of nodes $S_l$ (sinks within the gadget), and a first layer of nodes $S_1$ (sources within the gadget), and all nodes in the last layer $S_l$ of the $i$-th gadget will have an edge to all nodes in the first layer $S_1$ of the $(i+1)$-th gadget. Intuitively, this encourages that all nodes of the current gadget are computed (and if necessary, communicated to the other processor) before we start processing the next gadget.

\subparagraph*{Construction details.}

Our construction has parameters $\lambda_c, \lambda_v$, both constant integers, and another integer parameter $\lambda_s$ that is linear in $M$.

We begin the construction by creating two rigid chains (called the \emph{long rigid chains}) of length $\lambda_s + M _{\!} \cdot _{\!} \lambda_c + N _{\!} \cdot _{\!} \lambda_v + 1$ each. Furthermore, we take another rigid chain of length $M _{\!} \cdot _{\!} \lambda_c$ (called the \emph{side chain}), and artificially attach it to the middle of one of the long rigid chains. This side chain will mimic the behavior of the segment of the long rigid chain between the $(\lambda_s+1)$-th and $(\lambda_s+M _{\!} \cdot _{\!} \lambda_c)$-th node. That is, the $j$-th node of the side chain (with $j \in [M _{\!} \cdot _{\!} \lambda_c]$) will have an incoming edge form (i) the $i$-th node of the selected long rigid chain, for all $i \in [\lambda_s]$, and from (ii) the $i$-th node of the side chain itself, for all $i \in [j-1]$. Similarly, $j$-th node of the side chain will have an outgoing edge to (i) the $i$-th node of the selected long rigid chain for all $i \in [\lambda_s + M _{\!} \cdot _{\!} \lambda_c + N _{\!} \cdot _{\!} \lambda_v + 1] \setminus [\lambda_s + M _{\!} \cdot _{\!} \lambda_c]$, and to (ii) the $i$-th node of the side chain itself, for all $i \in [\lambda_c] \setminus [j]$. We will refer to the long rigid chain with the side chain attached as the \emph{false chain}, and the other long rigid chain as the \emph{true chain}.

The first gadget of our DAG will simply consist of the first $\lambda_s$ nodes of both long rigid chains; we call this the \emph{initialization gadget}. The last layer of this gadget consists of only $2$ nodes: the $\lambda_s$-th nodes of both long rigid chains.

This is then followed by $M$ distinct clause gadgets, each of which represent the next clause of the input boolean formula. Each clause gadget consists of:
\begin{itemize}[itemsep=1pt, topsep=3pt]
 \item the next $\lambda_c$ nodes of the true chain, the next $\lambda_c$ nodes of the false chain, and the next $\lambda_c$ nodes of the side chain,
 \item $3$ distinct rigid chains of length $\lambda_c$, each representing one of the literals in the clause,
 \item $2$ distinct rigid chains of length $\lambda_c$, called padding chains,
 \item $2$ extra nodes.
\end{itemize}
The first layer of the gadget consists of the first nodes of the given segments of the true, false, and side chains, the first nodes of the $5$ additional rigid chains, and the $2$ extra nodes. The last layer of the gadget consists of the last nodes in the given segments of the true, false, and side chains, the last nodes of the $5$ additional rigid chains, and again the $2$ extra nodes. As such, all first and last layers consist of $10$ nodes in these gadgets.

The DAG then continues with $N$ distinct variable gadgets, each representing the next variable of the boolean formula. Each of the variable gadgets consist of:
\begin{itemize}[itemsep=2pt, topsep=3pt]
 \item the next $\lambda_v$ nodes of the true chain, and the next $\lambda_v$ nodes of the false chain,
 \item $2$ distinct rigid chains of length $\lambda_v$, with one of them representing the variable itself, and the other one representing the negated version of the variable,
 \item $2$ extra nodes.
\end{itemize}
The first layer of these gadget consist of the first nodes of the segments of the true/false chains, the first nodes of the $2$ additional rigid chains, and the $2$ extra nodes. The last layers of the gadget consists of the last nodes in the segments of the true/false chains, the last nodes of the $2$ rigid chains, and again the $2$ extra nodes. All first and last layers consist of $6$ nodes in these gadgets.

Furthermore, to capture the structure of the formula, we take each rigid chain $R$ in a variable gadget that represents a literal (i.e.\ the original or negated version of a variable), and consider all the (at most $B$) rigid chains $R_i$ in the clause gadgets that represent the same literal. For all such rigid chains, we draw an edge from all nodes of $R_i$ to all nodes of $R$.

Finally, after the variable gadgets, we add a closing dummy gadget, consisting of only the last nodes of the two long rigid chains. This is simply to ensure that we do not need to analyze the last variable gadget differently.

As mentioned, between every pair of consecutive gadgets, we draw an edge from all nodes in the last layer of the preceding gadget to all nodes in the first layer of the following gadget.

Altogether, our DAG consists of $n=2 \cdot \lambda_s + (8 \cdot \lambda_c + 2) \cdot M + (4 \cdot \lambda_v + 2) \cdot N + 2$ nodes. With $\lambda_s \in O(M)$, $\lambda_c, \lambda_v \in O(1)$ and $N \in O(M)$, the number of nodes is in $O(M)$.

\subparagraph*{Choice of our parameters.}

We analyze the BSP scheduling of this DAG on $P=2$ processors, with a choice of $g=8$ and $L=0$. The constant $B$ is already defined in the input satisfiability problem. Note that due to $P=2$, our reduction holds equivalently in all communication models within BSP.

As for the parameters in the construction, we select $\lambda_c=3$, and $\lambda_v = B \cdot \lambda_c +1$ to ensure $\lambda_v > B \cdot \lambda_c$. We discuss the choice of $\lambda_s$ later; for now, assume that $\lambda_s \in \Theta(M)$.

\subparagraph*{Intuition and clean schedules.}

The intuition behind the construction is as follows. Any reasonable schedule will assign all nodes in the true chain to a single processor (assume w.l.o.g. $p_1$), and all nodes in the false chain and its side chain to the other processor; otherwise, either the work or the communication costs become unreasonably high. Furthermore, none of the smaller rigid chains within the gadgets will be broken in a good solution either. Each gadget will be computed in a single superstep in a reasonable solution, first computing all nodes of the gadget in a computation phase, and then sending all values of the last layer of the gadget to the other processor in a single communication phase.

In the variable gadgets, the two rigid chains will be assigned to different processors, to avoid a too high work cost. We understand the literal assigned to $p_1$ to represent a true value, and the literal assigned to $p_2$ to represent a false value. Furthermore, when $R$ is a rigid chain representing a literal in a variable gadget, all the rigid chains $R_i$ representing the same literal in a clause gadget should be assigned to the same processor, to avoid communications. Finally, the two extra nodes should be assigned to different processors (arbitrarily).

In the clause gadgets, we have $8 _{\!} \cdot _{\!} \lambda_c + 2$ nodes altogether; out of these, $2 _{\!} \cdot _{\!} \lambda_c$ are always assigned to $p_2$ (false chain and side chain) and $\lambda_c$ are always assigned to $p_1$ (true chain). Then, if all $3$ literals in a clause are false, then we prefer to assign both of the padding chains and both of the extra nodes to $p_1$, but even in this case, the superstep of the gadget will have a work cost of $5 _{\!} \cdot _{\!} \lambda_c$. On the other hand, if $1$, $2$, or $3$ of the literals in the clause are true, then we can assign $2$, $1$ or $0$ of the padding chains to $p_1$, respectively (and the others to $p_2$), and assign the two extra nodes to different processors. This ensures that the work cost of the superstep is the minimal possible value of $4 _{\!} \cdot _{\!} \lambda_c + 1$.

If a schedule satisfies the properties above, we call it a \emph{clean schedule}. Note that in these clean schedules, the nodes $S_l$ in the last layer of the gadgets are always balanced between the two processors, and sent in a single communication phase, thus incurring a communication cost of $\frac{|S_l|}{2} _{\!} \cdot _{\!} g$. As such, over the whole construction, there are $\Upsilon:=2 + 10 _{\!} \cdot _{\!} M + 6 _{\!} \cdot _{\!} N$ communication steps, adding up to a communication cost of exactly $\frac{\Upsilon}{2} \cdot g$.

The work cost in the initialization gadget is $\lambda_s$, the work cost in each variable gadget is exactly $2 \cdot \lambda_v +1$, and the work cost of the dummy gadget at the end is $1$. However, the work cost of a clause gadget is not fixed: if the clause is unsatisfied, the work cost is $5 _{\!} \cdot _{\!} \lambda_c$, and if it is satisfied, then the work cost is $4 _{\!} \cdot _{\!} \lambda_c + 1$, i.e.\ it is lower by $(\lambda_c-1)$. Let us denote $\lambda'_c := (\lambda_c-1)$ for convenience. Let us denote the fixed part of this total cost by 
\[ M_0 := \lambda_s + (2 _{\!} \cdot _{\!} \lambda_v + 1) _{\!} \cdot _{\!} N + 5 _{\!} \cdot _{\!} \lambda_{c\!} \cdot _{\!} M + 1
 + \frac{\Upsilon}{2} _{\!} \cdot _{\!} g \, ; \]
recall that this is linear in $M$, i.e.\ there is a $\lambda \in O(1)$ such that $M_0 \leq \lambda \cdot M$. This means that if \textsc{sat} denotes the number of satisfied clauses in an assignment of the variables, then the corresponding clean schedule in the DAG has a total cost of exactly $M_0- \lambda'_c \cdot \textsc{sat}$. This also shows that any clean schedule has a total cost of at most $M_0$.

The technical part is to actually prove that any reasonable BSP schedule in this DAG is a clean schedule outlined above. More specifically, we show that if any schedule does not satisfy one of these properties, then we can transform it in a series of steps to obtain a clean schedule of at most the same cost.

\subsection{First observations}

We begin with some simple properties on the assignment of the long rigid chains, and the behavior of the schedule between consecutive gadgets.

\subparagraph*{The long rigid chains.}

We begin by showing that it is always suboptimal to not assign the long rigid chains to a single processor.

For this, we first discuss the choice of $\lambda_s$. Let us briefly introduce the notation $\textsc{sol} = M_0 - \lambda_s$, i.e.\ the maximal cost of a clean solution, as discussed above, minus $\lambda_s$. Also, let $\textsc{len} =  M _{\!} \cdot _{\!} \lambda_c + N _{\!} \cdot _{\!} \lambda_v + 1$, i.e.\ the length of the long rigid chains, minus $\lambda_s$. Recall that $\textsc{sol}, \textsc{len} \in O(M)$. We will need to ensure that $\textsc{sol} + \lambda_s < 2\cdot (\textsc{len} + \lambda_s)$, or in other words, $\textsc{sol} - 2 _{\!} \cdot _{\!} \textsc{len} < \lambda_s$. With $\textsc{sol} \in O(M)$ and $\textsc{len} > 0$, we can easily do this with a choice of $\lambda_s = \textsc{sol}$. With $\lambda_s \in O(M)$, we still have $M_0 \in O(M)$ for the fixed part of the cost of clean solutions. Note that $\textsc{sol} + \lambda_s < \frac{g}{4}\cdot (\textsc{len} + \lambda_s)$ also follows from the previous claim since $g=8$. We also get $\lambda_s > \textsc{len}$, since $\textsc{len} < \textsc{sol}$.

\renewcommand*{\proofname}{Proof.}

\begin{lemma}
Given a schedule $(\pi, \tau, \Gamma)$ of cost $C$, we can convert it into a schedule $(\pi', \tau', \Gamma')$ of cost $C'$ such that $C' \leq C$, and such that in $(\pi', \tau', \Gamma')$ the true chain is entirely assigned to one processor, and the false chain (and its side chain) is entirely assigned to the other.
\end{lemma}

\begin{proof}
Recall that the two long rigid chains have length $(\textsc{len}+\lambda_s)$; if either of these is critically broken, then this results in $\frac{1}{2} _{\!} \cdot _{\!} (\textsc{len}+\lambda_s)$ communication steps, and thus a total communication cost of $\frac{1}{4} _{\!} \cdot _{\!} (\textsc{len}+\lambda_s) _{\!} \cdot _{\!} g$ at least. Any clean schedule has a cost of at most $\textsc{sol}+\lambda_s$, and $\lambda_s$ was chosen to ensure $\frac{g}{4} _{\!} \cdot _{\!} (\textsc{len}+\lambda_s) > \textsc{sol}+\lambda_s$, so we could strictly reduce the cost by switching to any clean schedule. As such, the second half of each long rigid chain must be assigned entirely to a single processor. Note that $\lambda_s > \textsc{len}$, so this second half contains the segments in all the clause and variable gadgets.

We next show that the long rigid chains are not broken, and that the two chains should be assigned to different processors. Let $s$ be the first superstep that computes a node of the first clause gadget. First, assume that the second halves of the two chains are assigned to different processors. In this case, the cost of the schedule before superstep $s$ has to be at least $\lambda_s+g$: the chain segments in the initialization gadget induce a work cost of at least $\lambda_s$, and since the last layer of the initialization gadget is split between processors, at least one communication step is also required. As such, we can simply reassign all nodes in the long rigid chains to the processor owning the second half, and use a single superstep to compute the initialization gadget and communicate the two nodes in the last layer, at a cost of exactly $\lambda_s+g$. This cannot increase the cost, and ensures that the part of the schedule starting from superstep $s$ always remains valid.

On the other hand, assume the second halves of the two long rigid chains are assigned to the same processor $p_i$. Let $x$ be the number of nodes in the first halves of the two chains that are not assigned to $p_i$. These nodes together incur a communication cost of at least $\frac{x}{2} _{\!} \cdot _{\!} g$ before superstep $s$, since they are all sent to $p_i$. If we instead reassign all these nodes to $p_i$, this increases the total work cost by at most $x$, while removing these communication steps; thus it decreases the total cost if $\frac{x}{2} _{\!} \cdot _{\!} g > x$, i.e.\ $g>2$. At this point, both long rigid chains are assigned entirely to $p_i$. However, this then yields a work cost of at least $2 _{\!} \cdot _{\!} (\textsc{len} + \lambda_s)$, whereas any clean solution has a total cost of at most $\textsc{sol} + \lambda_s$.

As such, we can obtain a transformed schedule where neither of the long rigid chains is broken, and they are assigned to different processors. Assume w.l.o.g.\ that the true chain is assigned to $p_1$, and the false chain to $p_2$.

The side chain attached to the false chain is in the second half of the false chain, and thus also needs to be assigned entirely to $p_2$: if any node in the side chain is assigned to $p_1$, then as before, this results in $\frac{1}{2} _{\!} \cdot _{\!} (\textsc{len}+\lambda_s)$ communication steps from the first half of the false chain (and a total cost of at least $\frac{g}{4} _{\!} \cdot _{\!} (\textsc{len}+\lambda_s)$). Thus we can also assume that the side chain is assigned entirely to $p_2$.
\end{proof}

Note that this assignment of the long rigid chains already results in some unavoidable communication steps. Consider the connections between any two consecutive gadgets, with the layers $S_l$ and $S_0$. The assignments of the long rigid chains implies that in each such $S_l$ and $S_0$, there is at least one node assigned to $p_1$ and at least one node assigned to $p_2$. Hence in any schedule, all the nodes in $S_l$ need to be available to both processors, i.e.\ the always need to be sent to the other processor. This implies that over the whole construction, there are $\Upsilon=2 + 10 _{\!} \cdot _{\!} M + 6 _{\!} \cdot _{\!} N$ unavoidable communication steps, adding up to a communication cost of at least $\frac{\Upsilon}{2}_{\!} \cdot _{\!} g$ (the total communication cost in a clean solution).

\subparagraph*{Subschedules on gadgets.}

Furthermore, we show that the computation phases in the schedule corresponding to the gadgets can be nicely separated into disjoint segments, i.e.\ such that each computation phase computes nodes from only a single gadget. Intuitively, for the $i$-th gadget, there will be a superstep index $s^{(i)}$ (possibly after some artificial adjustments) such that all nodes of the $i$-th gadget are computed in superstep $s^{(i)}$ at the latest, and all nodes of the $(i+1)$-th gadget are computed in superstep $s^{(i)}_{\!}+_{\!}1$ at the earliest.

Consider the $i$-th gadget in our construction. Let $S_l$ be the last layer in this gadget, and $S_1$ be the first layer of the $(i+1)$-th gadget. Let $s_1$ be the last superstep when a node from the $i$-th gadget is computed on $p_1$, and let $s_2$ be the last superstep when a node from the $i$-th gadget is computed on $p_2$. Note that the long rigid chains are assigned to different processors, so these numbers are well-defined.

First assume that $s_1=s_2$. Consider the nodes computed on processor $p_j$ in this superstep, and take the last one of these according to some topological ordering; this has to be a node of $S_l$, otherwise the node has out-neighbors in the $i$-th gadget on the other processor that can only be computed after superstep $s_1$. This means that both $p_1$ and $p_2$ compute at least one node of $S_l$ in this superstep. This means that all nodes of $S_1$ require at least one value from the other processor which can be sent in superstep $s_1$ at the earliest; hence they cannot be computed before superstep $s_1+1$. Since other nodes in the $(i+1)$-th gadget are all descendants of a node in $S_1$, we can only start computing any node from the $(i+1)$-th gadget from superstep $s_1+1$. Hence we can set $s^{(i)}=s_1$.

Now assume $s_1 \neq s_2$, and let w.l.o.g.\ $s_1<s_2$. Consider the nodes computed on processor $p_2$ in superstep $s_2$, and consider the last of these from the $i$-th gadget according to some topological ordering; this has to be a node of $S_l$, otherwise there are nodes in the $i$-th gadget that are not computed at all. As such, $p_2$ computes at least one node of $S_l$ in superstep $s_2$. Furthermore, let $s'$ be the first superstep where $p_1$ computes a node from the $(i+1)$-th gadget, and let $v$ be the first node of the $(i+1)$-th gadget computed by $p_1$ in superstep $s'$, according to some topological ordering. If $s' \in S_1$, then $s'>s_2$, since a predecessor from $S_l$ computed on $p_2$ in $s_2$ first needs to be communicated to $p_1$. If $s' \notin S_1$, then $v$ has an in-neighbor $v'$ computed on $p_2$ within the $(i+1)$-th gadget. Since $v'$ is a descendant of all nodes in $S_l$, and one of these nodes is computed in $s_2$, $v'$ (and in general, any node of the $(i+1)$-th gadget) is computed in $s_2$ at earliest. Since a communication is required to send $v'$ to $p_1$ before computing $v$, we again have $s'>s_2$. This implies that processor $p_1$ does not compute anything in supersteps $s_1\!+\!1, ..., s_2$, and $p_2$ also starts computing the nodes of the $(i+1)$-th gadget in superstep $s_2$ at the earliest.

We consider two cases. If all nodes computed on $p_2$ in superstep $s_2$ are from the $i$-th gadget, then we can naturally set $s^{(i)}=s_2$, since all nodes from the $(i+1)$-th gadget are computed after $s_2$. Otherwise, if $p_2$ computes nodes from both the $i$-th and $(i+1)$-th gadget in superstep $s_2$, then we can artificially split up superstep $s_2$ into two `imaginary' computation phases $s_{2,1}$ and $s_{2,2}$, with $s_{2,1}$ containing the nodes from the $i$-th, and $s_{2,2}$ containing the nodes from the $(i+1)$-th gadget. Since no computation happens on $p_1$ in $s_2$, the total work cost of these computation phases is the same as the work cost of $s_2$; as such, in terms of work costs, we can simply analyze this superstep $s_2$ as if it had two computation phases. We can then set $s^{(i)}=s_{2,1}$ as before.

The discussion above shows that the sum of work costs over the new (potentially split) computation phases is identical to the total original work cost. This essentially allows us to analyze the work cost separately for each gadget. More specifically, it ensures the following property: if the $i$-th gadget has at least $W_i$ nodes assigned to some processor, then the entire schedule has a total work cost of at least $\sum W_i$, with the sum understood over all gadgets.

\subsection{Converting to a clean schedule}

We will modify the schedule in two phases, each consisting of several smaller steps. In the first phase, we transform the original schedule into a \emph{semi-clean} schedule, which fulfills almost all properties of a clean schedule, but allows the two rigid chains in variable gadgets to be assigned to the same processor. In this first phase, each of our steps will remove at least $\Delta t_c$ communication steps from $\Gamma$, and on the other hand, move at most $\Delta t_w$ nodes to another processor (thus intuitively, increasing work costs by at most $\Delta t_w$). We show in the end that these steps do not increase the cost altogether. We point out several properties of this technique:
\begin{itemize}[itemsep=3pt, topsep=3pt]
 \item It is a key aspect that the steps are analyzed together in the end, and not one by one: when we remove some communication steps, the total cost might not decrease immediately, since the corresponding communication phases might be dominated by the values sent in the other direction.
 \item Each of our steps remove elements from $\Gamma$ that correspond to communicating different nodes, so these are indeed distinct communication steps.
 \item Nodes in the last layers of gadgets are always communicated to the other processor, so these nodes never appear in any of our steps of removing $\Delta t_c$ communications.
\end{itemize}
We also note that the internal states during these steps are only for the analysis, and might not correspond to valid BSP schedules, but the final semi-clean schedule is valid.

Then in the second phase, we transform our semi-clean schedule into a clean schedule. This phase does not affect communication at all. Here each of our transformation steps create another valid BSP schedule that strictly increases the total work cost compared to the previous schedule.

\subparagraph*{Fixing the broken chains.}

We begin by considering any broken rigid chain in the variable gadgets. For any such chain, assume that the last node of the chain is assigned to processor $p_j$; we then reassign all other nodes in the chain to $p_j$, too. If there were originally $x$ nodes in the chain assigned to the other processor, then all of these had to be sent to $p_j$ before; as such, the transformation removes $\Delta t_c=x$ communication steps, and moves at most $\Delta t_w = x$ nodes to the other processor. Note that any nodes in the clause gadgets with an edge towards this chain also have an edge to the last node of the chain, which was already assigned to $p_j$, so the transformation does not make any new communication step necessary.

We execute the same thing with all the padding rigid chains in the clause gadgets.

We then move to the rigid chains that represent literals in the clause gadgets. Let $R_1$ be such a rigid chain, and let $R$ be the rigid chain that represents the same literal in a variable gadget. Note that $R$ is already not broken; let $p_j$ denote the processor that $R$ is assigned to. If the last node of $R_1$ is also assigned to $p_j$, then we proceed as before, reassigning all nodes in $R_1$ from the other processor to $p_j$, while removing $\Delta t_c=x$ communication steps and moving at most $\Delta t_w = x$ nodes to the other processor. On the other hand, if the last node in $R_1$ is not assigned to $p_j$, then we reassign every node in $R_1$ to $p_j$. Note that the first $(\lambda_c-1)$ nodes in $R_1$ had an edge both to a node computed on $p_j$ (last node of $R$) and a to node computed on the other processor (last node of $R_1$), so the reassignment removes $\Delta t_c=(\lambda_c-1)$ communication steps at least, and moves at most $\Delta t_w=\lambda_c$ nodes.

Note that for all of our modification steps above, it holds that $\frac{g}{2} \cdot \Delta t_c \geq \Delta t_w$. In particular, when $\Delta t_c = x$ and $\Delta t_w = x$, this holds since $g \geq 2$. When $\Delta t_c = \lambda_c-1$ and $\Delta t_w = \lambda_c$ in the clause gadgets, it holds since $g \geq 3$ and $\lambda_c \geq 3$. This implies that for the total number $\Delta T_c = \sum \Delta t_c$ of communication steps removed, and for the total number $\Delta T_w = \sum \Delta t_w$ of nodes moved to the other processor, we have that $\Delta T_c \geq \frac{2}{g} _{\!} \cdot _{\!} \Delta T_w$.

\subparagraph*{To a semi-clean schedule.} We now consider the gadgets one after another. For each gadget, we place all nodes of the gadget into a single computation phase, and send all values from its last layer in a single communication phase, creating our semi-clean solution. We analyze the resulting work cost for each gadget. In particular,
\begin{itemize}[itemsep=2pt, topsep=3pt]
 \item let $t_w^{(i)}$ denote the original work cost of the computation phases in the $i$-th gadget,
 \item let $t_w^{(i)'}$ denote the work cost in the (single) computation phase of the $i$-th gadget in our resulting semi-clean schedule,
 \item let $\Delta t_w^{(i)}$ denote the number of nodes that were moved to the other processor in the previous part, and are within the $i$-th gadget.
\end{itemize}
Clearly $\sum \Delta t_w^{(i)} = \Delta T_w$ over all gadgets, and note that no nodes were moved in the first and last gadgets, since these only contain the long rigid chain segments. More importantly, if we now have $w_0$ nodes assigned to some processor $p_j$ in the $i$-th gadget, then $t_w^{(i)} \geq w_0 - \Delta t_w^{(i)}$, since at least $w_0 - \Delta t_w^{(i)}$ nodes must have been assigned to $p_j$ originally, incurring a work cost of $w_0 - \Delta t_w^{(i)}$ in the computation phases of the gadget.

First consider the clause gadgets; recall that each one has $2 _{\!} \cdot _{\!} \lambda_c$ nodes assigned to $p_2$ (false chain and side chain), $\lambda_c$ nodes assigned to $p_1$ (true chain), $3$ rigid chains of size $\lambda_c$ and two padding chains of size $\lambda_c$ assigned to some unspecified processor, and two further nodes.
\begin{itemize}[itemsep=3pt, topsep=4pt]
 \item If all $3$ rigid chains representing literals are assigned to $p_2$, then there are currently at least $5 _{\!} \cdot _{\!} \lambda_c$ nodes assigned to $p_2$ in the gadget, so we had $t_w^{(i)} \geq 5 _{\!} \cdot _{\!} \lambda_c - \Delta t_w^{(i)}$. In this case, we reassign the two padding chains and the two extra nodes all to $p_1$, if they were not assigned to $p_1$ before. The resulting computation phase has a cost of exactly $t_w^{(i)'} = 5 _{\!} \cdot _{\!} \lambda_c$. Thus this ensures $t_w^{(i)} \geq t_w^{(i)'} - \Delta t_w^{(i)}$.
 \item If $x \geq 1$ of the rigid chains for literals are assigned to $p_1$, then we reassign $(x-1)$ of the padding chains to $p_2$, and the remaining $2-(x-1)$ to $p_1$, and reassign the two extra nodes to separate processors. With $4$ rigid chains (or chain segments) and an extra node on both $p_1$ and $p_2$, the resulting computation phase has $t_w^{(i)'} = 4 _{\!} \cdot _{\!} \lambda_c + 1$. Note that with $8 _{\!} \cdot _{\!} \lambda_c + 2$ nodes in the gadget, this is the minimal work cost for its computation phases, and hence $t_w^{(i)} \geq t_w^{(i)'}$; however, for simplicity, we again use the weaker form $t_w^{(i)} \geq t_w^{(i)'} - \Delta t_w^{(i)}$.
\end{itemize}
In both cases, our transformation also ensures that the $10$ nodes of the last layer are split equally among the processors, so we can communicate all values in a single communication phase at a cost of $5_{\!} \cdot _{\!} g$ in the end.

Now consider the variable gadgets: each one has $\lambda_v$ nodes assigned to $p_2$ (false chain), $\lambda_v$ nodes assigned to $p_1$ (true chain), $2$ rigid chains of size $\lambda_v$ assigned to some unspecified processor, and two further nodes.
\begin{itemize}[itemsep=3pt, topsep=4pt]
 \item If the $2$ rigid chains are assigned to the same processor $p_j$, then there are currently at least $3 _{\!} \cdot _{\!} \lambda_v$ nodes assigned to $p_j$ in the gadget, so we had $t_w^{(i)} \geq 3 _{\!} \cdot _{\!} \lambda_v - \Delta t_w^{(i)}$. In this case, we reassign the two extra nodes to the other processor than $p_j$. The resulting computation phase has a cost of exactly $t_w^{(i)'} = 3 _{\!} \cdot _{\!} \lambda_v$, ensuring $t_w^{(i)} \geq t_w^{(i)'} - \Delta t_w^{(i)}$.
 \item If the $2$ rigid chains are assigned to different processors, then we reassign the two extra nodes to separate processors. The resulting computation phase has $t_w^{(i)'} = 2 _{\!} \cdot _{\!} \lambda_v + 1$. With $4 _{\!} \cdot _{\!} \lambda_c + 2$ nodes in the gadget, this is the minimal cost for its computation phases, so $t_w^{(i)} \geq t_w^{(i)'}$. For simplicity, we again only use $t_w^{(i)} \geq t_w^{(i)'} - \Delta t_w^{(i)}$.
\end{itemize}
Again in both cases, the $6$ nodes of the last layer are split equally among the processors in our schedule, so we can have a single communication phase of cost $3_{\!} \cdot _{\!} g$ in the end.

Note that after we processed all gadgets, we arrive at a valid and semi-clean schedule, where the only difference from a clean schedule is that the two rigid chains in variable gadgets might be assigned to the same processor. Furthermore, if the original schedule had a total work cost of $T_w = \sum t_w^{(i)}$, and the resulting schedule has a total work cost of $T'_w = \sum t_w^{(i)'}$ (with the sum understood over all gadgets), then our observations in each gadget ensure that we have $T_w \geq T'_w - \Delta T_w$ altogether. Moreover, since the resulting schedule is semi-clean, and has a communication cost of $\frac{\Upsilon}{2}_{\!} \cdot _{\!} g$ exactly, its total cost is $T'_w + \frac{\Upsilon}{2}_{\!} \cdot _{\!} g$. In contrast to this, the original schedule had at least $\Upsilon$ communication steps for the nodes in the last layers of the gadgets, and the $\Delta T_c$ communication steps removed, adding up to a communication cost of at least $\frac{1}{2} _{\!} \cdot _{\!} g _{\!} \cdot _{\!} (\Upsilon + \Delta T_c)$. Using $\Delta T_c \geq \frac{2}{g} _{\!} \cdot _{\!} \Delta T_w$, the cost of the old schedule was at least
\[ T_w + \frac{1}{2} _{\!} \cdot _{\!} g _{\!} \cdot _{\!} (\Upsilon + \Delta T_c) \geq T_w + \Delta T_w + \frac{\Upsilon}{2} _{\!} \cdot _{\!} g \geq T'_w + \frac{\Upsilon}{2}_{\!} \cdot _{\!} g \, . \]
Thus our semi-clean schedule indeed has at most as high cost as the original schedule.

\subparagraph*{To a clean schedule.}

From the BSP schedule above, it only remains to ensure that in each variable gadget, the rigid chains are assigned to different processors. As such, consider any variable gadget where this does not hold, and both rigid chains are assigned to $p_j$. We then arbitrarily select one of the rigid chains $R$, and assign it to the other processor; at the same time, we pick one of the extra nodes in the gadget, and reassign it from the other processor to $p_j$. Before this step, the total work cost in the single supersteps of the variable gadget was $3 _{\!} \cdot _{\!} \lambda_v$; after the transformation, the work cost is only $2 _{\!} \cdot _{\!} \lambda_v + 1$. This means that the work cost in the gadget is reduced by $(\lambda_v -1)$. Note that the reassignment of the extra node ensures that the last layer of the gadget still has $3$ nodes on both processors, so the cost $3 _{\!} \cdot _{\!} g$ of the communication phase at the end remains unchanged.

Simultaneously to each such transformation, we consider all rigid chains $R_i$ in the clause gadgets that are connected to $R$, and also reassign all such $R_i$ to the other processor. Besides this, we also modify the padding chains/extra nodes in the clause gadget of $R_i$ to ensure that it corresponds to a clean schedule, and to maintain the previous work cost. More specifically, consider the $3$ rigid chains representing literals in the clause gadget of $R_i$.
\begin{enumerate}[topsep=2pt]
 \item If all $3$ of the rigid chains were assigned to $p_1$, then both padding chains were assigned to $p_2$, and the extra nodes to different processors. In this case, besides reassigning $R_i$ to $p_2$, we reassign one of the padding chains to $p_1$. 
 \item If $2$ of the rigid chains were assigned to $p_1$, then the two padding chains were assigned to different processors, as well as the two extra nodes.
\begin{enumerate}[topsep=1pt]
  \item If we are reassigning $R_i$ to $p_1$, we also reassign a padding chain from $p_1$ to $p_2$.
  \item If we are reassigning $R_i$ to $p_2$, we also reassign a padding chain from $p_2$ to $p_1$.
\end{enumerate}
 \item If $1$ of the rigid chains was assigned to $p_1$, then both padding chains were assigned to $p_1$, and the extra nodes to different processors.
 \begin{enumerate}[topsep=1pt]
  \item If we are reassigning $R_i$ to $p_1$, we also reassign a padding chain from $p_1$ to $p_2$.
  \item If we are reassigning $R_i$ to $p_2$, we also reassign an extra node from $p_2$ to $p_1$.
\end{enumerate}
 \item If none of the rigid chains were assigned to $p_1$, then the two padding chains were assigned to $p_1$, as well as the two extra nodes. In this case, besides reassigning $R_i$ to $p_1$, we reassign an extra node to $p_2$.
\end{enumerate}
Cases (1), (2a), (2b) and (3a) leave the work cost in the clause gadget unchanged. Case (4) actually reduces the work cost in the clause gadget by $(\lambda_c-1)$. However, case (3b) might actually increase the work cost in the gadget by $(\lambda_c-1)$. Note that all cases ensure that we still have $5$ nodes on both processors in the last layer $S_l$ afterwards, and hence the cost of the communication phase remains $5 _{\!} \cdot _{\!} g$.

Altogether, reassigning a rigid chain $R$ in a variable gadget decreases the work cost in the variable gadget by $(\lambda_v -1)$, and in the clause gadgets, it increases the work cost by at most $B \cdot (\lambda_c-1)$ altogether, since any variable appears in at most $B$ clauses. We have ensured $\lambda_v > B _{\!} \cdot _{\!} \lambda_c$, hence $(\lambda_v-1) > B _{\!} \cdot _{\!} (\lambda_c-1)$; this implies that the reassignment of every chain $R$ in the variable gadget decreases the total cost of the schedule. As such, we can execute these one after another, making sure that the two rigid chains are assigned to different processors in all variable gadgets. Our adjustments in the corresponding clause gadgets make sure that at the end of this process, we obtain a clean schedule that describes a valid solution of $3$SAT. The cost of this clean schedule is at most as much as that of the original schedule.

\subsection{Reduction}

The conversion in the previous sections implies two things: (i) we can assume that the optimal schedule is a clean schedule, (ii) we can assume that any approximation algorithm also returns a clean schedule, since otherwise, we can simply further reduce its cost and obtain a clean schedule (thus considering an improved version of the same algorithm).

This construction and conversion provides a natural L-reduction between the two problems. The optimum value in the original MAX-$3$SAT($B$) problem is at least $\frac{7}{8} _{\!} \cdot _{\!} M$, while the optimum cost in the derived BSP scheduling problem is at most $M_0 \leq \lambda _{\!} \cdot _{\!} M$, so the two optima are indeed within a constant factor of $\frac{7}{8} _{\!} \cdot _{\!} \lambda$. Furthermore, if $\textsc{sat}_{\textsc{opt}}$ denotes the number of satisfied clauses in the optimal MAX-$3$SAT($B$) solution, then the transformation shows that the optimal cost of BSP scheduling our DAG is $C_{\textsc{opt}}=(M_0- \lambda'_c \cdot \textsc{sat}_{\textsc{opt}})$. Assume that an algorithm returns a BSP schedule of cost $C_{\textsc{sol}}$. As shown before, we can transform this into a clean schedule without increasing its cost, and this clean schedule corresponds to a valid solution of MAX-$3$SAT($B$): if this variable assignment satisfies $\textsc{sat}_{\textsc{sol}}$ clauses, then the cost of the clean BSP schedule is exactly $(M_0-\lambda'_c \cdot \textsc{sat}_{\textsc{sol}}) \leq C_{\textsc{sol}}$. As such, altogether we have
\[ C_{\textsc{sol}} - C_{\textsc{opt}} \geq (M_0-\lambda'_{c\!} \cdot _{\!} \textsc{sat}_{\textsc{sol}}) - (M_0- \lambda'_{c\!} \cdot _{\!} \textsc{sat}_{\textsc{opt}}) = \lambda'_{c\!} \cdot _{\!}(\textsc{sat}_{\textsc{opt}\!} - \textsc{sat}_{\textsc{sol}}) \, . \]
This completes the L-reduction. Since MAX-$3$SAT($B$) is known to be APX-hard, this implies that BSP scheduling is also APX-hard, and hence does not allow a PTAS. We note that since $P _{\!} \in _{\!} O(1)$, and assigning all nodes to the same processor and superstep is a $P$-approximation, the problem is also within APX, and hence it is APX-complete.

\section{Communication scheduling} \label{app:CS}

We now present our proofs on the hardness of the communication scheduling problem.

As a side note, we point out that a slightly similar problem to CS also is studied in the work of \cite{recent3}, introduced as an intermediate scheduling task (named UMPS) in their reduction. However, in UMPS, only the assignment $\pi$ fixed, the assignment to time steps is not; furthermore, UMPS is studied in an entirely different setting to ours, with no communication costs at all and non-constant $P$.

\subsection{CS with 2 processors} \label{app:subs:greedy}

We begin by showing that with $P=2$ processors, there is a relatively simple algorithm to find the optimal communication schedule in polynomial time.

\renewcommand*{\proofname}{Proof of Lemma \ref{th:CSDP}}

\begin{proof}
Consider the following greedy algorithm: we iterate through the supersteps $1$, ..., $S$ in order. In the current superstep $s$, let $\Lambda_{1,2}$ denote the set of values $u$ that satisfy the following properties: (i) they were already computed on $p_1$ before ($\pi(u)=p_1$, $\tau(u) \leq s$), (ii) they are required later on $p_2$ (there is $(u,v) \in E$ with $\pi(v)=p_2$ and $\tau(v)>s$), and (iii) they have not been sent from $p_1$ to $p_2$ so far in our algorithm. Furthermore, assume that this set is organized as a list of values, sorted in increasing order according to the first time they are needed on $p_2$. That is, if for two values $u_1, u_2 \in \Lambda_{1,2}$ it is the successors $v_1$, $v_2$ on $p_2$ that have a minimal $\tau(v_1)$ and $\tau(v_2)$, respectively, with $\tau(v_1) < \tau(v_2)$, then $u_1$ will appear earlier in the list $\Lambda_{1,2}$ than $u_2$. Finally, let $\Lambda'_{1,2} \subseteq \Lambda_{1,2}$ denote the subset of this list that is immediately needed in superstep $(s+1)$, i.e.\ there is a $(u,v) \in E$ with $\pi(v)=p_2$ and $\tau(v)=s+1$; note that these values appear at the beginning of $\Lambda_{1,2}$. Symmetrically, let $\Lambda_{2,1}$ and $\Lambda'_{2,1}$ denote the same values in the other direction.

Let us assume w.l.o.g. that $|\Lambda'_{1,2}| \geq |\Lambda'_{2,1}|$. Our greedy algorithm will send exactly $C_{comm}\,\!^{(s)} = |\Lambda'_{1,2}|$ values from $p_1$ to $p_2$ in superstep $s$: the values contained in $\Lambda'_{1,2}$. In the other direction, from $p_2$ to $p_1$, we will send the first $C_{comm}\,\!^{(s)}$ of values of $\Lambda_{2,1}$ according to its ordering (but of course at most $|\Lambda_{2,1}|$ values if $|\Lambda_{2,1}|<C_{comm}\,\!^{(s)}$). Note that this indeed results in a communication cost of $C_{comm}\,\!^{(s)}$ for the superstep, and all the values required in the computation phase of superstep $(s+1)$ are indeed transmitted.

The algorithm clearly runs in polynomial time. The number of supersteps is at most $n$, and in each superstep, we can update the lists $\Lambda_{1,2}$ and $\Lambda_{2,1}$ in at most $O(n)$ time.

It remains to show that this algorithm indeed finds the optimal communication schedule. In any concrete superstep $s$, consider the values communicated by the greedy method above; we show that there is at least one optimal solution that executes these communication steps in superstep $s$. The optimality of our algorithm then follows from an induction in $s$.

Given a concrete superstep $s$, let $\Gamma$ denote the best solution that sends the same values in superstep $s$ as our greedy algorithm, and assume for contradiction that there is an alternative solution $\Gamma'_1$ that results in a lower total cost. Firstly, let us define $\Delta=C_{comm}\,\!^{(s)} - \max(|\Lambda'_{1,2}|, |\Lambda'_{2,1}|)$ in $\Gamma'_1$. If $\Delta$ is positive, then let us take $\Delta$ arbitrary values from both $\Lambda_{1,2} \setminus \Lambda'_{1,2}$ and $\Lambda_{2,1} \setminus \Lambda'_{2,1}$, and instead of communicating them in superstep $s$, let us assign the same communication steps to superstep $(s+1)$. Note that by the definition of $\Lambda'$, none of these values are required in superstep $(s+1)$, so we can indeed do this. Furthermore, this modified solution $\Gamma'_2$ reduces $C_{comm}\,\!^{(s)}$ by exactly $\Delta$, and only increases $C_{comm}\,\!^{(s+1)}$ by $\Delta$ at most, so the total cost does not increase either.

Furthermore, it is clear that the values $\Lambda'_{1,2}$ and $\Lambda'_{2,1}$ need to be transmitted in superstep $s$. Let $v_1$, ..., $v_k$ denote the further values that are sent by our algorithm from $p_2$ to $p_1$ in superstep $s$; recall that these are the first $k=|\Lambda'_{1,2}| - |\Lambda'_{2,1}|$ values in $\Lambda_{2,1} \setminus \Lambda'_{2,1}$. Assume that $\Gamma'_2$ transfers the values $\hat{v}_1$, ..., $\hat{v}_{k'}$ from $p_2$ to $p_1$ in superstep $s$ instead; assume that these are ordered according to the same rule as in our $\Lambda$ lists. Note that $k' \leq k$, since $\Gamma'_2$ has  $C_{comm}\,\!^{(s)} = \max(|\Lambda'_{1,2}|, |\Lambda'_{2,1}|)$. Now let us further modify $\Gamma'_2$ into $\Gamma'_3$ as follows: for all $i \in [k']$, we take the communication steps in $\Gamma'_2$ where $v_i$ and $\hat{v}_i$ are sent from $p_2$ to $p_1$, and we exchange these two steps for every such $i$.

The resulting $\Gamma'_3$ is still a valid communication schedule. On the one hand, sending $v_i$ instead of $\hat{v}_i$ in superstep $s$ is naturally a valid step, as also demonstrated by solution $\Gamma$. On the other hand, both the $v_i$ and the $\hat{v}_i$ values are ordered according to the time they are required on $p_1$, so $\hat{v}_i$ is always required at a later (or the same) superstep as $v_i$; this means that sending $\hat{v}_i$ in the step where originally $v_i$ was sent is also a valid step. Moreover, $\Gamma'_3$ has the same communication cost as $\Gamma'_2$, since the number of values communicated in each superstep remains unchanged. Finally, if $\Gamma'_3$ happens to have $k'<k$, we can also move the communication steps sending the $(k'+1)$-th, ..., $k$-th node of $\Lambda_{2,1}$ (from $p_2$ to $p_1$) to superstep $s$, without increasing $C_{comm}\,\!^{(s)}$.

This ensures that $\Gamma'_3$ now indeed transfer the same values in superstep $s$ as $\Gamma$. However, $\Gamma'_3$ was obtained from $\Gamma'$ without increasing the total cost; this contradicts the optimality of $\Gamma$.
\end{proof}

\subsection{NP-hardness in DB, FS and FB} \label{app:subs:3Dm}

Next we show that in general, the CS problem is NP-hard if communication happens according to the DB, FS or FB models.

\renewcommand*{\proofname}{Proof of Theorem \ref{th:commsched}}

\begin{proof}
In these models, NP-hardness can be shown through a reduction from a special case of the $3$D-matching problem. In this problem, we are given three classes of nodes $X$,$Y$,$Z$ of the same size $N=|X|=|Y|=|Z|$, and a set $\mathcal{H}$ of $M$ triplets $(x,y,z) \in X \times Y \times Z$, and the question is whether we can find $N$ triplets in $\mathcal{H}$ that form an \textit{exact cover}, i.e.\ they are disjoint (and hence they cover $X$, $Y$ and $Z$ entirely). This problem is known to be NP-complete \cite{GJ79}.

Given a $3$D-matching problem, we convert it into a CS problem on $P \! = \! 7$ or $P \! = \! 8$ processors, depending on the model. Three of the processors, named $p_X$, $p_Y$ and $p_Z$, will correspond to the classes $X$,$Y$,$Z$ and another three processors $p_{\widetilde{X}}$, $p_{\widetilde{Y}}$ and $p_{\widetilde{Z}}$ will represent them in \textit{reverse ordering} (to be discussed later). We will also have a source processor $p_0$ which needs to send values to all these six processors, without making the communication cost too large. In the FS model, we also have an auxiliary processor $p_a$, and instead of sending the values directly, $p_0$ will need to send them to $p_a$ first, and $p_a$ will need to relay them to the remaining processors.

As a general tool in our construction, for any superstep $s$, we can add a node $v$ that is assigned to some processor $p_1$ in superstep $s$, and draw an edge from this to another node that is assigned to some other processor $p_2$ in superstep $(s+1)$. This node can only be communicated form $p_1$ to $p_2$ in superstep $s$; as such, it does not have any degree of freedom in our CS problem, and indeed can be understood as a fixed cost that contributes $1$ to both $C_{sent}\,\!^{(s,p_1)}$ and $C_{rec}\,\!^{(s,p_2)}$. By adding a set of these immediately needed values between any pair of processors, we can essentially ensure that we have a desired fixed value for $C_{sent}\,\!^{(s,p)}$ and $C_{rec}\,\!^{(s,p)}$ for any specific $p$ or $s$, adding up to a fixed unavoidable communication cost over the whole schedule. The main idea of our construction is to convert the input $3$D-matching to a CS problem where the maximal allowed cost is set to this fixed communication cost. Hence intuitively speaking, the solvability of our CS problem will depend on whether the remaining (not fixed) communication steps in our construction can be scheduled in such a way that the communication cost is not increased above this fixed cost in any superstep; in other words, for the communication steps that can be scheduled flexibly, we need to find appropriate supersteps where the $C_{sent}\,\!^{(s,p)}$ and $C_{rec}\,\!^{(s,p)}$ values of the sending and receiving processors are not yet ``saturated'', i.e.\ they can both be increased without increasing the total cost $C_{comm}\,\!^{(s)}$ in the superstep.

The DAG schedule in our CS construction will be split into two parts: an \textit{initialization} part and a \textit{matching} part, consisting of $S_1$ and $S_2$ supersteps, respectively. Let us denote the nodes of $X$, $Y$ and $Z$ by $x_i$, $y_i$ and $z_i$, respectively, for $i \in [N]$. For each specific triple $(x_i, y_j, z_k) \in \mathcal{H}$, a construction will have a distinct node $u_{i,j,k}$ that is assigned to processor $p_0$ and superstep $1$ that represents this specific triple. We then draw an edge from this node $u_{i,j,k}$ to an arbitrary node in the following processors/supersteps:
\begin{itemize}[leftmargin=17pt, itemsep=3pt]
 \item processors $p_X$, $p_Y$ and $p_Z$ in superstep $S_1 + 6 \cdot i + 1$,
 \item processors $p_{\widetilde{X}}$, $p_{\widetilde{Y}}$ and $p_{\widetilde{Z}}$ in superstep $S_1 + 6 \cdot (N+1-i) +1$,
\end{itemize}
setting $S_2=6 \cdot N$.

Intuitively, the initialization part will allow us to transfer $M-N$ (but not more!) of the values $u_{i,j,k}$ to all the six other processors where they are required, already until superstep $S_1$, without increasing the communication cost in any of the supersteps $1, ..., S_1$. The specific details of how this is done depends on the concrete model. This will mean that in the matching part, we need to transfer the remaining $N$ values $u_{i,j,k}$ to the six other processors without increasing the communication costs in steps $S_1+1, ..., S_2$. Our construction will ensure that this is possible if and only if the $N$ triplets corresponding to these $N$ remaining nodes $u_{i,j,k}$ form an exact cover of $X$,$Y$,$Z$. As such, our construction will allow a communication schedule with the fixed immediate costs if and only if the original $3$D-matching problem is solvable.

The construction details for the matching part are identical in all of the models. In each superstep $s \in \{S_1+1, ..., S_2\}$, we ensure that $C_{sent}\,\!^{(s,p)}=1$ and $C_{rec}\,\!^{(s,p)}=1$ for all of the processors, with two exceptions: firstly, we have $C_{sent}\,\!^{(s,p_0)}=0$ for $p_0$, and secondly, we have $C_{rec}\,\!^{(s,p')}=0$ for another processor $p'$ that depends on the remainder of $s$ when divided by $6$; in particular, we set 
\begin{itemize}[leftmargin=17pt, itemsep=3pt]
 \item $C_{rec}\,\!^{(s,p_X)}=0$ if $s-S_1$ gives $1$ modulo $6$,
 \item $C_{rec}\,\!^{(s,p_Y)}=0$ if $s-S_1$ gives $2$ modulo $6$,
 \item $C_{rec}\,\!^{(s,p_Z)}=0$ if $s-S_1$ gives $3$ modulo $6$,
 \item $C_{rec}\,\!^{(s,p_{\widetilde{X}})}=0$ if $s-S_1$ gives $4$ modulo $6$,
 \item $C_{rec}\,\!^{(s,p_{\widetilde{Y}})}=0$ if $s-S_1$ gives $5$ modulo $6$,
 \item $C_{rec}\,\!^{(s,p_{\widetilde{Z}})}=0$ if $s-S_1$ gives $0$ modulo $6$.
\end{itemize}
This can indeed be easily implemented with fixed communication steps in any superstep $s$: we simply add an ``imaginary'' edge from $p_0$ to the desired target node $p'$, we arbitrarily extend this to a perfect matching (i.e.\ each processor sends a single fixed value to another processor), and then discard this imaginary edge. This results in a set of fixed communications where in terms of costs, we indeed have $C_{sent}\,\!^{(s,p)}=1$ and $C_{rec}\,\!^{(s,p)}=1$ for all of the processors, with the exception of $C_{sent}\,\!^{(s,p_0)}=0$ and $C_{rec}\,\!^{(s,p')}=0$. As such, in any of the models, there is only one possible communication step in superstep $s$ that does not increase the communication cost: sending a single value from $p_0$ to $p'$.

Now assume that at the end of the initialization part, there are $(M-N)$ of the values $u_{i,j,k}$ that are already available on all processors, and the remaining $N$ values are still available on $p_0$ only.

\begin{lemma}
The remaining $N$ values $u_{i,j,k}$ can be communicated without increasing the communication costs if and only if the corresponding $N$ triplets form an exact cover.
\end{lemma}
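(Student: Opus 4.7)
\renewcommand*{\proofname}{Proof plan}
\begin{proof}
The plan is to combine a tight slot-counting argument with a Hall-type matching argument that exploits the symmetry between forward and reverse processors. Let $R \subseteq \mathcal{H}$ denote the $N$ triples corresponding to the remaining values. Each $u_{i,j,k} \in R$ must still be sent from $p_0$ to all six target processors during the matching part, giving $6N$ required sends. By construction, the matching part offers exactly $6N$ free slots, one per superstep, each dedicated to a specific target processor determined by $(s - S_1) \bmod 6$. Thus every slot must be used, and the send of $u_{i,j,k}$ to a given target $q$ must occupy one of the $N$ slots reserved for $q$; let me index these slots $1, \ldots, N$ in chronological order. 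The deadline imposed by the corresponding target node then restricts the send of $u_{i,j,k}$ to the first $i$ slots of $p_X$, the first $j$ slots of $p_Y$, the first $k$ slots of $p_Z$, and symmetrically to the first $N-i+1$, $N-j+1$, $N-k+1$ slots of $p_{\widetilde{X}}$, $p_{\widetilde{Y}}$, $p_{\widetilde{Z}}$.

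For the ``if'' direction, if $R$ is an exact cover, then the $X$-indices of its triples form a permutation of $[N]$, and similarly for $Y$ and $Z$. I would then define the schedule explicitly: for each $u_{i,j,k} \in R$, assign its send to $p_X$ to the $i$-th $p_X$-slot, its send to $p_Y$ to the $j$-th $p_Y$-slot, and likewise for $p_Z$; assign the send to $p_{\widetilde{X}}$ to the $(N-i+1)$-th $p_{\widetilde{X}}$-slot, and similarly for $p_{\widetilde{Y}}$, $p_{\widetilde{Z}}$. Uniqueness of the indices ensures no two sends collide on the same slot, and all deadlines hold by construction.

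For the ``only if'' direction, suppose a valid schedule exists, and write the $X$-indices occurring in $R$ as a multiset $\{i_1, \ldots, i_N\}$, sorted ascending as $i_{(1)} \leq \ldots \leq i_{(N)}$. The sends to $p_X$ assign all $N$ triples injectively into the $N$ $p_X$-slots, and the $\ell$-th triple in this order can only use a slot in $\{1, \ldots, i_{(\ell)}\}$. Hall's condition (or an elementary greedy argument on the sorted list) forces $i_{(\ell)} \geq \ell$ for all $\ell$. Running the same argument on the $p_{\widetilde{X}}$ sends, where triple $\ell$ in descending order of $i$ is restricted to the first $N - i_{[\ell]} + 1$ slots, yields $i_{[\ell]} \leq N - \ell + 1$, which after reindexing translates to $i_{(\ell)} \leq \ell$. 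Combining the two inequalities gives $i_{(\ell)} = \ell$, so the $X$-indices are a permutation of $[N]$ and every $x$-element is covered exactly once. The identical argument for $Y$ and $Z$ then shows that $R$ is an exact cover.

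I expect the main technical point to be precisely formulating the Hall-type argument and making visible the role of the reverse processors $p_{\widetilde{X}}, p_{\widetilde{Y}}, p_{\widetilde{Z}}$: the forward constraints alone only guarantee $i_{(\ell)} \geq \ell$, which admits many multisets other than permutations (e.g.\ any multiset dominated by $(1, 2, \ldots, N)$ from below). It is the matching upper bound coming from the reverse direction that pins the multiset down to exactly $[N]$, and this is the step that makes the whole reduction work.
\end{proof}
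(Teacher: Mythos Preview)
Your proposal is correct and follows essentially the same approach as the paper: both directions hinge on the tight slot-counting (exactly $6N$ free sends for $6N$ required transmissions) and on combining the forward-processor deadline constraints with the reverse-processor ones to pin down each coordinate multiset to a permutation of $[N]$. Your sorted-multiset formulation $i_{(\ell)} \geq \ell$ from $p_X$ together with $i_{(\ell)} \leq \ell$ from $p_{\widetilde{X}}$ is exactly the paper's prefix-counting argument (``$\{x_1,\dots,x_i\}$ appear at most $i$ times'' and ``$\{x_{i+1},\dots,x_N\}$ appear at most $N-i$ times'') rewritten in Hall-condition language.
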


\renewcommand*{\proofname}{Proof}

\begin{proof}
This is not hard to see: our construction allows us to send exactly one value from $p_0$ to all other processors in the first $6$ steps, one in the second $6$ steps, and so on. As such, if we have an exact cover, then this contains all of $x_1$, $y_1$ and $z_1$ exactly once, and hence there is a single value among the $u_{i,j,k}$ that is needed on $p_X$ in superstep $S_1+7$, a single value that is needed on $p_Y$ in superstep $S_1+7$, and a single value that is needed on $p_Z$ in superstep $S_1+7$; we can transfer these from $p_0$ in supersteps $S_1+1$, $S_1+2$ and $S_1+3$, respectively. Similarly, the exact cover contains all of $x_N$, $y_N$ and $z_N$ exactly once, so for $p_{\widetilde{X}}$, $p_{\widetilde{Y}}$ and $p_{\widetilde{Z}}$, there is a single value that is needed on these processors in superstep $S_1+7$; we can transfer these from $p_0$ in supersteps $S_1+4$, $S_1+5$ and $S_1+6$.

The same technique can be continued for the rest of the values: all of $x_2$, $y_2$, $z_2$ and $x_{(N-1)}$, $y_{(N-1)}$, $z_{(N-1)}$ are contained in the cover exactly once, so apart from the values already sent, there is exactly one more value that is required on all six receiving processors by superstep $S_1+13$. These values can be transferred in steps $S_1+7, ..., S_1 + 12$. Following this pattern, we can communicate all the values by the time they are required, without increasing the communication cost in any of the supersteps.

For the reverse direction, assume that the values can be communicated without increasing the cost; this implies two things about the original triplets. Firstly, due to processor $p_X$, we know that our triplets contain $x_1$ at most once, the nodes in $\{ x_1, x_2\}$ at most twice, the nodes in $\{ x_1, x_2, x_3\}$ at most three times, and so on. The same holds for the nodes $y_i$ and $z_i$. Furthermore, due to processor $p_{\widetilde{X}}$, our triplets contain $x_N$ at most once, the nodes $\{ x_N, x_{(N-1)}\}$ at most twice, the nodes $\{ x_N, x_{(N-1)}, x_{(N-2)}\}$ at most three times, and so on (similarly for $y_i$ and $z_i$).

One can observe that in order for $N$ triplets to satisfy these conditions, they indeed need to form an exact cover. In particular, if the triplets are not an exact cover, then there must a lowest-indexed node $x_i$ (w.l.o.g. we assumed it is in class $X$) which is not contained in the triplets exactly once (i.e.\ it either does not appear at all, or it appears more than once). If $x_i$ appears more than once, then the nodes $\{ x_1, ..., x_i \}$ appear altogether more than $i$ times, which is a contradiction. On the other hand, if $x_i$ does not appear at all, then the nodes $\{ x_{(i+1)}, ..., x_N \}$ appear altogether more than $(N-i)$ times, which is again a contradiction.
\end{proof}

It remains to discuss the initialization part; we first discuss it for the broadcast models $DB$ and $FB$. In these models, the initialization will consist of $S_1=M-N$ supersteps, and each will allow us to transfer a single value from $p_0$ to all the remaining $6$ processors. For this, we simply need to add a single fixed value that has to be sent from e.g.\ $p_X$ to $p_0$ in each superstep $s \in [S_1]$; this sets the communication cost to $C_{comm}\,\!^{(s)}=1$, and ensures that $p_0$ can only send exactly one value in each of these supersteps without increasing the cost. Since the remaining six processors all have $C_{rec}\,\!^{(s,p)}=0$ so far, they can all receive this broadcasted value without increasing the cost. As such, by the end of the initialization, $p_0$ can only send $(M-N)$ of the values, and these can be received by all other processors. Note that while it would also be possible to communicate between the remaining six processors without a cost increase, this does not offer any advantage, since the only values that have left processor $p_0$ up to a given superstep $s$ are already known by all the six processors by superstep $s$.

In the $FS$ model, we ensure the same behavior using an auxiliary processor $p_a$ and altogether $S_1=7 \cdot (M-N)$ supersteps. Similarly to the matching part, we ensure in each of these $S_1$ supersteps that $C_{comm}\,\!^{(s)}=1$, and there is only a single processor that can send a value and a single processor that can receive a value without increasing the cost. In particular, in the first $(M-N)$ supersteps, we ensure that $C_{sent}\,\!^{(s,p)}=1$ and $C_{rec}\,\!^{(s,p)}=1$ for all $p \in [P]$, with two exceptions: we have $C_{sent}\,\!^{(s,p_0)}=0$ and $C_{rec}\,\!^{(s,p_a)}=0$. This ensures that the only allowed communication step in the first $(M-N)$ supersteps is to send a single value from $p_0$ to $p_a$. Then for the next $6 \cdot (M-N)$ supersteps, we again set $C_{sent}\,\!^{(s,p)}=1$ and $C_{rec}\,\!^{(s,p)}=1$ for all $p \in [P]$, with two exceptions: we set $C_{sent}\,\!^{(s,p_a)}=0$ for $p_a$, and we set $C_{rec}\,\!^{(s,p')}=0$ for another processor $p'$ in a periodic fashion again: 
\begin{itemize}[leftmargin=17pt, itemsep=3pt]
 \item $C_{rec}\,\!^{(s,p_X)}=0$ if $s-(M-N)$ gives $1$ modulo $6$,
 \item $C_{rec}\,\!^{(s,p_Y)}=0$ if $s-(M-N)$ gives $2$ modulo $6$,
 \item $C_{rec}\,\!^{(s,p_Z)}=0$ if $s-(M-N)$ gives $3$ modulo $6$,
 \item $C_{rec}\,\!^{(s,p_{\widetilde{X}})}=0$ if $s-(M-N)$ gives $4$ modulo $6$,
 \item $C_{rec}\,\!^{(s,p_{\widetilde{Y}})}=0$ if $s-(M-N)$ gives $5$ modulo $6$,
 \item $C_{rec}\,\!^{(s,p_{\widetilde{Z}})}=0$ if $s-(M-N)$ gives $0$ modulo $6$.
\end{itemize}
This ensures that for all other six processors $p'$, there are exactly $(M-N)$ rounds where the only allowed operation is to send a single value from $p_a$ to $p'$.

This implies that in the initialization part, $p_0$ will never be able to send a value to any of the six processors without increasing cost; instead, it has to send the values through $p_a$. In the first $(M-N)$ rounds, $p_0$ can communicate any chosen $(M-N)$ values to $p_a$, but not more; then $p_a$ can relay these to the remaining six processors in the next $6 \cdot (M-N)$ rounds without increasing cost.

We note that the reason why the same proof strategy does not work for the DS model is in this initialization part: the step of $p_0$ sending a value to e.g.\ $p_X$ and $p_Y$ are independent in the DS model, and hence there is no straightforward way to ensure that $p_0$ indeed has to send $(M-N)$ complete triplets to the remaining processors, and has no option to ``break them up'' and only send the desired parts of each triplet.

Altogether, this completes the reduction: our construction shows that in the resulting CS problem, there only exists a communication schedule with the predefined cost ($M+5_{\!}\cdot_{\!}N$ in the DB and FB models, or $7_{\!}\cdot_{\!}M - N$ in the FS model) if the original $3$D-matching problem is solvable, i.e.\ if there are $N$ triplets that form an exact cover.
\end{proof}

\subsection{With communication weights} \label{app:subs:commw}

On the other hand, with communication weights, it is more straightforward to show that the problem is already NP-hard for $P\!=\!2$.

\renewcommand*{\proofname}{Proof of Lemma \ref{lem:CSweights}}

\begin{proof}
Consider the $3$-partition problem with numbers $a_1$, ..., $a_{3 \cdot m'}$. We build a DAG that is split into $(m'+1)$ supersteps. Let us place a node into all $(m'+1)$ supersteps on both $p_1$ and $p_2$ so that they are non-empty; we will refer to this node on processor $p$ in superstep $s$ as $v_{p,s}$. To make the choice of $\pi$ and $\tau$ more reasonable, we can also draw an edge form $v_{p_i,s}$ to $v_{p_i,(s+1)}$ for all $i \in \{ 1, 2 \}$, $s \in [m']$.

Then for all $s \in [m']$, we draw an edge from $v_{p_1,s}$ to $v_{p_2,(s+1)}$, and choose a communication weight of $w_{comm}(v_{p_1,s}) = T$. Since the value of $v_{p_1,s}$ can only be communicated from $p_1$ to $p_2$ in superstep $s$, this will ensure that $C_{sent}\,\!^{(s,p_1)}=T$ for all $s \in [m']$.

On the other hand, we add $m$ distinct nodes $u_1$, ..., $u_{m}$ to processor $p_2$ and superstep $1$, we draw an edge from each of them to $v_{p_1,(m'+1)}$, and we set the communication weights of these nodes to $w_{comm}(u_i)=a_i$. This implies that these nodes can be freely transferred in any of the supersteps $1$, ..., $m'$ from $p_2$ to $p_1$. However, in order to not increase the cost $C_{comm}\,\!^{(s)}$ above $T$ for any $s \in [m']$, we must ensure that we communicate a set of values in each superstep that have total weight at most $T$. Since the total weight of the nodes is $m' \cdot T$, this means that we have to transfer a total weight of exactly $T$ from $p_2$ to $p_1$ in each superstep.

As such, a schedule with communication cost $m' \cdot T$ exist if and only if the original $3$-partition problem was solvable, which completes the reduction.
\end{proof}

\subsection{Comments on latency}

So far, we have considered the CS problem without latency; indeed since the number of supersteps is already decided in this setting, any communication schedule induces the same latency cost in total, so this does not influence the relative quality of the different solutions.

Alternatively, we might be interested in a variant of the CS problem where if there are no communications happening in supersteps $s$, then this latency cost is not added to the total (and hence we essentially have the option to merge supersteps $s$ and $(s+1)$ by an appropriate communication schedule, and save this latency cost). We note that the hardness results above also carry over to this setting with latency. In particular, the constructions in Sections~\ref{app:subs:commw} and \ref{app:subs:3Dm} ensure that there is a fixed communication happening in each superstep, so the total latency is equal in all solutions anyway. On the other hand, the greedy approach in Section \ref{app:subs:greedy} can easily become suboptimal if $L>0$, so Lemma~\ref{th:CSDP} is more challenging to extend to this case.

\section{ILP representation} \label{app:ILP}

%This section discusses the ILP representation in Proposition~\ref{prop:ILP}.

\subsection{ILP formulation details}

The main idea of the ILP representation has already been outlined in Section~\ref{sec:ILP}. The binary variables $\textsc{comp}_{v, p, s}$ allow us to indicate whether $v$ was computed on processor $p$ in the computation phase of superstep $s$. For each $v \in V$, we add a linear constraint
\[  \sum_{\substack{p \in [P] \\ \, s \in [S]}} \, \textsc{comp}_{v, p, s} \: = \: 1  \]
to ensure that each node is indeed assigned to exactly one processor and superstep. This constraint can be relaxed to an inequality (the sum needs to be $\geq 1$) to address the case when duplication is also allowed. Note that this also allows a node to be computed multiple times on the same processor, but this will never happen in an optimal solution. Alternatively, if we wish to specifically exclude this case, we can add further constraints for each fixed $p$ to ensure that the sum over all $s \in [S]$ is at most $1$.

On the other hand, the binary variables $\textsc{pres}_{v, p, s}$ indicate whether processor $p$ is already aware of the value of $v$ at the end of the computation phase of superstep $s$. As mentioned before, in case of e.g. the broadcast models, the correctness of these variables can be ensured with the condition
\[ \textsc{pres}_{v, p, s} \, \leq \, \textsc{pres}_{v, p, (s-1)} + \textsc{comp}_{v, p, s} + \textsc{rec}_{v, p, (s-1)} \]
for each $(v,p,s) \in V \times [P] \times [S]$. These constraints ensure that if $\textsc{pres}_{v, p, s}=1$, then at least one of $\textsc{pres}_{v, p, (s-1)}$, $\textsc{comp}_{v, p, s}$ and $\textsc{rec}_{v, p, (s-1)}$ must also be $1$, i.e.\ the variable was already present on $p$ before, or it was computed in the computation phase of superstep $s$, or received in the communication phase of superstep $(s-1)$. For the special case of $s=1$, we need to simplify the condition to $\textsc{pres}_{v, p, s} \, \leq \, \textsc{comp}_{v, p, s}$.

A similar constraint on the presence variables can also be applied in the DS model with the variables $\textsc{comm}_{v, p_1, p_2, s}$; in particular, we need to change the constraint to
\[ \textsc{pres}_{v, p, s} \, \leq \, \textsc{pres}_{v, p, (s-1)} + \textsc{comp}_{v, p, s} + \sum_{\substack{p' \in [P] \\ p' \neq p}} \, \textsc{comm}_{v, p', p, (s-1)} \, .\]

These variables already provide a straightforward way to encode the precedence constraints: a node $v$ can only be computed on $p$ in superstep $s$ if all of its predecessors are also available on $p$ by the end of the computation phase of superstep $s$. To express this, we can add a constraint $\textsc{comp}_{v, p, s} \leq \textsc{pres}_{u, p, s}$ for all $(u,v) \in E$ and all $p \in [P]$, $s \in [S]$; with our binary variables, this ensures that if we have $\textsc{comp}_{v, p, s}=1$, then we must also have $\textsc{pres}_{u, p, s}=1$.

If we introduce separate integer variables for all $C_{work}\,\!^{(s,p)}$ and $C_{work}\,\!^{(s)}$, then the work costs are also easy to compute from the variables $\textsc{comp}_{v, p, s}$. We can simply write
\[ C_{work}\,\!^{(s,p)} = \sum_{v \in V} \, \textsc{comp}_{v, p, s} \]
to obtain the work cost on a specific superstep, and then add a constraint $C_{work}\,\!^{(s)} \geq C_{work}\,\!^{(s,p)}$ for all $p \in [P]$, $s \in S$. This ensures that $C_{work}\,\!^{(s)}$ will indeed be larger than the work cost on each specific processor in $s$, and in the optimal solution, it will take the value of exactly the maximum. Note that this expression of work costs is straightforward to extend with work weights as coefficients in these linear constraints:
\[ C_{work}\,\!^{(s,p)} = \sum_{v \in V} \, w_{work}(v) \cdot \textsc{comp}_{v, p, s} \, . \]

Finally, the analysis of communication costs and constraints depends on the concrete communication model. As discussed, we use the binary variables $\textsc{sent}_{v, p, s}$ and $\textsc{rec}_{v, p, s}$ to capture communication in the DB and FB models, and the binary variables $\textsc{comm}_{v, p_1, p_2, s}$ to capture communication in the FS model (the DS model will be discussed separately later). The correctness of the communication steps (i.e.\ that processors only send values that they already possess) can be ensured by the constraints $\textsc{sent}_{v, p, s} \leq \textsc{pres}_{v, p, s}$ for all $v \in V$, $p \in [P]$, $s \in [S]$ in the broadcast models. The same thing can be ensured by $\textsc{comm}_{v, p, p', s} \leq \textsc{pres}_{v, p, s}$ in FS, for all $v \in V$, $p,p' \in [P]$, $s \in [S]$. Furthermore, in the broadcast models, we must ensure that if $\textsc{rec}_{v, p, s}=1$, then we have $\textsc{send}_{v, p', s}=1$ for some other $p'$, i.e.\ some processor is actually broadcasting the value $v$; this can be done by adding the constraint
\[ \textsc{rec}_{v, p, s} \leq \sum_{\substack{p' \in [P] \\ p' \neq p}} \, \textsc{sent}_{v, p', s} \]
for each $(v,p,s) \in V \times [P] \times [S]$.

Another thing to ensure in the direct transfer models (DS and DB) is that a value is indeed sent by the processor that computed it. One simple way to express this is to introduce an auxiliary binary variable $\textsc{home}_{v, p}$ for all $v \in V$, $p \in [P]$ to indicate whether $v$ was computed on $p$; for this, we can simply add the constraint
\[  \sum_{s \in [S]} \, \textsc{comp}_{v, p, s} \: = \: \textsc{home}_{v, p}  \]
for all $v \in V$, $p \in [P]$. Then besides requiring $\textsc{sent}_{v, p, s} \leq \textsc{pres}_{v, p, s}$ in the DB model, we can also add the constraint $\textsc{sent}_{v, p, s} \leq \textsc{home}_{v, p}$ (for all $v \in V$, $p \in [P]$, $s \in [S]$).

The communication costs $C_{sent}\,\!^{(s,p)}$, $C_{rec}\,\!^{(s,p)}$, $C_{comm}\,\!^{(s,p)}$ and $C_{comm}\,\!^{(s)}$ can again be expressed with separate variables and some simple constraints. The relations between these variables are naturally ensured with $C_{comm}\,\!^{(s,p)} \geq C_{sent}\,\!^{(s,p)}$, $ \; C_{comm}\,\!^{(s,p)} \geq C_{rec}\,\!^{(s,p)}$ and $C_{comm}\,\!^{(s)} \geq C_{comm}\,\!^{(s,p)}$ for all $p \! \in \! [P]$, $s \! \in \! [S]$. In fact, the $C_{comm}\,\!^{(s,p)}$ variables are not even needed as an intermediate step, we can simply write $C_{comm}\,\!^{(s)} \geq C_{sent}\,\!^{(s,p)}$ and $C_{comm}\,\!^{(s)} \geq C_{rec}\,\!^{(s,p)}$. Similarly to the case of work costs, the constraints formally only ensure that $C_{comm}\,\!^{(s)}$ is higher than these values, but in the optimal solution it is always set exactly to their maximum.

In the broadcast models, we further need to add
\[ C_{sent}\,\!^{(s,p)} = \sum_{v \in V} \textsc{sent}_{v, p, s} \: \quad \text{and} \: \quad C_{rec}\,\!^{(s,p)} = \sum_{v \in V} \textsc{rec}_{v, p, s} \]
for all $p \in [P]$, $s \in [S]$. In the FS model, we need to add
\[ C_{sent}\,\!^{(s,p)} = \sum_{\substack{v \in V \\ p' \in [P] \\ p' \neq p}} \textsc{comm}_{v, p, p', s} \]
and 
\[ C_{rec}\,\!^{(s,p)} = \sum_{\substack{v \in V \\ p' \in [P] \\ p' \neq p}} \textsc{comm}_{v, p', p, s} \]
for all $p \in [P]$, $s \in [S]$. Once again, these constraints can be easily extended with node-based communication weights (or even edge-based ones in FS) by adding them as simple coefficients: for instance, the constraint on $C_{sent}\,\!^{(s,p)}$ in the broadcast models becomes
\[ C_{sent}\,\!^{(s,p)} = \sum_{v \in V} \, w_{comm}(v) \cdot \textsc{sent}_{v, p, s} \, .\].

With the cost variables implemented, the objective function of the linear program can be naturally expressed as the actual BSP objective function. In the simpler case of no latency ($L=0$), the objective is simply to minimize
\[ \sum_{s \in [S]} \, C_{work}\,\!^{(s)} \, + \, g \cdot C_{comm}\,\!^{(s)} \, . \]
The optimal ILP solution to this problem naturally provides the optimal BSP schedule.

Extending this formulation with latency requires an extra technical step: we introduce a binary variable $\textsc{used}_s$ for each superstep $s \in [S]$ to indicate if there is indeed some communication happening in the communication phase of superstep $s$. We can set this variable to the correct value by adding the constraints
\[ \textsc{used}_s \geq \textsc{sent}_{v, p, s} \]
in the broadcast models for all $(v,p,s) \in V \times [P] \times [S]$, or by adding
\[ \textsc{used}_s \geq \textsc{comm}_{v, p, p', s} \]
for all $v \in V$, $p, p' \in [P]$, $p \neq p'$, $s \in [S]$ in the FS model. This then allows us to modify the ILP objective to minimizing
\[ \sum_{s \in [S]} \, C_{work}\,\!^{(s)} \, + \, g \cdot C_{comm}\,\!^{(s)} \, + \, L \cdot \textsc{used}_s \, . \]

Note that this extra technical step in case of latency is essentially required because the optimal number of supersteps is not known in advance; as such, the straightforward way to ensure that we are not excluding the optimal solution is to select $S=n$, and run the ILP solver on the formulation obtained from this choice. If a solution in fact uses $S<n$ supersteps, then the natural way to represent this is to number these supersteps consecutively from $1$ to $S$. However, there is no straightforward way to encode this in the ILP representation, and hence solutions with $S<n$ supersteps can be represented in multiple different ways in our ILP, possibly leaving an arbitrary subset of the supersteps $\{1, ..., n \}$ empty, or having consecutive supersteps where no communication happens at all (and hence they could be merged into a single superstep in the natural solution). The $\textsc{used}_s$ variables ensure that in case of latency, we charge no extra cost for these unnatural representations. We point out that in practice, an ILP-solver based approach might benefit from more sophisticated search strategies to find the optimal $S$ value.

Finally, let us discuss communication in the DS model separately. As mentioned before, a naive implementation could again use the variables $\textsc{comm}_{v, p_1, p_2, s}$ as in FS, and add extra constraints $\textsc{comm}_{v, p, p', s} \leq \textsc{home}_{v, p}$ similarly to DB; however, this increases the number of variables to $O(n \cdot P^2 \cdot S)$ in this model, too. Instead, consider the following more sophisticated approach. Since the DS model only allows direct transfer, the variables $\textsc{rec}_{v, p, s}$ in fact already determine the communications happening in superstep $s$, since each variable can only be sent from a single processor. As such, we can again apply the binary variables $\textsc{rec}_{v, p, s}$, and besides these, we also introduce a positive integer variable $\textsc{senttimes}_{v, p, s}$, which counts the number of times $v$ is sent by processor $p$ in the communication phase of superstep $s$. Firstly, we can add the constraints
\[ \textsc{senttimes}_{v, p, s} \, \leq \, p \cdot \textsc{home}_{v, p} \]
for all $(v,p,s) \in V \times [P] \times [S]$ to ensure that $\textsc{senttimes}_{v, p, s}=0$ if $v$ is not computed on $p$, and at most $p$ otherwise. More importantly, we add the constraints
\[ \textsc{senttimes}_{v, p, s} + p \cdot (1-\textsc{home}_{v, p}) \, \geq \, \sum_{\substack{p' \in [P] \\ p' \neq p}} \textsc{rec}_{v, p', s} \]
for all $(v,p,s) \in V \times [P] \times [S]$. If $\textsc{home}_{v, p}=0$, then this constraint has no effect: since $p \cdot (1-\textsc{home}_{v, p}) = p$, it is satisfied for any choice of $\textsc{senttimes}_{v, p, s}$ and $\textsc{rec}_{v, p', s}$. However, if $\textsc{home}_{v, p}=1$, then the constraint ensures that $\textsc{senttimes}_{v, p, s}$ must be at least as large as the number of processors receiving $v$ in superstep $s$ (and there is no motivation to increase it any larger). Given these variables, we can use the alternative definition
\[ C_{sent}\,\!^{(s,p)} = \sum_{v \in V} \textsc{senttimes}_{v, p, s}  \]
for all $p \in [P]$, $s \in [S]$ to obtain the correct communication cost in the DS model. Finally, note that these variables also result in minor changes for the remaining communication constraints in the DS model: to ensure that we only communicate values that are already available, we now need to use $\textsc{senttimes}_{v, p, s} \leq p \cdot \textsc{pres}_{v, p, s}$.

Altogether, the ILP formulations above use $O(n \cdot P \cdot S)$ variables in the DS, DB and FB models, and $O(n \cdot P^2 \cdot S)$ variables in the FS model. Note, however, that reducing the number of variables in the DS model also has a slight drawback. In particular, the formulations in the DB, FS and FB models ensure that the vast majority of variables are binary, and there are only $O(P \cdot S)$ variables (for cost measurement) that can take arbitrary integer values. In contrast to this, when introducing the $\textsc{senttimes}_{v, p, s}$ variables into the DS model, the number of non-binary variables increases to $O(n \cdot P \cdot S)$. However, recall that $\textsc{senttimes}_{v, p, s}$ is in fact also restricted to the integer interval $\{0, ..., P \}$, so its domain is also not significantly larger when $P$ is a small constant.

The number of linear constraints, on the other hand, is dominated by the precedence constraints in the DAG in most cases, which result in $O(|E| \cdot P \cdot S)$ constraints. In particularly sparse (non-connected) DAGs, we might have $|E|<n$, when the remaining constraints become dominant; hence more formally, the number of constraints in the broadcast models is
\[ O((n+|E|) \cdot P \cdot S) \, . \]
The only exception to this is FS, where we also use $O(n \cdot P^2 \cdot S)$ constraints to ensure the correctness of the communication steps; hence in this model, the number of constraints is
\[ O((n \cdot P +|E|) \cdot P \cdot S) \, . \]

\subsection{Brief overview of ILPs for other models}

We note that multiple methods have been studied before to model scheduling problems as an ILP; some overviews are available in e.g.\ \cite{ILPscheduling1, ILPscheduling2}. These works often consider other variants of the DAG scheduling problem; however, the same general techniques can sometimes be adapted to different model variants.

In particular, the straightforward ILP representation for classical models is a time-indexed formulation where similarly to our ILP above, there is a variable $\textsc{comp}_{v,p,t}$ to indicate if $p$ was computed on $p$ in time step $t$. This allows for a simple expression of the constraints, but the number of variables scales with the maximal possible makespan $n$, or even the sum of work weights in the weighted case. This can be significantly larger than the factor $S$ in our representation.

For some scheduling problems, e.g.\ so-called resource-constrained scheduling, this method also allows a formulation that does not scale with $P$. However, in case we have work weights, even such a formulation can require $O(n \cdot \sum_{v \in v} w_{work}(v))$ variables, whereas in our ILP, the number of variables does not scale with the weights at all. There are also more sophisticated representations, e.g.\ the event-based representation of~\cite{ILPscheduling2}, which only requires $O(n^2)$ variables even in case of work weights; it might be possible to also adapt these techniques to classical scheduling models. However, even $O(n^2)$ can be a higher than the number of variables we require in our ILP, e.g.\ if the number of supersteps is relatively small (i.e.\ $P \cdot S \leq n$), for example, due to a relatively high latency $L$. 

\end{document}